
\documentclass[final,3p,times,onecolumn,authoryear]{elsarticle}    

\usepackage{graphicx}          
\usepackage[dvips]{epsfig}    
 \usepackage{multicol} 
\usepackage{ulem}
\usepackage{color}
\usepackage{braket}
\usepackage{amsmath,mathrsfs,amssymb}
\allowdisplaybreaks[4]
\usepackage{subeqnarray}
\usepackage{cases}
\usepackage{ntheorem}
\usepackage{amssymb}

\usepackage{wrapfig}
\usepackage{flushend}

\linespread{1.5}

\newtheorem{Theorem}{Theorem}
\newtheorem{mypro}{Proposition}
\newtheorem*{proof}{Proof}

\newtheorem{assumption}{Assumption}

\newtheorem{remark}{Remark}

\begin{document}
\begin{frontmatter}

\title{On the non-Markovian quantum stochastic network dynamics}

\author[Haijin Ding]{Haijin Ding\corref{cor1}}\ead{dhj17@tsinghua.org.cn}

\author[Guofeng1,Guofeng Zhang]{Guofeng Zhang}\ead{guofeng.zhang@polyu.edu.hk}

\address[Haijin Ding]{Department of Mechanical and Automation Engineering, The Chinese University of Hong Kong, {Hong Kong}}

\address[Guofeng1]{Department of Applied Mathematics, The Hong Kong Polytechnic University, {Hung Hom}, {Kowloon}, {Hong Kong}}

\address[Guofeng Zhang]{The Hong Kong Polytechnic University Shenzhen Research Institute, {Shenzhen}, {518057}, {China}}

\cortext[cor1]{Corresponding author.}

\begin{abstract}
\linespread{1.5}
In this paper, we investigate non-Markovian quantum dynamics from the perspective of quantum noises in a network of atoms mediated by a waveguide. In such networks, quantum coherent feedback control becomes achievable when coherent fields (or quantum noises) in the format of photons with continuous modes propagate through the waveguide. Different from traditional Markovian quantum systems, the non-Markovian quantum network can be regarded as a quantum system interacting with multiple input quantum noise channels with different time delays. Then the \rm{It\={o}} relationships among different quantum noise channels are determined by the quantum noise commutators and rely on the distances among atoms as well as their coupling strengths to the waveguide.
The non-Markovian dynamics of such quantum networks can be modeled with the quantum stochastic differential equation (QSDE) containing integral kernels determined by the commutators among quantum noise operators. Utilizing this stochastic approach related to quantum noises, the filtering of quantum states can be modulated by parameters such as atom-waveguide coupling strengths and quantum control amplitudes.
\end{abstract}

\begin{keyword}    \linespread{1.5}                   
Non-Markovian quantum systems; QSDE; quantum noises; non-Markovian \rm{It\={o}} rule; delayed quantum network.   
\end{keyword}

\end{frontmatter}

\tableofcontents

\section{Introduction}
Non-Markovian dynamical processes can be distinguished from Markovian processes due to historical memory effects in both classical~\citep{mariton1989systems,strickland1989sensitivity} and quantum systems~\citep{diosi2012non,zhang2013non,xue2016feedback,ding2024non,saldi2024quantum}. For instance,  in quantum networks interconnected by waveguides with photons propagating among different nodes, the evolution of quantum systems can be represented with differential equations with time delays, illustrating that the dynamics of quantum states is influenced not only by its current state, but also by historical quantum states~\citep{nemet2019comparison,arranz2021modeling,ding2023quantum,DingNlevel2024}. Then the control properties, i.e., the steady states or the convergence rate, can be influenced by parameters such as the length of time delays~\citep{DingNlevel2024}. Besides, when a quantum system interacts with an environment made up of a group of oscillators, the interaction between the quantum system and environment can be modeled by stochastic integral processes with non-Markovian integral kernels~\citep{diosi2012non,ding2024non}. After averaging over the environmental noises, the decay rate of quantum systems to the environment can be time-varying rather than a constant in the simplified Markovian circumstnace~\citep{ding2024non}. Thus, both the above non-Markovian quantum dynamics with time delays and time-varying decay rates indicate that non-Markovian quantum control dynamics can be more complex than the Markovian scenario~\citep{In2017Dynamics,breuer2016colloquium}.

In both of the two examples above, the non-Markovian dynamics of the quantum system we are interested in arises from its interactions with the environment, which can be oscillators with finite modes~\citep{diosi2012non,diosi1998non,ding2024non}, a waveguide with infinite continuous modes~\citep{nemet2019comparison,arranz2021modeling,ding2023quantum,DingNlevel2024} or other more general circumstance with continuous spectrums~\citep{zhang2013non}.  By transforming the modeling of the environment to the time domain, the influence upon quantum states by the environment can be equivalently elucidated by input time-varying quantum noises~\citep{gardiner2004quantum}. Then the non-Markovian dynamics of quantum systems driven by input quantum noises can be evaluated by quantum stochastic differential equations (QSDE) containing an integral process with memory effects~\citep{zhang2013non}, which are generalized from the widely studied Markovian circumstances~\citep{james2008hTAC,dong2022dynamicsTAC,vuglar2016quantum,qi2012further,wang2016fault,van2005feedback,fischer2018particle}.

Apart from the modeling of the interactions between a quantum system and environment, the estimation and filtering of quantum states are widely used in quantum information processing and quantum engineering~\citep{SIAMFiltering,gao2016fault}. Similar to the control and filtering of Markovian quantum systems, the measurement information of quantum states in a non-Markovian system can be used to estimate the quantum state evolutions and realize measurement feedback controls~\citep{zhang2017quantum}.
However, both the final quantum states and the measurement output of a non-Markovian network are influenced by the integral process~\citep{ding2023quantumSIAM}, which is different from the traditional Markovian quantum systems. 
For example, when designing quantum measurement feedback control for a non-Markovian system, the measurement result can be influenced by the delayed phase due to a non-Markovian integral process~\citep{ding2023quantumSIAM}. For another example, the spectral property such as the wide linewidth of the input quantum noise in the QSDE can influence the integral kernel for the non-Markovian process and further the stochastic quantum dynamics~\citep{xue2016feedback}. The above quantum measurement and filtering approaches can be further generalized to other varied non-Markovian quantum networks~\citep{gough2012quantumPRAFiltering,jack1999non,gough2012single}.  

Among the physically realizable non-Markovian quantum filtering and feedback control approaches, a broad range of non-Markovian interactions manifest among different components in a quantum network with coherent feedback, which can be realized by the transmission and reflection of coherent fields, such as photons, in a waveguide with continuous modes or a cavity with infinite discrete modes~\citep{nemet2019comparison,ding2023quantum}. In the quantum coherent feedback realization based on waveguide quantum electrodynamics (waveguide-QED), the coherent fields emitted by one component in the quantum network can re-interact with the former emitter after being transmitted in the waveguide, resulting in linear quantum dynamical process with time delays~\citep{nemet2019comparison,ding2023quantum,DingNlevel2024,ding2023Automatica,ding2025transport}. 
Alternatively, considering that the continuous waveguide modes can be regarded as quantum noises with continuous frequency modes, the non-Markovian dynamics based on waveguide-QED can be interpreted from the perspective of quantum noises~\citep{gardiner2004quantum}. However, in such non-Markovian waveguide-QED networks with multiple time delays, the quantum noises cannot be simply regarded as \rm{It\={o}} noises as in the traditional Markovian circumstance~\citep{gardiner1985input,li2022control,fischer2018particle,yamamoto2012pure}, and this has not been systematically investigated.

In this paper, we study the non-Markovian quantum dynamics based on waveguide-QED from the perspective of quantum noises. In the multi-atom network coupled via a waveguide, the interactions between atoms and the waveguide can be equivalently modeled as the interactions between the atom network and multiple input quantum noise channels. The relationship among different quantum noise channels is affected by the non-Markovian properties of the quantum network such as spatial distributions and induced time delays, resulting in the non-Markovian commutative relationships different from traditional Markovian quantum systems. Then the control dynamics and filtering in the non-Markovian quantum network can be characterized by the non-Markovian properties of quantum noises. The rest of the paper is organized as follows. In 
Sec.~\ref{Sec:inputoutput}, we study the stochastic dynamics of the non-Markovian quantum networks with input quantum noises, based on Fig.~\ref{fig:NatomWaveguide}(a) where atoms are coupled to a semi-infinite waveguide at one point and Fig.~\ref{fig:NatomWaveguide}(b) that atoms are coupled to an infinite waveguide. The dynamics of these two proposals can be represented with non-Markovian integral processes with different integral kernels containing different time delays. Then we generalize to a more complex case where one atom can be coupled to a waveguide at several different points, as shown in Fig.~\ref{fig:NatomWaveguide}(c). 
In Sec.~\ref{Sec:NonMarknoise}, we study the \rm{It\={o}} properties of quantum noises in the above two waveguide-QED networks. In Sec.~\ref{Sec:QSDE}, we clarify how the quantum stochastic differential equations for the non-Markovian quantum network with time delays can be influenced by the integral kernels related to quantum noises, and their simplified formats within the Markovian approximation. Then in Sec.~\ref{Sec:filtering}, we further clarify the applications of the above results in the filtering of non-Markovian quantum networks. Sec.~\ref{Sec:conclusion} concludes this paper.

\section{Non-Markovian  dynamics realized by atoms coupled to a waveguide} \label{Sec:inputoutput}
\begin{figure}[htbp]
  \centering
  \centerline{\includegraphics[width=0.6\columnwidth]{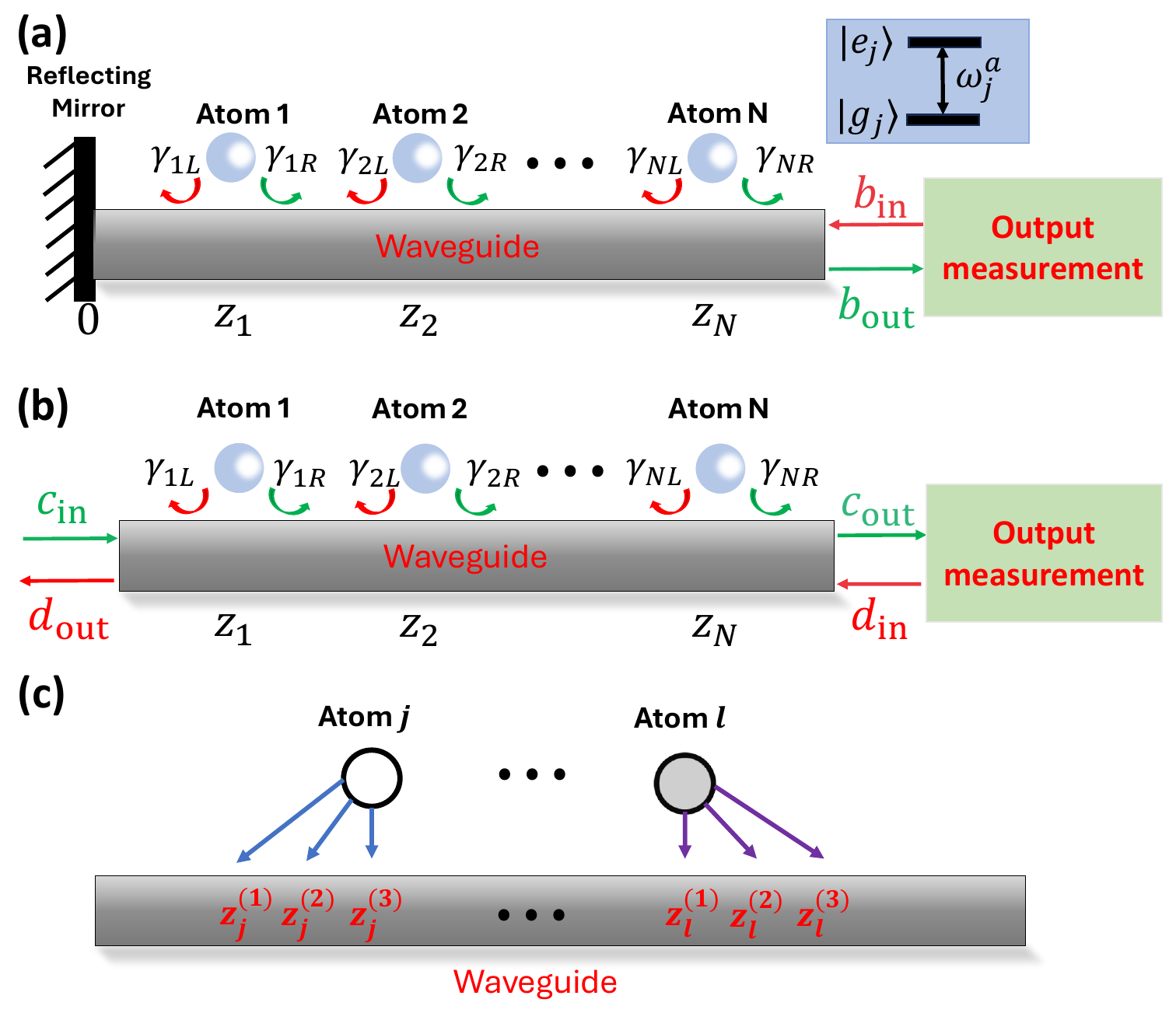}}
  \caption{Non-Markovian quantum network realizations: (a) Multiple two-level atoms coupled to a semi-infinite waveguide closed by the mirror. (b) Multiple two-level atoms coupled to an infinite waveguide. (c) Non-Markovian multi-point coupling between two-level systems and waveguide, where the waveguide can be semi-infinite as in (a) or infinite as in (b).}
  \label{fig:NatomWaveguide}
\end{figure}

The coherent and measurement feedback dynamics of the atom network can exhibit variations based on whether the atoms are coupled to a semi-infinite waveguide (as depicted in Fig.~\ref{fig:NatomWaveguide}(a)) or an infinite waveguide (as shown in Fig.~\ref{fig:NatomWaveguide}(b)). In scenario (a), the atoms' positions are evaluated by their distances to the mirror of a semi-infinite waveguide. Conversely, in scenario (b), where the atoms are coupled to an infinite waveguide, the quantum control dynamics are affected by the relative positions among atoms. For simplification, we represent the positions of atoms in Fig.~\ref{fig:NatomWaveguide}(b) to be the same as those in Fig.~\ref{fig:NatomWaveguide}(a), without affecting the subsequent results.

As illustrated in Fig.~\ref{fig:NatomWaveguide}(a), the $j$th atom can be coupled to a semi-infinite waveguide at the postion $z_j$, the input field, denoted as $b_{\rm in}$, is applied from the right end of the waveguide. Subsequently, the output field, $b_{\rm out}$, is obtained resulting from the interaction between the input field and atoms, as well as the reflection by the mirror at the left end.  However, when the atoms are coupled to an infinite waveguide, the right-propagating and left-propagating modes are independent of each other. This leads to the existence of two pairs of input-output fields, labeled by $c_{\rm in}$, $c_{\rm out}$ and $d_{\rm in}$, $d_{\rm out}$ in Fig.~\ref{fig:NatomWaveguide}(b), respectively. Besides, the atoms in Fig.~\ref{fig:NatomWaveguide}(a) and Fig.~\ref{fig:NatomWaveguide}(b) can also be coupled to a waveguide at multiple points, as illustrated by Fig.~\ref{fig:NatomWaveguide}(c). 

Here we take the $j$th two-level atom in Fig.~\ref{fig:NatomWaveguide}(a) as an example to provide a brief overview of atomic dynamics. The state $\left|e_j\right\rangle$ denotes the excited state of the $j$th atom, and $\left|g_j\right\rangle$ represents its ground state. The raising operator $\sigma_j^+ = \left|e_j\right\rangle \left\langle g_j \right|$ signifies the transition of the atom from the ground state to the excited state. Conversely, the lowering operator $\sigma_j^- = \left|g_j\right\rangle \left\langle e_j \right|$ indicates the reverse process where the atom can decay to the ground state from its excited state. $\sigma_j^+ \sigma_j^-$ represents an excited two-level atom, and the Pauli Z operator $\sigma_j^z = \sigma_j^+ \sigma_j^- - \sigma_j^- \sigma_j^+$ quantifies the difference in populations between the excited and ground states of the $j$th atom. 
In the following, $[A,B] = AB-BA$ represents the commutator between two operators $A$ and $B$, which can be an atomic operator or a quantum noise operator in the waveguide.

\subsection{Atoms coupled to a semi-infinite waveguide} \label{Sec:InputOutputSemi}
Assume that the quantum system under consideration (i.e., the atom network in Fig.~\ref{fig:NatomWaveguide}) exists in the Hilbert space denoted as $\mathcal{H}_S$, and the quantum noise field (i.e., the photon in the waveguide) occupies the Hilbert space $\mathcal{H}_F$, they can construct a joint system expressed in a tensor product format $\mathcal{H}_S\otimes \mathcal{H}_F$~\citep{dong2019response,fischer2018particle}. Consequently, the quantum state of the joint system comprising the atom network and a semi-infinite waveguide in Fig.~\ref{fig:NatomWaveguide}(a) can be represented as $|\Psi\rangle= |\psi,\mathbf{k}\rangle  = |\psi\rangle \otimes  |\mathbf{k}\rangle $ where $|\psi\rangle$ signifies the atom's state and $|\mathbf{k}\rangle$ represents the state of the waveguide.

Consider the case with $N$ two-level atoms in Fig.~\ref{fig:NatomWaveguide}(a), the Hamiltonian of the system reads~\citep{zheng2013persistent}  
\begin{equation} \label{con:Hspicture}
\begin{aligned}
H/\hbar &=\sum_{j=1}^N \left(\omega_j^a - i\frac{\eta_j}{2}\right) \sigma_j^+\sigma_j^- +H_{\omega}/\hbar
 +  \sum_{j=1}^N H_{\rm I}^{(j)}/\hbar ,
\end{aligned}
\end{equation}
where the first part corresponds to the Hamiltonian of multiple two-level atoms with the resonant frequency $\omega_j^a$, and the decay rate of the $j$th atom to the environment is $\eta_j$. The second part, $H_{\omega}/\hbar
=\int_{-\infty}^{\infty}  \omega  d_{\omega}^{\dag}d_{\omega}\mathrm{d}\omega$, represents the Hamiltonian of a semi-infinite waveguide consisting of a continuum modes, where $d_{\omega}^{\dag} \left(d_{\omega}\right)$ denotes the creation (annihilation) operator of the waveguide mode $\omega$. The commutator between two distinct waveguide modes satisfies $\left[d_{\omega} (\omega),d_{\omega}^{\dag} (\omega') \right] = \delta(\omega-\omega')$~\citep{arranz2021modeling}. The last part describes the interactions between the atom network and the waveguide. For example, the interaction Hamiltonian between the $j$th atom and waveguide can be expressed as~\citep{ZhangBin} 
\begin{equation} \label{con:Hintjchiral}
\begin{aligned}
H_{\rm I}^{(j)} /\hbar 
&= \int_{-\infty}^{\infty}  \left [g_{j}(\omega,z_j) d^{\dag}_{\omega}\sigma_j^-  + g_{j}^*(\omega,z_j) d_{\omega}\sigma_j^+\right ]\mathrm{d}\omega,
\end{aligned}
\end{equation}
and
\begin{equation} \label{con:gj}
\begin{aligned}
g_{j}(\omega,z_j)&= i\left (\sqrt{\frac{\gamma_{jR}}{2\pi}} e^{-i\omega z_j/c}-\sqrt{\frac{\gamma_{jL}}{2\pi}} e^{i\omega z_j/c}\right),
\end{aligned}
\end{equation}
where $\gamma_{jR}$ is the coupling strength between the $j$th atom and the right-propagating mode in the waveguide, while $\gamma_{jL}$ denotes the coupling strength between the $j$th atom and the left-propagating mode in the waveguide. The Planck constant $\hbar$ is set to be 1 in this paper and is omitted in the following.

The dynamics of an arbitrary operator $O$ in the quantum system is determined by the Heisenberg equation $\dot{O} = -i [O,H]$. Subsequently, we can obtain the equations for the atomic operators as
\begin{small}
\begin{equation} \label{con:HeisenSm}
\begin{aligned}
\dot{\sigma}_j^-=& -\left(i\omega_j^a + \frac{\eta_j}{2} \right) \sigma_j^-  +\int_{-\infty}^{\infty}   \left (\sqrt{\frac{\gamma_{jR}}{2\pi}}  e^{i\omega z_j/c} -\sqrt{\frac{\gamma_{jL}}{2\pi}}  e^{-i\omega z_j/c}\right) d_{\omega}\sigma_j^z \mathrm{d}\omega,
\end{aligned}
\end{equation}
\end{small}%
and the equation for the photon mode $\omega$ in the waveguide is given by
\begin{equation} \label{con:HeisendOmega}
\begin{aligned}
\dot{d}_{\omega}(\omega,t)& = -i\omega d_{\omega}(\omega,t) +i \sum_j g_{j}(\omega,z_j) \sigma_j^-.
\end{aligned}
\end{equation}
Thus, $d_{\omega}(\omega,t)$ can be determined by integrating Eq.~(\ref{con:HeisendOmega}) as
\begin{equation} \label{con:domegatSolution}
\begin{aligned}
d_{\omega}(\omega,t) =& e^{-i\omega\left(t-t_0\right)}d_{\omega}(\omega,t_0)  + i \sum_j g_{j}(\omega,z_j)\int_{t_0}^t e^{-i\omega \left(t-t' \right)} \sigma_j^-\left (t' \right)\mathrm{d}t',
\end{aligned}
\end{equation}
where $t_0$ represents the initial time of the evolution.

As stated by~\citep{fan2010input}, the input field in the waveguide can be derived in the time domain according to the initial value of the operator $d_{\omega}(\omega,t_0)$ as
\begin{subequations} \label{eq:FreqOperator}
\begin{numcases}{}
b_{\rm in}(t) = \frac{1}{\sqrt{2\pi}} \int_{-\infty}^{\infty}  d_{\omega}(\omega,t_0) e^{-i\omega \left(t-t_0\right)}\mathrm{d}\omega,\\
d_{\omega}(\omega,t_0) = \frac{1}{\sqrt{2\pi}} \int_{t_0}^{\infty} b_{\rm in}(\nu) e^{i\omega \left(\nu-t_0\right)}\mathrm{d}\nu,
\end{numcases}
\end{subequations}
where $\left[b_{\rm in}(t),b_{\rm in}^{\dag}(t') \right] = \delta(t-t')$~\citep{gardiner1985input}.

Similarly, by integrating Eq.~(\ref{con:HeisendOmega}) up to a final time $t_1>t$~\citep{fan2010input,zhang2013non}, the output field in the waveguide $b_{\rm out}(t)$ can be defined as the Fourier transform of $d_{\omega}\left(\omega,t_1\right)$. Consequently, the output field $b_{\rm out}(t)$ can be expressed as~\citep{ding2025transport}
\begin{equation} \label{con:boutt}
\begin{aligned}
b_{\rm out}(t) &= \frac{1}{\sqrt{2\pi}} \int_{-\infty}^{\infty}  d_{\omega}(\omega,t_1) e^{-i\omega \left(t-t_1\right)}\mathrm{d}\omega  \\
&= \frac{1}{\sqrt{2\pi}} \int_{-\infty}^{\infty}  \Big[  e^{-i\omega\left(t_1-t_0\right)}d_{\omega}(\omega,t_0)  +  i \sum_j g_{j}(\omega,z_j)\int_{t_0}^{t_1} e^{-i\omega \left(t_1-t' \right)} \sigma_j^-\left (t' \right)\mathrm{d}t'\Big] e^{-i\omega \left(t-t_1\right)}\mathrm{d}\omega  \\
&=b_{\rm in}(t)  + \sum_j     \left [\sqrt{\gamma_{jL}} \sigma_j^-\left (t-\frac{z_j}{c} \right)-\sqrt{\gamma_{jR}} \sigma_j^-\left (t+\frac{z_j}{c}\right)\right],
\end{aligned}
\end{equation}
where the integration
\begin{small}
\begin{equation} \label{con:integralCal}
\begin{aligned}
&\frac{i}{\sqrt{2\pi}} \int_{-\infty}^{\infty}    g_{j}(\omega,z_j)\int_{t_0}^{t_1} e^{-i\omega \left(t-t' \right)} \sigma_j^-\left (t' \right)  \mathrm{d}t'\mathrm{d}\omega =\int_{t_0}^{t_1}     \left [\sqrt{\gamma_{jL}} \delta\left(t-t'  -\frac{z_j}{c} \right)-\sqrt{\gamma_{jR}} \delta\left(t-t' + \frac{z_j}{c} \right)\right]\sigma_j^-\left (t' \right) \mathrm{d}t'.
\end{aligned}
\end{equation}
\end{small}%
Eq.~(\ref{con:integralCal}) can be regarded as an integral process involving two kernels in the format of delta functions. Similar to the approach in~\citep{domokos2002PRA}, the term $\sigma_j^-\left (t-z_j/c \right)$ corresponds to the forward propagating modes, while the term $\sigma_j^-\left (t+z_j/c\right)$ corresponds to the backward propagating modes.

\subsection{Atoms coupled to an infinite waveguide}\label{Sec:InputOutputInfiW}
In Fig.~\ref{fig:NatomWaveguide}(b), the joint system consisting of the atom network and an infinite waveguide can be represented as $\left|\tilde{\Psi}\right \rangle= \left|\tilde{\psi},\mathbf{R},\mathbf{L}\right \rangle  = \left|\tilde{\psi} \right \rangle \otimes  |\mathbf{R}\rangle  \otimes  |\mathbf{L}\rangle$, where $\left|\tilde{\psi} \right \rangle$ represents the atomic state, $|\mathbf{R}\rangle$ denotes the right-propagating photon state in the waveguide, and $|\mathbf{L}\rangle$ signifies the left-propagating photon state in the waveguide. 
When $N$ two-level atoms are coupled to an infinite waveguide as shown in Fig.~\ref{fig:NatomWaveguide}(b), the Hamiltonian is given by~\citep{lalumiere2013input,pichler2016photonic}
\begin{equation} \label{con:freeHam0}
\begin{aligned}
\tilde{H} =&\sum_{j=1}^N \left(\omega_j^a - i\frac{\eta_j}{2}\right)\sigma_j^+\sigma_j^- + \tilde{H}_{\omega}
  +  \sum_{j=1}^N \tilde{H}_{\rm I}^{(j)},
\end{aligned}
\end{equation}
where the first component comprises the atomic Hamiltonian align with Eq.~(\ref{con:Hspicture}). The second component, $ \tilde{H}_{\omega}
  = \int_{0}^{\infty}  \omega l_{\omega}^{\dag}l_{\omega}\mathrm{d}\omega + \int_{0}^{\infty}  \omega r_{\omega}^{\dag}r_{\omega} \mathrm{d}\omega$, characterizes the Hamiltonian of an infinite waveguide encompassing left-propagating and right-propagating modes. Here, $l_{\omega}^{\dag} \left(l_{\omega}\right)$ signifies the creation (annihilation) operator of the left-propagating waveguide mode $\omega$, while $r_{\omega}^{\dag} \left(r_{\omega}\right)$ denotes that for the right-propagating waveguide mode $\omega$, and $\left[l_{\omega} (\omega),l_{\omega}^{\dag} (\omega') \right] =\left[r_{\omega} (\omega),r_{\omega}^{\dag} (\omega') \right]  =\delta(\omega-\omega')$. Lastly, the final component in Eq.~(\ref{con:freeHam0}) encapsulates the interaction between the atom and the left- or right-propagating waveguide modes as~\citep{lalumiere2013input,pichler2016photonic}
\begin{equation} \label{con:Hintj2nonchiral}
\begin{aligned}
\tilde{H}_{\rm I}^{(j)} &=\int_{-\infty}^{\infty}  \left [\bar{g}_{j}(\omega,z_j) l^{\dag}_{\omega}\sigma_j^-  + \bar{g}_{j}^*(\omega,z_j) l_{\omega}\sigma_j^+\right ] \mathrm{d}\omega  + \int_{-\infty}^{\infty} \left [\hat{g}_{j}(\omega,z_j) r^{\dag}_{\omega}\sigma_j^-   + \hat{g}_{j}^*(\omega,z_j) r_{\omega}\sigma_j^+ \right ]\mathrm{d}\omega,
\end{aligned}
\end{equation}
where the coupling strength between the atom and waveguide depends on the atom positions as $\bar{g}_{j}(\omega,z_j)= -i\sqrt{\gamma_{jL}/2\pi} e^{i\omega z_j/c}$ and $\hat{g}_{j}(\omega,z)= -i\sqrt{\gamma_{jR}/2\pi} e^{-i\omega z_j/c}$, representing the coupling between an atom at $z_j$ and the left- or right-propagating waveguide modes respectively, which is similar to those in Eq.~(\ref{con:gj}).

In analog to Eq.~(\ref{con:HeisenSm}), we can derive the Heisenberg equations for the atomic operators as
\begin{equation} \label{con:HeisenSmInfinite}
\begin{aligned}
\dot{\sigma}_j^-
=& -i\omega_j^a \sigma_j^-  - \frac{\eta_j}{2} \sigma_j^- -\int_{-\infty}^{\infty}  \left (\sqrt{\frac{\gamma_{jR}}{2\pi}} e^{i\omega z_j/c} r_{\omega} + \sqrt{\frac{\gamma_{jL}}{2\pi}}  e^{-i\omega z_j/c}l_{\omega}\right) \sigma_j^z \mathrm{d}\omega,
\end{aligned}
\end{equation}
and the evolution of the operators $l_{\omega}$ and $r_{\omega}$ reads
\begin{subequations} \label{con:lomegacomega}
\begin{numcases}{}
\dot{l}_{\omega}(\omega,t) = -i\omega l_{\omega}(\omega,t) +i \sum_j \bar{g}_{j}(\omega,z_j) \sigma_j^-, \label{lOmega}\\
\dot{r}_{\omega}(\omega,t) = -i\omega r_{\omega}(\omega,t) +i \sum_j \hat{g}_{j}(\omega,z_j) \sigma_j^-.\label{cOmega}
\end{numcases}
\end{subequations}

Then the two input fields $d_{\rm in}(t)$ and $c_{\rm in}(t)$ in Fig.~\ref{fig:NatomWaveguide}(b) can be defined as
\begin{subequations} \label{eq:InfWInputFields}
\begin{numcases}{}
d_{\rm in}(t) = \frac{1}{\sqrt{2\pi}} \int_{-\infty}^{\infty}  l_{\omega}(\omega,t_0) e^{-i\omega \left(t-t_0\right)}\mathrm{d}\omega, \label{dint}\\
l_{\omega}(\omega,t_0) = \frac{1}{\sqrt{2\pi}} \int_{t_0}^{\infty} d_{\rm in}(\nu) e^{i\omega \left(\nu-t_0\right)}\mathrm{d}\nu, \label{lomega}\\  
c_{\rm in}(t) = \frac{1}{\sqrt{2\pi}}\int_{-\infty}^{\infty}  r_{\omega}(\omega,t_0) e^{-i\omega \left(t-t_0\right)}\mathrm{d}\omega, \label{cint}\\
r_{\omega}(\omega,t_0) = \frac{1}{\sqrt{2\pi}} \int_{t_0}^{\infty} c_{\rm in}(\nu) e^{i\omega \left(\nu-t_0\right)}\mathrm{d}\nu.\label{cOmega2}
\end{numcases}
\end{subequations}
Similar to Eq.~(\ref{con:boutt}), the left-propagating output field can be represented as
\begin{equation} \label{con:bouttInFWdout}
\begin{aligned}
&d_{\rm out}(t) =d_{\rm in}(t)  +\sum_j   \sqrt{\gamma_{jL}}\sigma_j^- \left(t- z_j/c \right),
\end{aligned}
\end{equation}
and the right-propagating output field reads
\begin{equation} \label{con:bouttInFWcout}
\begin{aligned}
c_{\rm out}(t) &=c_{\rm in}(t)  +\sum_j    \sqrt{\gamma_{jR}}\sigma_j^- \left(t+ z_j/c \right).
\end{aligned}
\end{equation}

\subsection{Multi-point coupling between atoms and waveguide}
When the $j$th atom is coupled to the waveguide at multiple points $z_j^{(1)}, z_j^{(2)},\cdots,z_j^{(\mathbf{n}_j)}$ with $z_j^{(1)} < z_j^{(2)} < \cdots <  z_j^{(\mathbf{n}_j)}$  as in ~\citep{WangZH2020PRA,DuliPRAMultiPointCoup} or Fig.~\ref{fig:NatomWaveguide}(c), the free Hamiltonian for the semi-infinite and infinite waveguide circumstance are the same as that in Eq.~(\ref{con:Hspicture}) and Eq.~(\ref{con:freeHam0}), respectively. However, the interaction Hamiltonian will be different. Specifically, when $\mathbf{n}_j = 2$ for an arbitrary $j$, the coupling method reduces to the case in ~\citep{kockum2018decoherence}.

Assume that the $j$th atom can be coupled to the right-propagating and left-propagating modes in the waveguide via its $n$th connecting point at $z_j^{(n)}$ with the strengths $\gamma_{jR}^{(n)}$ and $\gamma_{jL}^{(n)}$ respectively. Then the input-output relationships for the semi-infinite and infinite waveguide are as follows.

\subsubsection{Multi-point coupling based on semi-infinite waveguide}
The interaction Hamiltonian for the $j$th atom coupled to the waveguide at $z_j^{(n)}$ can be generalized from Eq.~(\ref{con:Hintjchiral}) as
\begin{equation} \label{con:HintjchiralGiant}
\begin{aligned}
H_{\rm m,I}^{(j)} 
&= \int_{-\infty}^{\infty} \sum_{n=1}^{\mathbf{n}_j} \left ( g_{j}^{(n)}  d^{\dag}_{\omega}\sigma_j^-  +  {g_{j}^{(n)}}^* d_{\omega}\sigma_j^+\right )\mathrm{d}\omega,
\end{aligned}
\end{equation}
with the subscript $\rm m,I$ represents interaction Hamiltonian via multi-point coupling, and 
\begin{equation} \label{con:gjn}
\begin{aligned}
g_{j}^{(n)}  =  i\left [\sqrt{\frac{\gamma_{jR}^{(n)}}{2\pi}} e^{-i\omega z_j^{(n)}/c}-\sqrt{\frac{\gamma_{jL}^{(n)}}{2\pi}} e^{i\omega z_j^{(n)}/c}\right],
\end{aligned}
\end{equation}
generalized from the definition of $ g_{j}(\omega,z_j)$ in Eq.~(\ref{con:gj}). In addition, when $\mathbf{n}_j \equiv 1$, the above coupling method reduces to the case in Sec.~\ref{Sec:InputOutputSemi}. Due to Sec.~\ref{Sec:InputOutputSemi}, we can similarly derive the non-Markovian input-output relationship based on Eq.~(\ref{con:boutt}) as
\begin{equation} \label{con:bouttMultiPoint}
\begin{aligned}
&b_{\rm m,out}(t) =b_{\rm m,in}(t)  +\sum_j \sum_{n=1}^{\mathbf{n}_j}    \left [\sqrt{\gamma_{jL}^{(n)}} \sigma_j^-\left (t- z_j^{(n)}/c \right)-\sqrt{\gamma_{jR}^{(n)}} \sigma_j^-\left (t+z_j^{(n)}/c\right)\right],
\end{aligned}
\end{equation}
where $b_{\rm m,in}(t)$ and $b_{\rm m,out}(t)$ represent the input and output fields, respectively.

\subsubsection{Multi-point coupling based on infinite waveguide} Similarly, the multi-point interactions between atoms and an infinite waveguide can be represented with the following Hamiltonian 
\begin{small}
\begin{equation} \label{con:HintjchiralGiant2}
\begin{aligned}
\tilde{H}_{\rm m,I}^{(j)}  = & \sum_{n=1}^{\mathbf{n}_j} \int_{-\infty}^{\infty}  \left [\bar{g}_{j}^{(n)} l^{\dag}_{\omega}\sigma_j^- +  \left(\bar{g}_{j}^{(n)}\right)^* l_{\omega}\sigma_j^+ \right ] \mathrm{d}\omega +  \sum_{n=1}^{\mathbf{n}_j} \int_{-\infty}^{\infty}  \left [\hat{g}_{j}^{(n)} r^{\dag}_{\omega}\sigma_j^-   + \left(\hat{g}_{j}^{(n)}\right)^* r_{\omega}\sigma_j^+ \right ]\mathrm{d}\omega,
\end{aligned}
\end{equation}
\end{small}%
where the coupling strengths between the atom and waveguide are $\bar{g}_{j}^{(n)} = -i\sqrt{\gamma_{jL}^{(n)}/2\pi} e^{i\omega z_j^{(n)}/c}$ and $\hat{g}_{j}^{(n)}= -i\sqrt{\gamma_{jR}^{(n)}/2\pi} e^{-i\omega z_j^{(n)}/c}$. Generalized from Eqs.~(\ref{con:bouttInFWdout},\ref{con:bouttInFWcout}), we can derive the input-output relationship as
\begin{subequations} \label{eq:InputOutputMultiPInfintie}
\begin{numcases}{}
d_{\rm m,out}(t) = d_{\rm m,in}(t) + \sum_j \sum_{n=1}^{\mathbf{n}_j}   \sqrt{\gamma_{jL}^{(n)}}\sigma_j^- \left(t- z_j^{(n)}/c \right),  \\
c_{\rm m,out}(t)= c_{\rm m,in}(t)+ \sum_j\sum_{n=1}^{\mathbf{n}_j} \sqrt{\gamma_{jR}^{(n)}}\sigma_j^- \left(t+ z_j^{(n)}/c \right),  
\end{numcases}
\end{subequations}
where $d_{\rm m,in}(t)$ ($c_{\rm m,in}(t)$) and $d_{\rm m,out}(t)$ ($c_{\rm m,out}(t)$) represent the input and output fields for left-propagating (right-propagating) fields, respectively.

\subsection{Non-Markovian memory kernel for atom-waveguide interactions}
Using the above input-output relationship based on atom-waveguide interactions, we can derive the following memory kernels for the above non-Markovian interactions. 
\subsubsection{Memory kernel for semi-infinite waveguide case}
When the atoms are coupled to a semi-infinite waveguide as in Fig.~\ref{fig:NatomWaveguide}(a), we can derive the atomic dynamics by defining atoms' decay to the waveguide with the time-varying operator $\mathbf{L}(t)$ according to the input-output relationship in Eq.~(\ref{con:boutt}) as
\begin{equation} \label{con:LSemiDef}
\begin{aligned}
\mathbf{L}(t) = & \sum_j     \left [\sqrt{\gamma_{jL}} \sigma_j^-\left (t-\frac{z_j}{c} \right)-\sqrt{\gamma_{jR}} \sigma_j^-\left (t+\frac{z_j}{c}\right)\right]\\
 \triangleq & \sum_j \int_0^t \kappa_j (t-\nu)  \sigma_j^-\left (\nu\right) \mathrm{d}\nu,
\end{aligned}
\end{equation}
where the integral kernel
\begin{equation} \label{con:kappajDef}
\begin{aligned}
\kappa_j (t-\nu)  = \left[ \sqrt{\gamma_{jL}} \delta\left(t- \nu-\frac{z_j}{c}  \right) - \sqrt{\gamma_{jR}} \delta \left(t- \nu+\frac{z_j}{c}  \right) \right].
\end{aligned}
\end{equation}
Then Eq.~(\ref{con:boutt}) can be equivalently rewritten in a general format as
\begin{equation} \label{con:bouttV2}
\begin{aligned}
b_{\rm out}(t) &= b_{\rm in}(t) +\mathbf{L}(t)\\
&=b_{\rm in}(t) + \sum_j \int_{t_0}^t \kappa_j (t-\nu)  L_j\left (\nu\right) \mathrm{d}\nu,
\end{aligned}
\end{equation}
where $L_j= \sigma_j^-$ represents the lowering operator for the $j$th atom, and this generalizes the format in \citep{zhang2013non,diosi2012non} to the distributed network circumstance with multiple memory kernels. Besides, when atoms are coupled to the semi-infinite waveguide via multi-point coupling, the integral kernel reads
\begin{equation} \label{con:kappajDefMlultiPoint}
\begin{aligned}
\mathbf{\kappa}_j (t-\nu)  
= \sum_{n=1}^{\mathbf{n}_j} \left[  \sqrt{\gamma_{jL}^{(n)}} \delta\left(t- \nu-\frac{z_j^{(n)}}{c}  \right) - \sqrt{\gamma_{jR}^{(n)}} \delta \left(t- \nu+\frac{z_j^{(n)}}{c}  \right) \right],
\end{aligned}
\end{equation}
and the interacting operator $L_j$ is the same as that in Eq.~(\ref{con:bouttV2}).

By concluding Eqs.~(\ref{con:bouttV2},\ref{con:kappajDefMlultiPoint}), the non-Markovian interactions between a waveguide and atoms or atom like objects are only influenced by the coupling operators $L_j$ and the integral kernels $\mathbf{\kappa}_j$. This can be extended to encompass non-Markovian couplings between waveguide and various systems such as cavities~\citep{duda2024efficient,ding2023quantum,DingNlevel2024,ding2025transport} or magnons~\citep{zhan2022chirality}, merely by substituting the operators $L_j$ and integral kernels.

\subsubsection{Memory kernel for infinite waveguide case}
When the atoms are coupled to an infinite waveguide, as shown in Fig.~\ref{fig:NatomWaveguide}(b), the decay of atoms towards the waveguide can be divided into the components propagating to the left and to the right. Based on Eqs.~(\ref{con:bouttInFWdout},\ref{con:bouttInFWcout}), we define the following two operators as
\begin{subequations} \label{con:LeftRightDecayOperator}
\begin{numcases}{}
\mathbf{l} (t) \triangleq \sum_j \int_{t_0}^t  \sigma_j^- (\nu)  \kappa_j^{l}(t-\nu)\mathrm{d}\nu, \label{leftdecay}\\
\mathbf{r} (t) \triangleq \sum_j \int_{t_0}^t \sigma_j^- (\nu)\kappa_j^{r}(t-\nu) \mathrm{d}\nu,\label{rightdecay2}
\end{numcases}
\end{subequations}
where the integral kernel reads $\kappa_j^{l}(t-\nu) = \sqrt{\gamma_{jL}} \delta\left( t- \nu-z_j/c  \right) $ and $\kappa_j^{r}(t-\nu) = \sqrt{\gamma_{jR}} \delta\left( t- \nu+ z_j/c  \right) $. Then the input-output equations~(\ref{con:bouttInFWdout}) and (\ref{con:bouttInFWcout}) can be rewritten as
\begin{subequations} \label{con:InfiniteOutputKernel}
\begin{numcases}{}
d_{\rm out}(t) = d_{\rm in}(t) +\sum_j \int_{t_0}^t  L_j (\nu)  \kappa_j^{l}(t-\nu)\mathrm{d}\nu, \label{leftdecay2}\\
c_{\rm out}(t) = c_{\rm in}(t)  +  \sum_j \int_{t_0}^t L_j (\nu)\kappa_j^{r}(t-\nu) \mathrm{d}\nu,\label{rightdecay}
\end{numcases}
\end{subequations}
where $L_j$ has been defined in Eq.~(\ref{con:bouttV2}).
When generalizing to the case where atoms are coupled to an infinite waveguide via multiple points, we only need to replace the integral kernels with
\begin{small}
\begin{subequations} \label{con:kappajDefMlultiPointInfinite}
\begin{numcases}{}
\mathbf{\kappa}_j^l (t-\nu)  = \sum_{n=1}^{\mathbf{n}_j} \sqrt{\gamma_{jL}^{(n)}} \delta\left(t- \nu-\frac{z_j^{(n)}}{c}  \right),\\
\mathbf{\kappa}_j^r (t-\nu)  = \sum_{n=1}^{\mathbf{n}_j} \sqrt{\gamma_{jR}^{(n)}} \delta\left(t- \nu+\frac{z_j^{(n)}}{c}  \right).
\end{numcases}
\end{subequations}
\end{small}%

\section{Non-Markovian quantum noise properties based on atom-waveguide interactions} \label{Sec:NonMarknoise}
Based on the input-output relationships in Sec.~\ref{Sec:inputoutput}, this section examines the characteristics of quantum noises in the waveguide arising from the non-Markovian interactions between an atom network and a semi-infinite or infinite waveguide.

\subsection{Non-Markovianity of quantum noises in a semi-infinite waveguide}

To elucidate the non-Markovian characteristic concerning quantum noises, the Hamiltonian in Eq.~(\ref{con:Hspicture}) can be equivalently rewritten by tracing out the waveguide Hamiltonian, as in Appendix~\ref{Sec:AppdixTraceWaveguide}.  Subsequently, the interactions between the atoms and the waveguide can be perceived as interactions between the atoms and time-varying input noises as ~\citep{gardiner1985input,fischer2018particle}
\begin{equation} \label{con:HRWProduct}
\begin{aligned}
\mathbf{H}(t)
=& \sum_{j=1}^N \left(\omega_j^a - i\frac{\eta_j}{2}\right) \sigma_j^+\sigma_j^-  \\
&+ i\sum_{j=1}^N \left[ \sqrt{\gamma_{jR}} \sigma_j^-   b_{\rm in}^{\dag}\left(t-z_j/c\right)   -\sqrt{\gamma_{jL}}   \sigma_j^-  b_{\rm in}^{\dag}\left(t+ z_j/c\right)\right] \\
&- i\sum_{j=1}^N \left[ \sqrt{\gamma_{jR}} \sigma_j^+   b_{\rm in}\left(t-z_j/c\right)   -  \sqrt{\gamma_{jL}}   \sigma_j^+  b_{\rm in}\left(t+ z_j/c\right)\right],
\end{aligned}
\end{equation}
where $b_{\rm in}\left(t-z_j/c\right)$ and  $b_{\rm in}\left(t+z_j/c\right)$ are delayed noise operators for the waveguide modes by regarding the right-propagating direction as positive based on Eq.~(\ref{eq:FreqOperator}), and more details are given in Appendix~\ref{Sec:AppdixTraceWaveguide}. Further in Eq.~(\ref{con:HRWProduct}), we denote
\begin{equation} \label{con:Hsdef}
\begin{aligned}
H_s = \sum_{j=1}^N \left(\omega_j^a - i\frac{\eta_j}{2}\right) \sigma_j^+\sigma_j^-,
\end{aligned}
\end{equation}
\begin{equation} \label{con:Multichannle}
\begin{aligned}
{b_{\rm in}^{(j)}}^{\dag} (t)=&\sqrt{\gamma_{jR}} b_{\rm in}^{\dag}\left(t-\frac{z_j}{c}\right)   - \sqrt{\gamma_{jL}} b_{\rm in}^{\dag}\left(t+ \frac{z_j}{c}\right)\\
 \triangleq &  \int_0^t \tilde{\kappa}_j (t-\nu)  b_{\rm in}^{\dag}\left (\nu\right)\mathrm{d}\nu,
\end{aligned}
\end{equation}
and
\begin{equation} \label{con:kappajDef2}
\begin{aligned}
\tilde{\kappa}_j (t-\nu)  = \left[ \sqrt{\gamma_{jR}} \delta\left(t- \nu-\frac{z_j}{c}  \right) -\sqrt{\gamma_{jL}} \delta \left(t- \nu+\frac{z_j}{c}  \right) \right].
\end{aligned}
\end{equation}

Then the Hamiltonian in Eq.~(\ref{con:HRWProduct}) can be rewritten as
\begin{equation} \label{con:HeffSemiInputFieldKappa}
\begin{aligned}
\mathbf{H}(t)
& = H_s +  i\sum_{j=1}^N \left\{ \left[\int_0^t \tilde{\kappa}_j (t-\nu)  b_{\rm in}^{\dag}\left (\nu\right)\mathrm{d}\nu\right]  L_j -\rm H.c.\right\},
\end{aligned}
\end{equation}
where $\rm H.c.$ represents Hermitian conjugation.

The components with integral processes in Eq.~(\ref{con:HeffSemiInputFieldKappa}) can be interpreted as the interactions between the atom network and $N$ input quantum noise channels, denoted by the operators $b_{\rm in}^{(j)}(t)$ and ${b_{\rm in}^{(j)}}^{\dag} (t)$. According to Eq.~(\ref{con:Multichannle}), the commutator between arbitrary two quantum noise channels satisfies that
\begin{equation} \label{con:bjrelation}
\begin{aligned}
 \left[ b_{\rm in}^{(j)} (t), {b_{\rm in}^{(l)}}^{\dag} (t') \right] =&\int_0^t\int_0^{t'} \tilde{\kappa}_j^* (t-\nu) \tilde{\kappa}_l (t'-\nu') \left[ b_{\rm in}\left (\nu\right) b_{\rm in}^{\dag}\left (\nu'\right) -b_{\rm in}^{\dag}\left (\nu'\right)  b_{\rm in}\left (\nu\right) \right]\mathrm{d}\nu'\mathrm{d}\nu\\
=&\int_0^t \tilde{\kappa}_j^* (t-\nu) \tilde{\kappa}_l (t'-\nu) \mathrm{d}\nu.
\end{aligned}
\end{equation}

Then we have the following lemmas for non-Markovian commutative relationships among different quantum noise channels based on interactions between atoms and a semi-infinite waveguide, which is different from the Markovian circumstance in~\citep{gardiner1985input}.

\newtheorem{lemma}{Lemma}

\begin{lemma} \label{lemmabinjt}
For the non-Markovian Hamiltonian in Eq.~(\ref{con:HeffSemiInputFieldKappa}), consider the $j$th input quantum noise channel, the commutator for the quantum noise in the time domain reads
\begin{equation} 
\begin{aligned} \label{con:Eqlemma1}
&\left[ b_{\rm in}^{(j)} (t) ,{b_{\rm in}^{(j)}}^{\dag} (t')\right] =  \left(\gamma_{jR} + \gamma_{jL} \right) \delta\left(t-t' \right) -\sqrt{\gamma_{jL}\gamma_{jR}} \delta\left( t - t' +\frac{2z_j}{c}\right) -\sqrt{\gamma_{jL}\gamma_{jR}} \delta\left( t - t' -\frac{2z_j}{c}\right) .
\end{aligned}
\end{equation}
\end{lemma}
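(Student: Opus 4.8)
The plan is to start from the two-channel commutator formula already derived in Eq.~(\ref{con:bjrelation}) and specialize it to the diagonal case $l=j$. This gives immediately
\[
\left[ b_{\rm in}^{(j)}(t),{b_{\rm in}^{(j)}}^{\dag}(t')\right] = \int_0^t \tilde{\kappa}_j^*(t-\nu)\,\tilde{\kappa}_j(t'-\nu)\,\mathrm{d}\nu ,
\]
so the entire computation reduces to substituting the explicit kernel from Eq.~(\ref{con:kappajDef2}). Since the coupling strengths $\gamma_{jR},\gamma_{jL}$ and the delays $z_j/c$ are real, $\tilde{\kappa}_j$ is a real distribution, hence $\tilde{\kappa}_j^*=\tilde{\kappa}_j$, and the integrand is the product of two sums of two Dirac deltas.

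Next I would expand this product into its four cross terms and evaluate each integral using the sifting property of the Dirac delta. The two ``diagonal'' terms, carrying prefactors $\gamma_{jR}$ and $\gamma_{jL}$, localize at $\nu=t-z_j/c$ and $\nu=t+z_j/c$ respectively, and in each case the common delay cancels inside the remaining delta argument, producing $\gamma_{jR}\,\delta(t-t')$ and $\gamma_{jL}\,\delta(t-t')$. The two ``off-diagonal'' terms both carry the prefactor $-\sqrt{\gamma_{jL}\gamma_{jR}}$ (the single minus sign between the two deltas in $\tilde{\kappa}_j$ enters linearly in each cross term): one localizes at $\nu=t-z_j/c$ and yields $-\sqrt{\gamma_{jL}\gamma_{jR}}\,\delta(t-t'-2z_j/c)$, the other localizes at $\nu=t+z_j/c$ and yields $-\sqrt{\gamma_{jL}\gamma_{jR}}\,\delta(t-t'+2z_j/c)$, where now the two delays add up to $2z_j/c$. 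Summing the four contributions reproduces exactly Eq.~(\ref{con:Eqlemma1}).

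The only point requiring a little care -- and the main obstacle -- is the finite, causal integration window $[0,t]$ inherited from the definition in Eq.~(\ref{con:Multichannle}): one must verify that the support $\nu=t\pm z_j/c$ of each delta lies inside the domain of integration (equivalently, extend the upper limit to $\max(t,t')+z_j/c$, or to $+\infty$, as is standard in input--output theory), and one should note that the ``advanced'' piece $b_{\rm in}(t+z_j/c)$ is understood through Eq.~(\ref{eq:FreqOperator}), where the free-field operator is defined for all times via its Fourier representation. Granting this, every $\delta$-sifting step is unambiguous, and the remainder is just bookkeeping of signs: the diagonal terms are manifestly positive and combine into the $(\gamma_{jR}+\gamma_{jL})\delta(t-t')$ self-term, while the two echo terms at $\pm 2z_j/c$ both inherit the overall minus, matching the stated commutator.
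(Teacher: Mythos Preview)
Your proposal is correct and follows exactly the paper's own approach: specialize Eq.~(\ref{con:bjrelation}) to $j=l$, substitute the explicit kernel $\tilde{\kappa}_j$ from Eq.~(\ref{con:kappajDef2}), expand the product of deltas, and integrate term by term. Your discussion of the finite integration window and the advanced piece $b_{\rm in}(t+z_j/c)$ is more careful than the paper, which simply writes down the integral and its result without addressing these points.
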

\begin{proof}
When $j=l$ in Eq.~(\ref{con:bjrelation}), 
\begin{equation} \label{con:ttpexampleProof}
\begin{aligned}
&\int_0^t \tilde{\kappa}_j^* (t-\nu) \tilde{\kappa}_j (t'-\nu) \mathrm{d}\nu \\
=&\int_0^t \left[ \sqrt{\gamma_{jR}} \delta\left(t- \nu-\frac{z_j}{c}  \right) -\sqrt{\gamma_{jL}} \delta \left(t- \nu+\frac{z_j}{c}  \right) \right] \left[ \sqrt{\gamma_{jR}} \delta\left(t'- \nu-\frac{z_j}{c}  \right) -\sqrt{\gamma_{jL}} \delta \left(t'- \nu+\frac{z_j}{c}  \right) \right]\mathrm{d}\nu\\
=&\left(\gamma_{jR} + \gamma_{jL} \right) \delta(t-t')  -\sqrt{\gamma_{jR}\gamma_{jL}}\delta\left(t-t'-\frac{2z_j}{c}  \right) -\sqrt{\gamma_{jR}\gamma_{jL}}\delta \left(t- t' + \frac{2z_j}{c}  \right), 
\end{aligned} 
\end{equation}
which is exactly Eq.~(\ref{con:Eqlemma1}).\qed
\end{proof}

\begin{lemma} \label{lemmabinjtTwoatom}
For the non-Markovian Hamiltonian in Eq.~(\ref{con:HeffSemiInputFieldKappa}), when $j\neq l$, the commutator between the $j$th and $l$th input quantum noise channels satisfies that
\begin{equation} \label{con:ttpexample}
\begin{aligned}
\left[ b_{\rm in}^{(j)} (t) ,{b_{\rm in}^{(l)}}^{\dag} (t')\right] 
=&\sqrt{\gamma_{jL}\gamma_{lL}}\delta\left( t - t' + \frac{z_j-z_l}{c}\right) +\sqrt{\gamma_{jR}\gamma_{lR}}\delta\left( t - t' + \frac{z_l-z_j}{c}\right) \\
&-\sqrt{\gamma_{jL}\gamma_{lR}} \delta\left( t - t' +\frac{z_j+z_l}{c}\right) -\sqrt{\gamma_{jR}\gamma_{lL}} \delta\left( t - t' -\frac{z_j+z_l}{c}\right).
\end{aligned}
\end{equation}
\end{lemma}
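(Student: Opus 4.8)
The plan is to mirror the proof of Lemma~\ref{lemmabinjt}, now keeping $j\neq l$ throughout. The starting point is Eq.~(\ref{con:bjrelation}), which already reduces the commutator to a single convolution integral,
\[
\left[ b_{\rm in}^{(j)} (t) ,{b_{\rm in}^{(l)}}^{\dag} (t')\right] = \int_0^t \tilde{\kappa}_j^* (t-\nu)\,\tilde{\kappa}_l (t'-\nu)\,\mathrm{d}\nu ,
\]
by using $[b_{\rm in}(\nu),b_{\rm in}^{\dag}(\nu')]=\delta(\nu-\nu')$. Since the coupling rates are real, $\tilde\kappa_j^\ast=\tilde\kappa_j$, so I would substitute the explicit two-delta kernel from Eq.~(\ref{con:kappajDef2}), $\tilde\kappa_j(t-\nu)=\sqrt{\gamma_{jR}}\,\delta(t-\nu-z_j/c)-\sqrt{\gamma_{jL}}\,\delta(t-\nu+z_j/c)$, together with the analogous expression for $\tilde\kappa_l(t'-\nu)$.

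Next I would expand the product $\tilde\kappa_j(t-\nu)\,\tilde\kappa_l(t'-\nu)$ into its four cross terms, with sign pattern $(+,-,-,+)$ for the pairings (right--right, right--left, left--right, left--left), and carry out the $\nu$-integration of each using the elementary rule $\int_0^t \delta(t-\nu-a)\,\delta(t'-\nu-b)\,\mathrm{d}\nu=\delta(t-t'-a+b)$. This produces the four delta functions with arguments $(z_l-z_j)/c$, $-(z_j+z_l)/c$, $(z_j+z_l)/c$, $(z_j-z_l)/c$; attaching the corresponding prefactors $\sqrt{\gamma_{jR}\gamma_{lR}}$, $\sqrt{\gamma_{jR}\gamma_{lL}}$, $\sqrt{\gamma_{jL}\gamma_{lR}}$, $\sqrt{\gamma_{jL}\gamma_{lL}}$ reproduces, after a trivial reordering, exactly the right-hand side of Eq.~(\ref{con:ttpexample}).

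I do not expect a genuine obstacle here --- the computation is a direct sign-bookkeeping exercise generalizing the single-channel calculation in Eq.~(\ref{con:ttpexampleProof}). The one point I would be careful to state is the handling of the integration limits: a delta function contributes only when its support lies inside the retarded window $[0,t]$, so the identity should be read under the standing assumption $z_j,z_l>0$ and for times $t$ past the relevant delays $z_j/c,\,z_l/c$ (equivalently, extending the lower limit to $-\infty$ as is customary in the input--output formalism). As a consistency check I would also verify that the result is invariant under the exchange $(t,j)\leftrightarrow(t',l)$ combined with $b_{\rm in}\leftrightarrow b_{\rm in}^{\dag}$, which holds because $\delta$ is even.
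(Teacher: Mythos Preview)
Your proposal is correct and follows essentially the same route as the paper: start from Eq.~(\ref{con:bjrelation}), insert the explicit two-delta kernels $\tilde\kappa_j,\tilde\kappa_l$ from Eq.~(\ref{con:kappajDef2}), expand the product into four cross terms, and integrate each $\delta$-pair to recover Eq.~(\ref{con:ttpexample}). The paper's own proof is in fact terser than yours---it writes down Eq.~(\ref{con:bjltwoatom}) and then simply refers back to the integration in Eq.~(\ref{con:ttpexampleProof})---so your added remarks on the integration window and the $(t,j)\leftrightarrow(t',l)$ symmetry check are extra care, not missing steps.
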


\begin{proof}
In Eq.~(\ref{con:bjrelation}) with different input noise channels,
\begin{equation} \label{con:bjltwoatom}
\begin{aligned}
& \left[ b_{\rm in}^{(j)} (t), {b_{\rm in}^{(l)}}^{\dag} (t') \right]\\
 =&\int_0^{t} \tilde{\kappa}_j^* (t-\nu)  b_{\rm in}\left (\nu\right)\mathrm{d}\nu \int_0^{t'} \tilde{\kappa}_l (t'-\nu')  b_{\rm in}^{\dag}\left (\nu'\right)\mathrm{d}\nu' -\int_0^{t'} \tilde{\kappa}_l (t'-\nu')  b_{\rm in}^{\dag}\left (\nu'\right)\mathrm{d}\nu' \int_0^{t} \tilde{\kappa}_j^* (t-\nu)  b_{\rm in}\left (\nu\right)\mathrm{d}\nu \\
=&\int_0^t \left[ \sqrt{\gamma_{jR}} \delta\left(t- \nu-\frac{z_j}{c}  \right) -\sqrt{\gamma_{jL}} \delta \left(t- \nu+\frac{z_j}{c}  \right) \right] \left[ \sqrt{\gamma_{lR}} \delta\left(t'- \nu-\frac{z_l}{c}  \right) -\sqrt{\gamma_{lL}} \delta \left(t'- \nu+\frac{z_l}{c}  \right) \right]\mathrm{d}\nu, 
\end{aligned}
\end{equation}
then Eq.~(\ref{con:ttpexample}) can be derived by similarly integrating Eq.~(\ref{con:bjltwoatom}) as in Eq.~(\ref{con:ttpexampleProof}) .\qed
\end{proof}

Based on Lemma~\ref{lemmabinjt} and  Lemma~\ref{lemmabinjtTwoatom}, we have the following proposition.
\begin{mypro} \label{SemiInfiniteJudgeNM}
In the quantum network in Fig.~\ref{fig:NatomWaveguide}(a), the quantum noise commutators  $\left[ b_{\rm in}^{(j)} (t) ,{b_{\rm in}^{(l)}}^{\dag} (t')\right]$ reduce to be Markovian  only when $N=1$ and $\gamma_{1L}\gamma_{1R} = 0$. 
\end{mypro}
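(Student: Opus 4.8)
The plan is to extract the statement directly from Lemma~\ref{lemmabinjt} and Lemma~\ref{lemmabinjtTwoatom}, reading ``Markovian'' as the requirement that every commutator in the family $\{[b_{\rm in}^{(j)}(t),{b_{\rm in}^{(l)}}^{\dag}(t')]\}_{j,l}$ be a pure multiple of $\delta(t-t')$, with no time-shifted delta appearing (which is exactly the structure of the Markovian It\={o} table in~\citep{gardiner1985input}). The converse direction is immediate: when $N=1$ and $\gamma_{1L}\gamma_{1R}=0$, Lemma~\ref{lemmabinjt} collapses to $[b_{\rm in}^{(1)}(t),{b_{\rm in}^{(1)}}^{\dag}(t')]=(\gamma_{1R}+\gamma_{1L})\delta(t-t')$, which is Markovian; so only the ``only when'' (necessity) direction requires work, and it splits into a diagonal and an off-diagonal part.

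First I would treat the diagonal terms. For each $j$, Lemma~\ref{lemmabinjt} writes the commutator as $(\gamma_{jR}+\gamma_{jL})\delta(t-t')$ minus $\sqrt{\gamma_{jL}\gamma_{jR}}\,[\delta(t-t'+2z_j/c)+\delta(t-t'-2z_j/c)]$. Since the $j$th atom sits at a strictly positive distance $z_j>0$ from the mirror, the shifts $\pm 2z_j/c$ are nonzero, so the last two terms are supported away from $t=t'$ and cannot be absorbed into the $\delta(t-t')$ term; Markovianity therefore forces $\gamma_{jL}\gamma_{jR}=0$ for every $j$. In particular each atom couples to at most one propagation direction, and by non-degeneracy of the coupling ($\gamma_{jL}+\gamma_{jR}>0$) exactly one of $\gamma_{jL},\gamma_{jR}$ is positive.

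Next I would rule out $N\ge 2$ using the off-diagonal terms. Suppose $N\ge 2$ and pick $j\neq l$; by the previous paragraph each of the pairs $(\gamma_{jL},\gamma_{jR})$ and $(\gamma_{lL},\gamma_{lR})$ has exactly one zero entry. Substituting the four resulting sub-cases into Lemma~\ref{lemmabinjtTwoatom}: if both atoms couple to the same direction, only one term of the four in \eqref{con:ttpexample} survives, namely $\sqrt{\gamma_{jR}\gamma_{lR}}\,\delta(t-t'+(z_l-z_j)/c)$ (or its $L$-analogue $\sqrt{\gamma_{jL}\gamma_{lL}}\,\delta(t-t'+(z_j-z_l)/c)$), whose shift is nonzero because the positions are distinct, $z_j\neq z_l$; if they couple to opposite directions, only one term survives of the form $\mp\sqrt{\gamma_{jR}\gamma_{lL}}\,\delta(t-t'\mp(z_j+z_l)/c)$ (or the mirror case with $jL,lR$), whose shift $(z_j+z_l)/c>0$. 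In every sub-case a shifted delta with strictly positive coefficient remains, so the cross-commutator is not proportional to $\delta(t-t')$ and the family fails to be Markovian. Hence $N=1$, and combined with the diagonal step this yields $\gamma_{1L}\gamma_{1R}=0$, proving the proposition.

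The main obstacle I anticipate is bookkeeping rather than conceptual depth: one must make explicit the standing geometric and non-degeneracy hypotheses under which the statement holds — $z_j>0$, the positions $z_j$ pairwise distinct, and each atom genuinely coupled to the waveguide — since without them an atom placed at the mirror or two coincident atoms could make a commutator spuriously look Markovian. One also has to argue that the various shifted deltas in \eqref{con:Eqlemma1} and \eqref{con:ttpexample} cannot cancel against one another; this is clean here because once $\gamma_{jL}\gamma_{jR}=0$ holds for all $j$, at most one of the four cross-terms ever carries a nonzero coefficient, and its support point is strictly separated from $t=t'$, so no cancellation is possible. Once these points are pinned down the argument is just the two-step case analysis above.
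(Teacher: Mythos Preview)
Your proposal is correct and follows essentially the same route as the paper's own proof: both invoke Lemma~\ref{lemmabinjt} for the diagonal commutators and Lemma~\ref{lemmabinjtTwoatom} for the off-diagonal ones, reading ``Markovian'' as proportionality to $\delta(t-t')$. The paper's argument is terser---it simply asserts that for $N>1$ the cross-commutator ``always contains at least one delayed component'' without the sub-case breakdown---whereas you spell out the four $(L,R)$ combinations and the non-cancellation reasoning; your added care about the standing hypotheses ($z_j>0$, distinct positions, non-degenerate coupling) is a genuine improvement in rigor over the paper's version, but the underlying idea is the same.
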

\begin{proof}
When $N>1$, $\left[ b_{\rm in}^{(j)} (t) ,{b_{\rm in}^{(l)}}^{\dag} (t')\right] \neq \delta(t-t')$, and always contains at least one delayed component, according to Lemma~\ref{lemmabinjtTwoatom}. Thus in this case, the quantum noise commutators are non-Markovian. When $N=1$ and $\gamma_{1L}\gamma_{1R} = 0$, $\left[ b_{\rm in}^{(1)} (t) ,{b_{\rm in}^{(1)}}^{\dag} (t')\right] = \left(\gamma_{1R} + \gamma_{1L} \right) \delta\left(t-t' \right)$, according to Lemma~\ref{lemmabinjt}, the quantum noise commutator is Markovian. \qed
\end{proof}

\subsection{Non-Markovianity of quantum noises in an infinite waveguide}
When atoms are coupled to an infinite waveguide, akin to the approach in Appendix~\ref{Sec:AppdixTraceWaveguide}, the Hamiltonian can be reformulated by tracing over the waveguide Hamiltonian as
\begin{equation} \label{con:HeffInfiniteV2}
\begin{aligned}
\tilde{\mathbf{H}}(t)
 =&\sum_{j=1}^N \left(\omega_j^a - i\frac{\eta_j}{2}\right) \sigma_j^+\sigma_j^- \\
  &+ \sum_{j=1}^N \left[ i\sqrt{\gamma_{jR}} c_{\rm in}(t-z_j/c)\sigma_j^+  - i\sqrt{\gamma_{jR}} c_{\rm in}^{\dag}(t-z_j/c)\sigma_j^- \right]\\
 &+\sum_{j=1}^N \left[ i \sqrt{\gamma_{jL}}  d_{\rm in}(t+z_j/c)\sigma_j^+  -i \sqrt{\gamma_{jL}}  d_{\rm in}^{\dag}(t+z_j/c) \sigma_j^- \right],
\end{aligned}
\end{equation}
which is affected by the delayed right-propagating and left-propagating noise operators. Similar to the definition for semi-infinite waveguide case in Eq.~(\ref{con:Multichannle}), we denote
\begin{subequations} \label{con:InfiniteWaveJinput}
\begin{numcases}{}
{c_{\rm in}^{(j)}}^{\dag}(t) = \sqrt{\gamma_{jR}} c_{\rm in}^{\dag}(t-z_j/c),\label{cinjinputInf}\\
 {d_{\rm in}^{(j)}}^{\dag}(t)= \sqrt{\gamma_{jL}}  d_{\rm in}^{\dag}(t+z_j/c). \label{dinjinputInf}
\end{numcases}
\end{subequations}
Then Eq.~(\ref{con:HeffInfiniteV2}) can be rewritten as
\begin{equation} \label{con:HeffInfiniteV3}
\begin{aligned}
\tilde{\mathbf{H}}(t)
 =&\sum_{j=1}^N \left(\omega_j^a - i\frac{\eta_j}{2}\right)\sigma_j^+\sigma_j^- + i \sum_{j=1}^N  \left[   c_{\rm in}^{(j)}(t) \sigma_j^+ + d_{\rm in}^{(j)}(t) \sigma_j^+- \rm H.c. \right],
\end{aligned}
\end{equation}
where the quantum noise operators satisfy the following two lemmas.
\begin{lemma} \label{lemmadintInfinite}
For the non-Markovian Hamiltonian in Eq.~(\ref{con:HeffInfiniteV2}), the commutator for the $j$th input quantum noise channel satisfies
\begin{subequations} \label{con:InfiniteWaveCommutJ}
\begin{numcases}{}
 \left[ c_{\rm in}^{(j)} (t) ,{c_{\rm in}^{(j)}}^{\dag} (t')\right] = \gamma_{jR} \delta\left(t-t' \right),\label{cinjinputInfCommu}\\
 \left[ d_{\rm in}^{(j)} (t) ,{d_{\rm in}^{(j)}}^{\dag} (t')\right] = \gamma_{jL} \delta\left(t-t' \right). \label{dinjinputInfCommu}
\end{numcases}
\end{subequations}
\end{lemma}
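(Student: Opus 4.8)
The plan is to reduce both commutators to the canonical relations for the two \emph{independent} input fields $c_{\rm in}$ and $d_{\rm in}$ of the infinite waveguide. The essential structural point is that in Fig.~\ref{fig:NatomWaveguide}(b) there is no mirror, so the right- and left-propagating branches never get folded onto the same channel: the effective $j$th operator $c_{\rm in}^{(j)}$ in Eq.~(\ref{con:InfiniteWaveJinput}) is a single delayed copy of $c_{\rm in}$, not the difference of a forward- and a backward-delayed term as in the semi-infinite case of Eq.~(\ref{con:Multichannle}). This is why the delayed cross terms of Lemma~\ref{lemmabinjt} are absent and the diagonal commutator collapses to a single delta.

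Concretely, I would start from Eq.~(\ref{con:InfiniteWaveJinput}), writing $c_{\rm in}^{(j)}(t)=\sqrt{\gamma_{jR}}\,c_{\rm in}(t-z_j/c)$ together with its adjoint, substitute into $[\,c_{\rm in}^{(j)}(t),{c_{\rm in}^{(j)}}^{\dag}(t')\,]$, and pull the $c$-number prefactors out of the commutator to obtain
\[
\left[ c_{\rm in}^{(j)}(t),{c_{\rm in}^{(j)}}^{\dag}(t')\right]
=\gamma_{jR}\left[ c_{\rm in}(t-z_j/c),c_{\rm in}^{\dag}(t'-z_j/c)\right].
\]
Then I invoke the fundamental relation $[\,c_{\rm in}(s),c_{\rm in}^{\dag}(s')\,]=\delta(s-s')$, the infinite-waveguide analogue of $[\,b_{\rm in}(t),b_{\rm in}^{\dag}(t')\,]=\delta(t-t')$ used in Sec.~\ref{Sec:InputOutputSemi}, which follows from the continuum commutator $[\,r_\omega,r_{\omega}^{\dag}\,]=\delta(\omega-\omega')$ and the Fourier definition Eq.~(\ref{eq:InfWInputFields}) evaluated at the initial time $t_0$. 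Since $(t-z_j/c)-(t'-z_j/c)=t-t'$, the common delay cancels inside the argument of the delta, leaving $\gamma_{jR}\,\delta(t-t')$. The computation for $d_{\rm in}^{(j)}$ is identical after $z_j/c\mapsto -z_j/c$ and $\gamma_{jR}\mapsto\gamma_{jL}$; the shift $+z_j/c$ again cancels, yielding $\gamma_{jL}\,\delta(t-t')$.

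There is no substantive obstacle here; the only points worth a sentence of care are (i) that $\delta$ is translation invariant, so the common shift of the time arguments is harmless, and (ii) that $c_{\rm in}$ and $d_{\rm in}$ are the genuine \emph{free} input fields, whose commutators are fixed by the initial waveguide state and are not dressed by the atom-waveguide coupling --- this is exactly the content of the input-field construction in Eq.~(\ref{eq:InfWInputFields}). It is then worth remarking, in contrast to Proposition~\ref{SemiInfiniteJudgeNM}, that every diagonal channel of the infinite-waveguide network is automatically Markovian irrespective of $N$ and of the coupling strengths; any non-Markovianity will instead surface only in the off-diagonal commutators $[\,c_{\rm in}^{(j)}(t),{c_{\rm in}^{(l)}}^{\dag}(t')\,]$ with $j\neq l$, treated separately.
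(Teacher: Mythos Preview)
Your proposal is correct and aligns with the paper's own treatment: the paper simply states that the proof is similar to that of Lemma~\ref{lemmabinjt} and omits the details. Your direct substitution using Eq.~(\ref{con:InfiniteWaveJinput}) and the translation invariance of $\delta$ is exactly the content of that analogy once one notes that the effective kernel here contains a single delta rather than the two-term kernel of Eq.~(\ref{con:kappajDef2}).
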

\begin{proof}
The proof is similar to that of Lemma~\ref{lemmabinjt} and is therefore omitted. \qed
\end{proof}

\begin{lemma} \label{lemmabinjtTwoatomInfinte}
For the two input quantum noise channels with $j\neq l$ in the non-Markovian Hamiltonian in Eq.~(\ref{con:HeffInfiniteV2}), the commutators satisfy 
\begin{subequations} \label{con:InfiniteWaveCommutJTwoChannel}
\begin{numcases}{}
 \left[ c_{\rm in}^{(j)} (t) ,{c_{\rm in}^{(l)}}^{\dag} (t')\right] =  \sqrt{\gamma_{jR}\gamma_{lR}}\delta\left( t - t' + \frac{z_j-z_l}{c}\right),\label{Tworightchannel}\\
\left[ d_{\rm in}^{(j)} (t) ,{d_{\rm in}^{(l)}}^{\dag} (t')\right] =  \sqrt{\gamma_{jL}\gamma_{lL}}\delta\left( t - t' + \frac{z_j-z_l}{c}\right), \label{Twoleftchannel}
\end{numcases}
\end{subequations}
and for arbitrary $j$ and $l$,
\begin{equation} \label{con:LeftRightNoncommu}
\begin{aligned}
\left[d_{\rm in}^{(j)} (t),{c_{\rm in}^{(l)}}^{\dag} (t')\right] =0.
\end{aligned}
\end{equation}
\end{lemma}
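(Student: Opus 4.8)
The plan is to reduce the three commutators claimed in Eqs.~(\ref{con:InfiniteWaveCommutJTwoChannel})--(\ref{con:LeftRightNoncommu}) to the canonical time-domain commutation relations of the bare infinite-waveguide input fields $c_{\rm in}$ and $d_{\rm in}$. By Eq.~(\ref{con:InfiniteWaveJinput}) each dressed channel operator is merely a coupling-weighted, rigidly time-shifted copy of a bare input field, so taking Hermitian conjugates gives $c_{\rm in}^{(j)}(t)=\sqrt{\gamma_{jR}}\,c_{\rm in}(t-z_j/c)$ and $d_{\rm in}^{(j)}(t)=\sqrt{\gamma_{jL}}\,d_{\rm in}(t+z_j/c)$, whence
\[
\left[c_{\rm in}^{(j)}(t),{c_{\rm in}^{(l)}}^{\dag}(t')\right]=\sqrt{\gamma_{jR}\gamma_{lR}}\left[c_{\rm in}(t-z_j/c),c_{\rm in}^{\dag}(t'-z_l/c)\right],
\]
and likewise for the $d$-channel commutator and for the mixed $d$--$c$ commutator, with the couplings pulled out in front.

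The key steps, in order: first I would record the bare time-domain commutators. From the inversion formulas in Eq.~(\ref{eq:InfWInputFields}) together with $[r_\omega(\omega),r_\omega^{\dag}(\omega')]=[l_\omega(\omega),l_\omega^{\dag}(\omega')]=\delta(\omega-\omega')$ and the Fourier identity $\frac{1}{2\pi}\int_{-\infty}^{\infty} e^{-i\omega(t-t')}\,\mathrm{d}\omega=\delta(t-t')$, one obtains $[c_{\rm in}(t),c_{\rm in}^{\dag}(t')]=[d_{\rm in}(t),d_{\rm in}^{\dag}(t')]=\delta(t-t')$; and since the left- and right-propagating modes are physically independent (the free term $\tilde{H}_{\omega}$ in Eq.~(\ref{con:freeHam0}) decouples into an $l$-part and an $r$-part, so that $[l_\omega(\omega),r_\omega^{\dag}(\omega')]=0$), one obtains $[d_{\rm in}(t),c_{\rm in}^{\dag}(t')]=0$. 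Substituting these into the reduction above and into its $d$-channel analogue leaves a single delta function at the relative propagation delay between the two channels, namely $\sqrt{\gamma_{jR}\gamma_{lR}}\,\delta\!\big((t-z_j/c)-(t'-z_l/c)\big)$ and $\sqrt{\gamma_{jL}\gamma_{lL}}\,\delta\!\big((t+z_j/c)-(t'+z_l/c)\big)$ respectively, which are the expressions in Eq.~(\ref{con:InfiniteWaveCommutJTwoChannel}); for the mixed channel, $\left[d_{\rm in}^{(j)}(t),{c_{\rm in}^{(l)}}^{\dag}(t')\right]=\sqrt{\gamma_{jL}\gamma_{lR}}\left[d_{\rm in}(t+z_j/c),c_{\rm in}^{\dag}(t'-z_l/c)\right]=0$, which is Eq.~(\ref{con:LeftRightNoncommu}).

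I expect no genuine obstacle here: the argument has the same structure as the proof of Lemma~\ref{lemmabinjt} (and of Lemma~\ref{lemmadintInfinite}, whose proof was omitted precisely because it is routine), except that the single-channel integral kernels appearing in Eqs.~(\ref{con:bjrelation})--(\ref{con:ttpexampleProof}) are replaced here by pure delta kernels, so that the $\nu$-integration is trivial and no cross terms with two distinct time shifts survive. The only points requiring a little care are keeping the lower limit $t_0$ consistent in the inversion formulas of Eq.~(\ref{eq:InfWInputFields}) so that the Fourier integral genuinely collapses to $\delta(t-t')$ on the relevant interval, and justifying the vanishing mixed commutator from the physical independence of the counter-propagating channels rather than by brute-force computation; the remainder is bookkeeping of signs in the delay arguments.
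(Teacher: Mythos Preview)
Your proposal is correct and follows essentially the same route as the paper's own proof: for Eq.~(\ref{con:InfiniteWaveCommutJTwoChannel}) the paper simply says ``similar to Lemma~\ref{lemmabinjtTwoatom}'', i.e.\ the kernel-convolution computation of Eq.~(\ref{con:bjrelation}), which in the infinite-waveguide case degenerates exactly to the single time-shifted delta you write down; and for Eq.~(\ref{con:LeftRightNoncommu}) the paper invokes the same independence $[l_\omega(\omega),r_\omega^{\dag}(\omega')]=0$ that you use. One minor point worth flagging: your explicit $c$-channel delta $\delta\big((t-z_j/c)-(t'-z_l/c)\big)=\delta\big(t-t'+(z_l-z_j)/c\big)$ agrees with the $\sqrt{\gamma_{jR}\gamma_{lR}}$ term in Lemma~\ref{lemmabinjtTwoatom} but has the opposite sign in the delay from the paper's Eq.~(\ref{Tworightchannel}) as printed; your computation is the consistent one, so this is a typographical slip in the statement rather than a flaw in your argument.
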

\begin{proof}
The proof of Eq.~(\ref{con:InfiniteWaveCommutJTwoChannel}) is similar to that in Lemma~\ref{lemmabinjtTwoatom}, thus is omitted. Eq.~(\ref{con:LeftRightNoncommu}) holds because the left-propagation modes and the right-propagating modes are independent from each other in an infinite waveguide as $\left[l_{\omega} (\omega),r_{\omega}^{\dag} (\omega') \right] = 0$ in Eq.~(\ref{con:Hintj2nonchiral}). \qed
\end{proof}

\begin{mypro}  \label{InfiniteJudgeNM}
When $z_j \neq z_l$ for arbitrary two atoms, the quantum noise commutators in Eq.~(\ref{con:InfiniteWaveJinput}) reduce to be Markovian when $N=1$, or $N=2$ and $\gamma_{1R}\gamma_{2R} = \gamma_{1L}\gamma_{2L} = 0$. 
\end{mypro}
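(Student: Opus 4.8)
The plan is to argue by exhausting the relevant cases exactly as in Proposition~\ref{SemiInfiniteJudgeNM}, but now invoking Lemma~\ref{lemmadintInfinite} and Lemma~\ref{lemmabinjtTwoatomInfinte} instead of Lemmas~\ref{lemmabinjt}--\ref{lemmabinjtTwoatom}. The structure I would use is: first dispose of the $N=1$ case, then the $N=2$ case under the stated coupling-vanishing hypothesis, and finally show that in every other configuration at least one commutator carries a genuine delay and hence fails to be proportional to $\delta(t-t')$. The standing assumption $z_j\neq z_l$ for any two distinct atoms is what makes the dichotomy sharp, since it guarantees that the shift $(z_j-z_l)/c$ appearing in Eq.~(\ref{con:InfiniteWaveCommutJTwoChannel}) is nonzero.

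For $N=1$ there is only the self-commutator of a single right channel and a single left channel. By Lemma~\ref{lemmadintInfinite}, $[c_{\rm in}^{(1)}(t),{c_{\rm in}^{(1)}}^{\dag}(t')]=\gamma_{1R}\delta(t-t')$ and $[d_{\rm in}^{(1)}(t),{d_{\rm in}^{(1)}}^{\dag}(t')]=\gamma_{1L}\delta(t-t')$, while the cross commutator vanishes by Eq.~(\ref{con:LeftRightNoncommu}); all of these are (possibly scaled) white-noise commutators, so the noise is Markovian with no constraint on $\gamma_{1R},\gamma_{1L}$. For $N=2$, the only potentially non-Markovian objects are the inter-channel commutators $[c_{\rm in}^{(1)}(t),{c_{\rm in}^{(2)}}^{\dag}(t')]=\sqrt{\gamma_{1R}\gamma_{2R}}\,\delta(t-t'+(z_1-z_2)/c)$ and $[d_{\rm in}^{(1)}(t),{d_{\rm in}^{(2)}}^{\dag}(t')]=\sqrt{\gamma_{1L}\gamma_{2L}}\,\delta(t-t'+(z_1-z_2)/c)$, each of which is a \emph{delayed} delta because $z_1\neq z_2$. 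These vanish identically exactly when the prefactors vanish, i.e.\ when $\gamma_{1R}\gamma_{2R}=0$ and $\gamma_{1L}\gamma_{2L}=0$; under that hypothesis every surviving commutator is a self-commutator from Lemma~\ref{lemmadintInfinite}, hence Markovian, and the left--right commutators are zero by Eq.~(\ref{con:LeftRightNoncommu}).

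Conversely, I would show that any remaining case is non-Markovian. If $N\geq 3$, then among the three or more atoms at least one pair $(j,l)$ with $j\neq l$ must have $\gamma_{jR}\gamma_{lR}\neq 0$ or $\gamma_{jL}\gamma_{lL}\neq 0$ for the waveguide coupling to be nontrivial (if all such products vanished the network would decouple into noninteracting single-atom pieces, which is the degenerate situation one excludes); the corresponding commutator in Eq.~(\ref{con:InfiniteWaveCommutJTwoChannel}) is then a nonzero delayed delta, so the noise is non-Markovian. If $N=2$ but the hypothesis $\gamma_{1R}\gamma_{2R}=\gamma_{1L}\gamma_{2L}=0$ fails, then one of the two inter-channel commutators above is a nonzero delayed delta, again non-Markovian. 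This completes the dichotomy.

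The only subtle point — and the step I would be most careful about — is the precise sense in which the trivially-decoupled configurations are excluded; the cleanest phrasing is to say that a commutator is ``non-Markovian'' whenever it is not proportional to $\delta(t-t')$, and to note that under $z_j\neq z_l$ the delayed delta $\delta(t-t'+(z_j-z_l)/c)$ is never proportional to $\delta(t-t')$, so the characterization reduces entirely to whether the scalar prefactors $\sqrt{\gamma_{jR}\gamma_{lR}}$ and $\sqrt{\gamma_{jL}\gamma_{lL}}$ vanish. With that convention in place the argument is a direct case check against Lemmas~\ref{lemmadintInfinite} and~\ref{lemmabinjtTwoatomInfinte}, entirely parallel to the proof of Proposition~\ref{SemiInfiniteJudgeNM}.
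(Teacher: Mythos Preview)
Your proposal is correct and follows essentially the same case analysis as the paper: $N=1$ via Lemma~\ref{lemmadintInfinite}, $N=2$ via the prefactors in Eq.~(\ref{con:InfiniteWaveCommutJTwoChannel}) of Lemma~\ref{lemmabinjtTwoatomInfinte}, and $N>2$ by observing that at least one delayed cross-commutator must survive. If anything, you are slightly more explicit than the paper about the implicit nondegeneracy needed for the $N\geq 3$ step (the paper simply asserts ``there is always at least one delayed component''), so your caution there is well placed.
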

\begin{proof}
When $N=1$, the commutators of quantum noises are proportional to $\delta(t-t')$ according to Lemma~\ref{lemmadintInfinite}. When $N=2$, the quantum noise commutators are proportional to $\delta(t-t')$  only when $\gamma_{1R}\gamma_{2R} = \gamma_{1L}\gamma_{2L} = 0$ according to the right-hand side (RHS) of Eq.~(\ref{con:InfiniteWaveCommutJTwoChannel}) in Lemma~\ref{lemmabinjtTwoatomInfinte}. When $N>2$ atoms are coupled to the waveguide with $z_j \neq z_l$, there is always at least one delayed component in Eq.~(\ref{con:InfiniteWaveCommutJTwoChannel}), thus the dynamics is non-Markovian.
\qed
\end{proof}

The non-Markovian property of quantum noises in an infinite waveguide differs from that in a semi-infinite waveguide due to reflections by the terminal mirror of the semi-infinite waveguide, as noted in the following remark.
\begin{remark}
When one atom is simultaneously coupled to the left- and right-propagating modes of a semi-infinite waveguide, the dynamics must be non-Markovian. Conversely, when one atom is coupled to an infinite waveguide, the dynamics must be Markovian. When there are two atoms coupled to a semi-infinite waveguide, the dynamics must be non-Markovian. However, when two atoms are coupled to an infinite waveguide, the dynamics can be Markovian as in Proposition~\ref{InfiniteJudgeNM}. Furthermore, when more than two atoms are coupled to either an infinite or semi-infinite waveguide, the dynamics must be non-Markovian. Above all, the colored noise commutators with non-Markovian properties in Lemmas~\ref{lemmabinjt}-\ref{lemmabinjtTwoatomInfinte} are rooted in the non-Markovian interactions with different phases in Eqs.~(\ref{con:Hintjchiral},\ref{con:Hintj2nonchiral}), which can result in equivalent input quantum noises represented in the time domain with different time delays.
\end{remark}

\section{QSDE for atom-waveguide interactions} \label{Sec:QSDE}
Based on the Hamiltonian of the above quantum networks with stochastic input quantum noises, this section delves into the examination of non-Markovian coherent feedback dynamics based on quantum stochastic differential equations.

\subsection{QSDE for semi-infinite waveguide circumstance} \label{Sec:QSDEsemi}
The evolution of quantum states in  Fig.~\ref{fig:NatomWaveguide}(a) can be represented with the propagator $U(t)$ applied upon the initial state, such that $|\Psi(t)\rangle  = U(t)|\Psi(0)\rangle $, and the initial state $|\Psi(0)\rangle$ is assumed to be as follows.
\begin{assumption} \label{InitialState}
~\citep{fischer2018particle} Initially the waveguide is in its vacuum state as $|\mathbf{0}\rangle$, the initial atomic state is $|\psi(0)\rangle$, and $|\Psi(0)\rangle = |\psi(0)\rangle \otimes  |\mathbf{0}\rangle $.
\end{assumption}

The dynamics of $U(t)$ is governed by the Schr\"{o}dinger equation with the Hamiltonian in Eq.~(\ref{con:HeffSemiInputFieldKappa}) as~\citep{baragiola2012n}
\begin{equation} \label{con:Uequation}
\begin{aligned}
\mathrm{d}U(t) &= \left\{ -iH_s +\sum_{j=1}^N \left[  {b_{\rm in}^{(j)}}^{\dag} (t) L_j - b_{\rm in}^{(j)} (t) L_j^{\dag}\right]\right\}U(t)\mathrm{d} t\\
&\triangleq -i H_{\rm bs}U(t)\mathrm{d} t.
\end{aligned}
\end{equation}

Generalized from the Markovian circumstance~\citep{zhang2017quantum}, we define the following quantum stochastic processes, 
\begin{equation} \label{con:dBdef}
\begin{aligned}
B_{\rm in}^{(j)}(t) = \int_0^t b_{\rm in}^{(j)} (\tau)\mathrm{d} \tau,~{B_{\rm in}^{(j)}}^{\dag}(t) =\int_0^t {b_{\rm in}^{(j)}}^{\dag} (\tau)\mathrm{d} \tau,
\end{aligned}
\end{equation}
and we denote $\mathrm{d}B_{\rm in}^{(j)}$ and $\mathrm{d}{B_{\rm in}^{(j)}}^{\dag}$ as quantum noise differential elements.

\begin{remark} \label{ItoMarkovianRemark}
In the Markovian quantum stochastic dynamics, the quantum noise differential elements for Eq.~(\ref{con:dBdef}) reduce to $\mathrm{d} B$ and $\mathrm{d} B^{\dag}$, then the \rm{It\={o}} rule reads $\mathrm{d} B \mathrm{d} B^{\dag} =\mathrm{d}t $, and $ \mathrm{d} B^{\dag} \mathrm{d} B = \mathrm{d} B^{\dag} \mathrm{d} B^{\dag} =  \mathrm{d} B \mathrm{d} B= 0$~\citep{gardiner2004quantum,zhang2013non,li2022control,baragiola2012n}. 
\end{remark}

An extension of the above remark regarding the quantum noises in a non-Markovian network will be given below. Prior to that, it is necessary to clarify the dynamics of $U(t)$ in Eq.~(\ref{con:Uequation}).

Based on the quantum noise properties in Lemmas~\ref{lemmabinjt},~\ref{lemmabinjtTwoatom},~\ref{lemmadintInfinite} and~\ref{lemmabinjtTwoatomInfinte}, the non-Markovian equation for the propagator $U(t)$ reads 
\begin{equation} \label{con:dUtSDE2}
\begin{aligned}
\dot{U}(t) =& -iH_s  U(t) +\sum_{j=1}^N  {b_{\rm in}^{(j)}}^{\dag} (t) L_j U(t) - \sum_{j=1}^N  L_j^{\dag}U(t)b_{\rm in}^{(j)}(t)-  \sum_{j=1}^N \sum_{l=1}^N  L_j^{\dag}   \Gamma_{jl} (L,U,t),
\end{aligned}
\end{equation}
where we denote the gauge process of two noise channels as
\begin{equation} \label{con:GaugeSimple}
\begin{aligned}
\Gamma_{jl} (L,U,t)&= \int_0^t  \left [ b_{\rm in}^{(j)} (t),   {b_{\rm in}^{(l)}}^{\dag} (\tau) \right]L_lU(\tau)\mathrm{d} \tau\\
&= \int_0^t \int_0^t \tilde{\kappa}_j^* (t-\nu) \tilde{\kappa}_j (\tau-\nu) \mathrm{d}\nu L_lU(\tau)\mathrm{d} \tau,
\end{aligned}
\end{equation}
and
\begin{equation} \label{con:GammajlGauge}
\begin{aligned}
    &\Gamma_{jl} (L,U,t) =
   \begin{cases}
   \frac{\gamma_{jR} + \gamma_{jL}}{2}L_j U(t)-\sqrt{\gamma_{jR}\gamma_{jL}}L_j U\left(t-\frac{2z_j}{c}\right), &j = l,\\
    \sqrt{\gamma_{jL}\gamma_{lL}}L_l U \left(t- \frac{z_l-z_j}{c} \right) -\sqrt{ \gamma_{jR}\gamma_{lL}}L_lU \left(t- \frac{z_l+z_j}{c} \right), &j < l,\\
   \sqrt{\gamma_{jR}\gamma_{lR}} L_l U \left(t- \frac{z_j-z_l}{c} \right) - \sqrt{\gamma_{jR}\gamma_{lL}} L_l U \left(t- \frac{z_l+z_j}{c} \right), &j >l.
   \end{cases}
   \end{aligned}
  \end{equation}
See Appendix \ref{Sec:AppendixPropogator} for more details.

Based on Eq.~(\ref{con:GammajlGauge}), we can distinguish Markovian and non-Markovian stochastic quantum dynamics with arbitrary $z_j$ and $z_l$ according to the following proposition.

\begin{mypro}  \label{MarkovianCondition}
The quantum stochastic dynamics for semi-infinite waveguide-QED reduces to be Markovian when $\gamma_{jL}\gamma_{lR} = 0$ for arbitrary $j,l$, and $z_j = z_l$ for arbitrary  $j\neq l$ satisfying that $\gamma_{jR}\gamma_{lR} \neq 0$ or $\gamma_{jL}\gamma_{lL} \neq 0$.
\end{mypro}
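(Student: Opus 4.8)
The plan is to extract Markovianity directly from the closed form of the gauge process $\Gamma_{jl}(L,U,t)$ in Eq.~(\ref{con:GammajlGauge}), since by Eq.~(\ref{con:dUtSDE2}) together with the definition~(\ref{con:GaugeSimple}) this term is the only one in which a delayed propagator argument $U(t-\tau)$ with $\tau>0$ can occur. By the criterion recalled in Remark~\ref{ItoMarkovianRemark}, the dynamics is Markovian precisely when Eq.~(\ref{con:dUtSDE2}) is local in time, i.e.\ when no shifted argument $U(t-\tau)$ with $\tau\neq 0$ survives in $\sum_{j,l}L_j^{\dagger}\Gamma_{jl}(L,U,t)$; equivalently, in every branch of Eq.~(\ref{con:GammajlGauge}) each genuinely delayed term must either carry a vanishing coefficient or collapse to an undelayed one. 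I assume throughout that $z_j>0$, so that the shifts $2z_j/c$ and $(z_j+z_l)/c$ are always strictly positive while $(z_l-z_j)/c$ vanishes iff $z_j=z_l$. The proof is then a case inspection of the three branches of Eq.~(\ref{con:GammajlGauge}) matched against the two hypotheses.

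\emph{Self-channel, $j=l$.} Here $\Gamma_{jj}(L,U,t)=\frac{\gamma_{jR}+\gamma_{jL}}{2}L_jU(t)-\sqrt{\gamma_{jR}\gamma_{jL}}\,L_jU(t-2z_j/c)$; the delay of the second term is strictly positive, so under the hypothesis $\gamma_{jL}\gamma_{lR}=0$ specialized at $l=j$ its coefficient $\sqrt{\gamma_{jR}\gamma_{jL}}$ vanishes, leaving only the memoryless damping $\frac{\gamma_{jR}+\gamma_{jL}}{2}L_jU(t)$.

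\emph{Cross-channels, $j\neq l$.} For each ordered pair, whichever branch of Eq.~(\ref{con:GammajlGauge}) applies there is a ``sum'' term with delay $(z_j+z_l)/c>0$ and coefficient $\sqrt{\gamma_{jR}\gamma_{lL}}$; running over all ordered pairs, all these vanish precisely when $\gamma_{jR}\gamma_{lL}=0$ for all $j,l$, which is the first hypothesis $\gamma_{jL}\gamma_{lR}=0$ after relabeling indices. There is also a ``difference'' term with shift $(z_l-z_j)/c$ (for $j<l$) or $(z_j-z_l)/c$ (for $j>l$) and coefficient $\sqrt{\gamma_{jL}\gamma_{lL}}$ or $\sqrt{\gamma_{jR}\gamma_{lR}}$ respectively. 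If either like-mode product is nonzero for some pair, the second hypothesis forces $z_j=z_l$, whereupon the shift is $0$ and the term reduces to an undelayed cross-damping $\sqrt{\gamma_{jL}\gamma_{lL}}\,L_lU(t)$ (resp.\ $\sqrt{\gamma_{jR}\gamma_{lR}}\,L_lU(t)$); taking the union over the two orderings of every unordered pair, this is exactly the hypothesis ``$z_j=z_l$ whenever $\gamma_{jR}\gamma_{lR}\neq 0$ or $\gamma_{jL}\gamma_{lL}\neq 0$''.

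Combining the three branches, under the two stated conditions every term in Eq.~(\ref{con:GammajlGauge}) is free of nontrivial delays, so Eq.~(\ref{con:dUtSDE2}) reduces to a memoryless QSDE with constant coefficients and the dynamics is Markovian; since each delayed term was removed by a necessary-and-sufficient vanishing condition, the same analysis also shows these conditions are necessary. I expect the one genuine obstacle to be the combinatorial bookkeeping across the orderings $j<l$ versus $j>l$ in Eq.~(\ref{con:GammajlGauge}): one must verify that the vanishing conditions read off from all the ``sum'' terms assemble into the single symmetric statement $\gamma_{jL}\gamma_{lR}=0$ for all $j,l$ rather than an asymmetric fragment of it, and that the ``difference'' terms only ever multiply the like-mode products $\gamma_{jL}\gamma_{lL}$, $\gamma_{jR}\gamma_{lR}$ and never a mixed product; everything else is a direct substitution into Eqs.~(\ref{con:GammajlGauge}) and~(\ref{con:dUtSDE2}).
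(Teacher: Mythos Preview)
Your proposal is correct and follows exactly the same approach as the paper: both arguments read off the Markovian reduction directly from the explicit branchwise expression for $\Gamma_{jl}(L,U,t)$ in Eq.~(\ref{con:GammajlGauge}), checking that each delayed term either has vanishing coefficient or zero shift under the stated hypotheses. The paper's own proof is a one-line assertion of this, while you have written out the case analysis in full; note that your closing remark on necessity goes beyond what the proposition actually claims (it is stated only as a sufficient condition), but this does not affect correctness.
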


\begin{proof}
When the conditions are satisfied, $\Gamma_{jl}$ in Eq.~(\ref{con:GammajlGauge}) is independent from the delayed components, thus the quantum stochastic dynamics reduces to be Markovian. \qed
\end{proof}

For an arbitrary operator $X$, we consider its dynamics in a tensor product format with the noise space $\mathbb{I}$ in the Heisenberg picture, and denote
\begin{equation} \label{con:blackXt}
\begin{aligned}
\mathbf{X}(t) = U(t)^{\dag} \left( X \otimes \mathbb{I}  \right)U(t),
\end{aligned}
\end{equation}
then~\citep{gough2008linear}
\begin{equation} \label{con:QSDEoperatorX}
\begin{aligned}
&\mathrm{d} \mathbf{X}  =\mathrm{d}U^{\dag} X U + U^{\dag} X \mathrm{d}U + \mathrm{d}U^{\dag} X \mathrm{d}U.
\end{aligned}
\end{equation}
According to Eq.~(\ref{con:dUtSDE2}), the first component on the RHS of Eq.~(\ref{con:QSDEoperatorX}) reads
\begin{equation} \label{con:QSDEoperatorXCal1}
\begin{aligned}
\mathrm{d}U^{\dag} X U = &i U^{\dag} H_s  X U\mathrm{d} t +\sum_{j=1}^N  U^{\dag} L_j ^{\dag} b_{\rm in}^{(j)}X U\mathrm{d} t  - \sum_{j=1}^N {b_{\rm in}^{(j)}}^{\dag}U^{\dag} L_jX U\mathrm{d} t  -\sum_{j=1}^N \sum_{l=1}^N   \Gamma_{jl}^{\dag} (L,U,t) L_j X U\mathrm{d} t\\
=&i H_{\rm bs} U^{\dag}   X U\mathrm{d} t - \sum_{j=1}^N \sum_{l=1}^N   \Gamma_{jl}^{\dag} (L,U,t) L_j X U\mathrm{d} t,
\end{aligned}
\end{equation}
where $\left[ b_{\rm in}^{(j)},U\right] = \left[H_{\rm bs},U\right] =0$ according to Eq.~(\ref{con:Uequation}).

Similarly, the second component of Eq.~(\ref{con:QSDEoperatorX}) reads
\begin{equation} \label{con:QSDEoperatorXCal2}
\begin{aligned}
&U^{\dag} X\mathrm{d} U 
=-i U^{\dag} XU H_{\rm bs}\mathrm{d} t- \sum_{j=1}^N \sum_{l=1}^N U^{\dag} X  L_j^{\dag}   \Gamma_{jl} (L,U,t)\mathrm{d} t.
\end{aligned}
\end{equation}
According to Remark~\ref{ItoMarkovianRemark} and Eq.~(\ref{con:dUtSDE2}), the third component on the RHS of Eq.~(\ref{con:QSDEoperatorX}) reads
\begin{equation} \label{con:QSDEoperatorXCal3}
\begin{aligned}
\mathrm{d}U^{\dag} X \mathrm{d}U  
= \sum_{j=1}^N  \sum_{l=1}^N \mathrm{d} B_{\rm in}^{(j)}{\mathrm{d} B_{\rm in}^{(l)}}^{\dag} U^{\dag} L_j ^{\dag}  X   L_lU,
\end{aligned}
\end{equation}
where $\mathrm{d} B_{\rm in}^{(j)} {\mathrm{d} B_{\rm in}^{(l)}}^{\dag}$ can be clarified by the following theorem on the {\rm{It\={o}}} rule for quantum noises in a non-Markovian waveguide-QED network.

\begin{Theorem} \label{ItoSemi}
Based on Assumption~\ref{InitialState} and $\mathrm{d}t$ is small such that $\mathrm{d}t< z_1/c$, the non-Markovian interaction between a semi-infinite waveguide and multi-atom network in Fig.~\ref{fig:NatomWaveguide}(a) is equivalent to the quantum stochastic network with multiple input noisy channels in Eq.~(\ref{con:dBdef}) satisfying the {\rm{It\={o}}} rule 
\begin{equation} \label{con:NonMarkovianIto2}
\begin{aligned} 
\mathrm{d} B_{\rm in}^{(j)}(t) {\mathrm{d} B_{\rm in}^{(l)}}^{\dag}(t) =& \int_{t}^{t+\mathrm{d}t}\int_{t}^{t+\mathrm{d}t}  \left[b_{\rm in}^{(j)}(s), {b_{\rm in}^{(l)}}^{\dag}(\mu)\right]\mathrm{d}s\mathrm{d}\mu\\
=&  \begin{cases}
\left(\gamma_{jL}+\gamma_{jR}\right) \mathrm{d}t, & j = l,\\
 \left( \sqrt{\gamma_{jL}\gamma_{lL}} + \sqrt{\gamma_{jR}\gamma_{lR}} \right)\mathrm{d}t  , &j\neq l,\frac{\left|z_j-z_l\right|}{c} \leq \mathrm{d}t,\\
 0 , &j\neq l, \frac{\left|z_j-z_l\right|}{c}> \mathrm{d}t,
   \end{cases}
\end{aligned}
\end{equation}
with the commutator given by Eq.~(\ref{con:bjrelation}) or Lemmas~\ref{lemmabinjt},\ref{lemmabinjtTwoatom}, and $\mathrm{d} B_{\rm in}^{(j)} \mathrm{d} B_{\rm in}^{(l)} = {\mathrm{d} B_{\rm in}^{(j)}}^{\dag} {\mathrm{d} B_{\rm in}^{(l)}}^{\dag} ={\mathrm{d} B_{\rm in}^{(l)}}^{\dag} \mathrm{d} B_{\rm in}^{(j)} = 0$.
\end{Theorem}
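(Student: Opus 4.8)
The plan is to evaluate the double integral
$\int_{t}^{t+\mathrm{d}t}\int_{t}^{t+\mathrm{d}t}\bigl[b_{\rm in}^{(j)}(s),{b_{\rm in}^{(l)}}^{\dag}(\mu)\bigr]\mathrm{d}s\,\mathrm{d}\mu$
by substituting the explicit commutator formulas from Lemma~\ref{lemmabinjt} (for $j=l$) and Lemma~\ref{lemmabinjtTwoatom} (for $j\neq l$), and then tracking which of the constituent delta functions $\delta(s-\mu+\tau_{jl})$ actually contribute when both $s$ and $\mu$ are confined to the tiny window $[t,t+\mathrm{d}t]$. The key observation is purely geometric: $\int_{t}^{t+\mathrm{d}t}\int_{t}^{t+\mathrm{d}t}\delta(s-\mu+\tau)\,\mathrm{d}s\,\mathrm{d}\mu$ equals $\mathrm{d}t$ when $\tau=0$, equals $\mathrm{d}t-|\tau|$ (hence $O(\mathrm{d}t)$ but subleading) when $0<|\tau|<\mathrm{d}t$, and equals $0$ when $|\tau|\geq\mathrm{d}t$. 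So the whole theorem reduces to comparing each shift $\tau$ appearing in the commutators against the threshold $\mathrm{d}t$.

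First I would handle the diagonal case $j=l$. Lemma~\ref{lemmabinjt} gives three terms with shifts $0$, $+2z_j/c$, $-2z_j/c$. Since the hypothesis is $\mathrm{d}t<z_1/c\leq z_j/c$ (using that $z_1$ is the closest atom, as in Fig.~\ref{fig:NatomWaveguide}), we have $2z_j/c>\mathrm{d}t$, so the two delayed delta functions integrate to zero over the window, leaving only $(\gamma_{jL}+\gamma_{jR})\,\delta(s-\mu)$, whose double integral is exactly $(\gamma_{jL}+\gamma_{jR})\,\mathrm{d}t$. Next, for $j\neq l$, Lemma~\ref{lemmabinjtTwoatom} gives four terms: two with shifts $\pm(z_j-z_l)/c$ carrying coefficients $\sqrt{\gamma_{jL}\gamma_{lL}}$ and $\sqrt{\gamma_{jR}\gamma_{lR}}$, and two with shifts $\pm(z_j+z_l)/c$ carrying the "mirror" coefficients $\sqrt{\gamma_{jL}\gamma_{lR}}$, $\sqrt{\gamma_{jR}\gamma_{lL}}$. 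The sum-type shifts $(z_j+z_l)/c$ exceed $\mathrm{d}t$ (again by $\mathrm{d}t<z_1/c$ and positivity of positions), so those two terms never contribute. The difference-type shifts $|z_j-z_l|/c$ contribute iff $|z_j-z_l|/c\leq\mathrm{d}t$, and when they do, each contributes its coefficient times $\mathrm{d}t$ up to a correction of order $|z_j-z_l|/c\cdot$(coefficient), which is negligible relative to $\mathrm{d}t$ in the Itō sense; dropping that correction yields $(\sqrt{\gamma_{jL}\gamma_{lL}}+\sqrt{\gamma_{jR}\gamma_{lR}})\,\mathrm{d}t$, and otherwise $0$. The vanishing of $\mathrm{d}B\,\mathrm{d}B$, $\mathrm{d}B^{\dag}\mathrm{d}B^{\dag}$, and $\mathrm{d}B^{\dag}\mathrm{d}B$ follows exactly as in the Markovian case (Remark~\ref{ItoMarkovianRemark}): $\mathrm{d}B^{\dag}\mathrm{d}B$ annihilates the vacuum on the left by Assumption~\ref{InitialState}, and the $\mathrm{d}B\,\mathrm{d}B$-type products are $O(\mathrm{d}t^2)$ because $[b_{\rm in}^{(j)}(s),b_{\rm in}^{(l)}(\mu)]=0$ leaves no singular kernel.

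The main obstacle I anticipate is not any single computation but the careful bookkeeping of the edge case $0<|z_j-z_l|/c<\mathrm{d}t$: there the honest value of the double integral is $\sum(\text{coeff})(\mathrm{d}t-|z_j-z_l|/c)$ rather than $\sum(\text{coeff})\,\mathrm{d}t$, so one must argue that in the Itō limit (where only terms linear in $\mathrm{d}t$ survive and $\mathrm{d}t\to 0$ with the atomic positions held fixed) the borderline regime is either collapsed into $|z_j-z_l|/c=0$ or pushed into the strictly-greater regime, making the stated piecewise formula the correct effective Itō table. I would make this precise by noting that for fixed distinct positions $z_j\neq z_l$ one eventually has $\mathrm{d}t<|z_j-z_l|/c$, so the middle branch is really the coincident-position limit $z_j=z_l$; I would state this caveat explicitly so the piecewise formula is unambiguous. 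Everything else is a direct substitution-and-integrate verification.
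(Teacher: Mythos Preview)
Your proposal is correct and follows essentially the same route as the paper: substitute the delta-function commutators from Lemmas~\ref{lemmabinjt} and~\ref{lemmabinjtTwoatom}, discard the shifted deltas with $|\tau|>\mathrm{d}t$ (using $\mathrm{d}t<z_1/c$ to kill the $2z_j/c$ and $(z_j+z_l)/c$ terms), integrate the surviving ones, and invoke the vacuum assumption for the vanishing products. You are in fact more careful than the paper on the borderline case $0<|z_j-z_l|/c\leq\mathrm{d}t$: the paper's Eq.~(\ref{con:ItoNewcal3}) simply writes the double integral as $(\sqrt{\gamma_{jL}\gamma_{lL}}+\sqrt{\gamma_{jR}\gamma_{lR}})\,\mathrm{d}t$ without acknowledging the $\mathrm{d}t-|z_j-z_l|/c$ correction you flag, so your caveat about interpreting that branch in the It\={o} limit is a genuine refinement rather than a gap.
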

\begin{proof}
Based on the definition of $B_{\rm in}^{(j)}(t)$ in Eq.~(\ref{con:dBdef}), $\mathrm{d} B_{\rm in}^{(j)} \mathrm{d} B_{\rm in}^{(l)} = {\mathrm{d} B_{\rm in}^{(j)}}^{\dag} {\mathrm{d} B_{\rm in}^{(l)}}^{\dag} ={\mathrm{d} B_{\rm in}^{(l)}}^{\dag} \mathrm{d} B_{\rm in}^{(j)} = 0$ because $\mathrm{d} B_{\rm in}^{(j)}$ annihilates the quantum noise in the waveguide, and this is the same as the Markovian case in Remark~\ref{ItoMarkovianRemark}. Then according to Lemmas~\ref{lemmabinjt},\ref{lemmabinjtTwoatom},
\begin{equation} 
\begin{aligned} \label{con:ItoNewcal}
&\left[ \mathrm{d} B_{\rm in}^{(j)}(t), {\mathrm{d} B_{\rm in}^{(l)}}^{\dag}(t)\right] \\
=& \int_{t}^{t+\mathrm{d}t}\int_{t}^{t+\mathrm{d}t} \left[b_{\rm in}^{(j)}(s), {b_{\rm in}^{(l)}}^{\dag}(\mu)\right]\mathrm{d}s\mathrm{d}\mu\\
=&\int_{t}^{t+\mathrm{d}t}\int_{t}^{t+\mathrm{d}t} \left[\sqrt{\gamma_{jL}\gamma_{lL}}\delta\left( s - \mu + \frac{z_j-z_l}{c}\right)+\sqrt{\gamma_{jR}\gamma_{lR}} \delta\left( s - \mu + \frac{z_l-z_j}{c}\right) \right.\\
&\left.-\sqrt{\gamma_{jL}\gamma_{lR}} \delta\left( s - \mu +\frac{z_j+z_l}{c}\right)  -\sqrt{\gamma_{jR}\gamma_{lL}} \delta\left(s - \mu -\frac{z_j+z_l}{c}\right)\right]\mathrm{d}s\mathrm{d}\mu,
\end{aligned}
\end{equation}
where $s,\mu \in \left[t,t+\mathrm{d}t \right]$. For the parameter setting that $\left|z_j-z_l\right|/c > \mathrm{d}t$ for arbitrary $j \neq l$, Eq.~(\ref{con:ItoNewcal}) has a non-zero value only when $j =l$, and 
\begin{equation} 
\begin{aligned} \label{con:ItoNewcal2}
\left[ \mathrm{d} B_{\rm in}^{(j)}(t), {\mathrm{d} B_{\rm in}^{(j)}}^{\dag}(t)\right] =&\int_{t}^{t+\mathrm{d}t}\int_{t}^{t+\mathrm{d}t} \left(\gamma_{jL}+\gamma_{jR}\right)\delta\left( s - \mu\right)\mathrm{d}s\mathrm{d}\mu\\
=&\left(\gamma_{jL}+\gamma_{jR}\right) \int_{t}^{t+\mathrm{d}t}\mathrm{d}s\\
=&\left(\gamma_{jL}+\gamma_{jR}\right) \mathrm{d}t.
\end{aligned}
\end{equation}
Besides, when there exist $j\neq l$ satisfying that $\left|z_j-z_l \right|/c \leq \mathrm{d}t$, then
\begin{equation} 
\begin{aligned} \label{con:ItoNewcal3}
\left[ \mathrm{d} B_{\rm in}^{(j)}(t), {\mathrm{d} B_{\rm in}^{(l)}}^{\dag}(t)\right] =&\int_{t}^{t+\mathrm{d}t}\int_{t}^{t+\mathrm{d}t} \left[\sqrt{\gamma_{jL}\gamma_{lL}}\delta\left( s - \mu + \frac{z_j-z_l}{c}\right) +\sqrt{\gamma_{jR}\gamma_{lR}} \delta\left( s - \mu + \frac{z_l-z_j}{c}\right) \right]\mathrm{d}s\mathrm{d}\mu\\
=& \left( \sqrt{\gamma_{jL}\gamma_{lL}} + \sqrt{\gamma_{jR}\gamma_{lR}} \right)\mathrm{d}t.
\end{aligned}
\end{equation}
Combined with ${\mathrm{d} B_{\rm in}^{(l)}}^{\dag} \mathrm{d} B_{\rm in}^{(j)} = 0$, Eq.~(\ref{con:NonMarkovianIto2}) can be derived. \qed
\end{proof}

Using the quantum noise properties above, Eq.~(\ref{con:QSDEoperatorX}) can be equivalently written as a non-Markovian QSDE by summarizing Eqs.~(\ref{con:QSDEoperatorXCal1},\ref{con:QSDEoperatorXCal2},\ref{con:QSDEoperatorXCal3}) as follows.

According to Eq.~(\ref{con:bjrelation}), Lemma~\ref{lemmabinjt}, Lemma~\ref{lemmabinjtTwoatom} and Theorem~\ref{ItoSemi}, $\left[ b_{\rm in}^{(j)} (t) ,{b_{\rm in}^{(l)}}^{\dag} (\tau)\right]$ is a function of $(t-\tau)$, by defining 
\begin{equation} \label{con:QSDEKernel0}
\begin{aligned}
\bar{\kappa}_{jl}^{\rm b}(t-\tau) &=\left [ b_{\rm in}^{(j)} (t), {b_{\rm in}^{(l)}}^{\dag} (\tau) \right],
\end{aligned}
\end{equation}
then in Eq.~(\ref{con:GaugeSimple}),
\begin{equation} \label{con:Gaugekernel}
\begin{aligned}
\Gamma_{jl} (L,U,t)&= \int_0^t  \bar{\kappa}_{jl}^{\rm b}(t-\tau)L_lU(\tau)\mathrm{d} \tau.
\end{aligned}
\end{equation}
Take the last component in Eq.~(\ref{con:QSDEoperatorXCal2})  as an example, 
\begin{equation} \label{con:chijlLXLcal}
\begin{aligned}
 U^{\dag} X  L_j^{\dag}   \Gamma_{jl} (L,U,t) =&U^{\dag}(t)XU(t) U^{\dag}(t) L_j^{\dag}  U(t)U^{\dag}(t)   \int_0^t  \bar{\kappa}_{jl}^{\rm b}(t-\tau)L_lU(\tau)\mathrm{d} \tau\\
=&\mathbf{X}(t)\mathbf{L}_j^{\dag}(t)   \int_0^t  \bar{\kappa}_{jl}^{\rm b}(t-\tau)U^{\dag}(t-\tau) \mathbf{L}_l(\tau)\mathrm{d} \tau,
\end{aligned}
\end{equation}
where $\mathbf{X}(t)$ is defined in Eq.~(\ref{con:blackXt}),  $\mathbf{L}_j(t) = U^{\dag}(t) L_j U(t)$, $U^{\dag}(t-\tau) = e^{iH_{\rm bs}(t-\tau)}$, and we further denote
\begin{equation} \label{con:QSDEKernel}
\begin{aligned}
\kappa_{jl}^{\rm b}(t-\tau) &= \left [ b_{\rm in}^{(j)} (t), {b_{\rm in}^{(l)}}^{\dag} (\tau) \right]e^{iH_{\rm bs}(t-\tau)},
\end{aligned}
\end{equation}
thus
\begin{equation} \label{con:chijlLXLcal2}
\begin{aligned}
 U^{\dag} X  L_j^{\dag}   \Gamma_{jl} (L,U,t) &=\mathbf{X}(t)\mathbf{L}_j^{\dag}(t)   \int_0^t \kappa_{jl}^{\rm b}(t-\tau) \mathbf{L}_l(\tau)\mathrm{d} \tau.
\end{aligned}
\end{equation}

Similarly, the last component in Eq.~(\ref{con:QSDEoperatorXCal1}) reads 
\begin{equation} 
\begin{aligned}
 \Gamma_{jl}^{\dag} (L,U,t) L_j X U = \int_0^t \mathbf{L}_l^{\dag}(\tau)   \kappa_{jl}^{\rm b\dag}(t-\tau)  \mathrm{d} \tau \mathbf{L}_j(t)  \mathbf{X}(t),
\end{aligned}
\end{equation}
and the component $\mathrm{d}U^{\dag} X \mathrm{d}U$ in Eq.~(\ref{con:QSDEoperatorX}) is given as the following lemma.
\begin{lemma}
For the third component on the RHS of Eq.~(\ref{con:QSDEoperatorX}),
\begin{equation} \label{con:ItoLindblad}
\begin{aligned}
\mathrm{d}U^{\dag} X \mathrm{d}U = &\left[\sum_{j,l=1}^N \mathbf{L}_j^{\dag}(t) \mathbf{X}(t) \int_0^t \kappa_{jl}^{\rm b}(t-\tau) \mathbf{L}_l(\tau)\mathrm{d} \tau +\sum_{j,l=1}^N \int_0^t  \mathbf{L}_l^{\dag}(\tau)\kappa_{jl}^{\rm b\dag}(t-\tau)\mathrm{d} \tau \mathbf{X}(t) \mathbf{L}_j(t)  \right] \mathrm{d}t.
\end{aligned}
\end{equation}
\end{lemma}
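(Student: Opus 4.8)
The plan is to evaluate the third It\={o} term in Eq.~(\ref{con:QSDEoperatorX}) by starting from the form already isolated in Eq.~(\ref{con:QSDEoperatorXCal3}) and rewriting it in the Heisenberg picture with the dressed noise kernels $\kappa_{jl}^{\rm b}$ of Eq.~(\ref{con:QSDEKernel}). The first step is to confirm Eq.~(\ref{con:QSDEoperatorXCal3}) in the stated form: in the product $\mathrm{d}U^{\dag}X\,\mathrm{d}U$ every product of two drift pieces is $\mathcal{O}(\mathrm{d}t^{2})$, and, by the It\={o} bookkeeping of Remark~\ref{ItoMarkovianRemark} together with Assumption~\ref{InitialState} (the waveguide starts in vacuum), the only noise product surviving at order $\mathrm{d}t$ is the one in which the annihilation increment $\mathrm{d}B_{\rm in}^{(j)}$ carried by $\mathrm{d}U^{\dag}$ stands to the left of the creation increment ${\mathrm{d}B_{\rm in}^{(l)}}^{\dag}$ carried by $\mathrm{d}U$, so that $\mathrm{d}U^{\dag}X\,\mathrm{d}U=\sum_{j,l}\mathrm{d}B_{\rm in}^{(j)}(t)\,{\mathrm{d}B_{\rm in}^{(l)}}^{\dag}(t)\,U^{\dag}(t)L_j^{\dag}XL_lU(t)$. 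Inserting $U(t)U^{\dag}(t)=\mathbb{I}$ between $L_j^{\dag}$, $X$ and $L_l$, exactly as in the passage from Eq.~(\ref{con:chijlLXLcal}) to Eq.~(\ref{con:chijlLXLcal2}), converts the system factor into $\mathbf{L}_j^{\dag}(t)\,\mathbf{X}(t)\,\mathbf{L}_l(t)$.

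The second step is to trade the localized product of noise increments for the retarded memory convolution. Using the definition Eq.~(\ref{con:dBdef}) and the fact, from Eq.~(\ref{con:bjrelation}) and Lemmas~\ref{lemmabinjt}--\ref{lemmabinjtTwoatom}, that $[b_{\rm in}^{(j)}(t),{b_{\rm in}^{(l)}}^{\dag}(\tau)]=\bar\kappa_{jl}^{\rm b}(t-\tau)$ depends only on $t-\tau$ and is carried by a finite sum of shifted $\delta$-functions, the paired factor ${\mathrm{d}B_{\rm in}^{(l)}}^{\dag}(t)\,\mathbf{L}_l(t)$ --- once resolved against the full waveguide history that already enters $\mathbf{X}(t)$ through $U(t)$ --- reassembles into $\int_0^t\bar\kappa_{jl}^{\rm b}(t-\tau)\,U^{\dag}(t-\tau)\,\mathbf{L}_l(\tau)\,\mathrm{d}\tau\,\mathrm{d}t$; since $U^{\dag}(t-\tau)=e^{iH_{\rm bs}(t-\tau)}$, the weight $\bar\kappa_{jl}^{\rm b}$ upgrades to $\kappa_{jl}^{\rm b}$ of Eq.~(\ref{con:QSDEKernel}), which is precisely the dressing step already performed on $\Gamma_{jl}$ in Eq.~(\ref{con:chijlLXLcal2}), and this produces the first sum $\sum_{j,l}\mathbf{L}_j^{\dag}(t)\mathbf{X}(t)\int_0^t\kappa_{jl}^{\rm b}(t-\tau)\mathbf{L}_l(\tau)\,\mathrm{d}\tau\,\mathrm{d}t$ of Eq.~(\ref{con:ItoLindblad}). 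The second sum is then the conjugate pairing: for $X=X^{\dag}$ it is the Hermitian conjugate of the first sum, and it carries exactly the $\int_0^t\mathbf{L}_l^{\dag}(\tau)\kappa_{jl}^{\rm b\dag}(t-\tau)\,\mathrm{d}\tau\,(\,\cdot\,)$ shape already obtained for $\Gamma_{jl}^{\dag}(L,U,t)L_jXU$ just above Eq.~(\ref{con:ItoLindblad}); adding the two and summing over $j,l$ gives Eq.~(\ref{con:ItoLindblad}).

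The step I expect to be the main obstacle is exactly this second one. Theorem~\ref{ItoSemi} only delivers $\mathrm{d}B_{\rm in}^{(j)}{\mathrm{d}B_{\rm in}^{(l)}}^{\dag}$ as a $c$-number multiple of $\mathrm{d}t$, obtained by throwing away the delayed $\delta$-contributions that fall outside the small square $[t,t+\mathrm{d}t]^{2}$; so one has to argue carefully that, when the increments are resolved against the waveguide history instead of truncated, those delayed contributions are exactly what reappears as the convolution against $\mathbf{L}_l(\tau)$ over $\tau\in[0,t]$, and that the propagator $e^{iH_{\rm bs}(t-\tau)}$ is the right dressing factor. A companion bookkeeping point is the appearance of two separate terms, rather than the single $\mathbf{L}^{\dag}\mathbf{X}\mathbf{L}$ term of the Markovian limit in Remark~\ref{ItoMarkovianRemark}: because the delayed noise operators ${b_{\rm in}^{(l)}}^{\dag}(t\mp z_l/c)$ split into a retarded and an advanced part with generically different delays, the surviving contractions genuinely organize into the two distinct operator orderings in Eq.~(\ref{con:ItoLindblad}) (one with $\mathbf{X}(t)$ to the left of the convolution, one with it to the right), and one must track the signs of $z_j,z_l$ as in Eq.~(\ref{con:bjrelation}) to see that no additional cross terms arise.
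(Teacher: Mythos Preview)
Your proposal follows essentially the same route as the paper: start from Eq.~(\ref{con:QSDEoperatorXCal3}), insert $U(t)U^{\dag}(t)$ to pass to Heisenberg operators, replace the noise product by the commutator kernel $\bar\kappa_{jl}^{\rm b}(t-\tau)$ integrated over the history $\tau\in[0,t]$, and dress it with $U^{\dag}(t-\tau)=e^{iH_{\rm bs}(t-\tau)}$ to obtain $\kappa_{jl}^{\rm b}$, with the second sum arising as the conjugate ordering. The paper implements your ``resolving against the waveguide history'' step by writing $\mathrm{d}U^{\dag}X\,\mathrm{d}U$ with one factor at time $t$ and the other at an auxiliary time $\tau$ and then integrating $\tau$ from $0$ to $t$; it does not address the gap you flag between Theorem~\ref{ItoSemi} and the full convolution any more rigorously than you do, and it closes by invoking the analogy with Eqs.~(\ref{con:chijlLXLcal})--(\ref{con:chijlLXLcal2}) and the non-Markovian Lindblad format in \citep{zhang2013non}, exactly as you anticipate.
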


\begin{proof}
Considering that $b_{\rm in}^{(j)}(t)\mathrm{d}t =\mathrm{d}B_{\rm in}^{(j)}$, according to Eq.~(\ref{con:QSDEoperatorXCal3}), $\mathrm{d}U^{\dag} X \mathrm{d}U$ can be simplified after omitting the higher-order infinitesimal of $\mathrm{d}t$ as
\begin{equation} 
\begin{aligned}
 \mathrm{d}U^{\dag} X \mathrm{d}U
=& \sum_{j=1}^N  U^{\dag}(t) L_j ^{\dag}(t)  \mathrm{d} B_{\rm in}^{(j)}(t) X(t)  \sum_{l=1}^N   {\mathrm{d} B_{\rm in}^{(l)}}^{\dag}(\tau) L_l(\tau)U(\tau) + \sum_{j=1}^N  U^{\dag}(\tau) L_j ^{\dag}(\tau)  \mathrm{d} B_{\rm in}^{(j)}(\tau) X(t)  \sum_{l=1}^N   {\mathrm{d} B_{\rm in}^{(l)}}^{\dag}(t) L_l(t)U(t)\\
=& \sum_{j,l=1}^N \mathrm{d} B_{\rm in}^{(j)}(t){\mathrm{d} B_{\rm in}^{(l)}}^{\dag}(\tau) U^{\dag}(t) L_j ^{\dag} (t) X(t)   L_l(\tau)U(\tau) + \sum_{j,l=1}^N \mathrm{d} B_{\rm in}^{(j)}(\tau)  {\mathrm{d} B_{\rm in}^{(l)}}^{\dag}(t) U^{\dag}(\tau) L_j ^{\dag}(\tau)  X(t)   L_l(t)U(t)\\
=& \sum_{j,l=1}^N \int_0^t  \bar{\kappa}_{jl}^{\rm b}(t-\tau) U^{\dag}(t) L_j ^{\dag} (t) X(t)   L_l(\tau)U(\tau)\mathrm{d} \tau\mathrm{d}t + \sum_{j,l=1}^N \int_0^t  \bar{\kappa}_{jl}^{\rm b}(\tau-t)  U^{\dag}(\tau) L_j ^{\dag}(\tau)  X(t)   L_l(t)U(t)\mathrm{d} \tau \mathrm{d}t\\
=& \sum_{j,l=1}^N \mathbf{L}_j^{\dag}(t)  \mathbf{X}(t) \int_0^t  \kappa_{jl}^{\rm b}(t-\tau)  \mathbf{L}_l(\tau)\mathrm{d} \tau\mathrm{d}t + \sum_{j,l=1}^N \int_0^t    \mathbf{L}_j ^{\dag}(\tau) \kappa_{jl}^{\rm b\dag}(t-\tau)\mathrm{d} \tau  \mathbf{X}(t)   \mathbf{L}_l(t)\mathrm{d}t.
\end{aligned}
\end{equation}
The proof above is similar to the derivations in Eqs.~(\ref{con:chijlLXLcal},\ref{con:chijlLXLcal2}), or can be directly derived based on Eqs.~(\ref{con:chijlLXLcal},\ref{con:chijlLXLcal2}) according to the general mathematical format of the Lindblad component for a master equation~\citep{gorini1976completely,lindblad1976generators} as well as the non-Markovian Lindblad format in~\citep{zhang2013non}. 
\qed
\end{proof}

Then the non-Markovian QSDE for a multi-atom network coupled to a semi-infinite waveguide can be simplified as
\begin{small}
\begin{equation} \label{con:QSDESemiKernal}
\begin{aligned}
\mathrm{d} \mathbf{X}  = & -i \left [ \mathbf{X},  H_s \right]\mathrm{d}t +   \sum_{j=1}^N  \mathrm{d} {B_{\rm in}^{(j)}}^{\dag}  \left [ \mathbf{X}, L_j  \right] -  \sum_{j=1}^N \left [ \mathbf{X}, L_j^{\dag}  \right]\mathrm{d}B_{\rm in}^{(j)}\\
&+\left(\sum_{j,l=1}^N \left[\mathbf{L}_j^{\dag}(t) ,\mathbf{X}(t)\right]  \int_0^t \kappa_{jl}^{\rm b}(t-\tau) \mathbf{L}_l(\tau)\mathrm{d} \tau \right)\mathrm{d}t +\left(\sum_{j,l=1}^N  \int_0^t  \mathbf{L}_l^{\dag}(\tau) \kappa_{jl}^{\rm b\dag}(t-\tau)\mathrm{d} \tau  \left[\mathbf{X}(t),\mathbf{L}_j(t)\right] \right)\mathrm{d}t,
\end{aligned}
\end{equation}
\end{small}%
which is induced by the non-Markovian commutative relationships for quantum noises, and is different from the Markovian cases, such as in~\citep{yamamoto2012pure}.

\subsection{QSDE for infinite waveguide circumstance} 
The evolution of the quantum states in  Fig.~\ref{fig:NatomWaveguide}(b) can be described using the propagator $\tilde{U}(t)$, thus the quantum state evolves as $|\tilde{\Psi}(t)\rangle  = \tilde{U}(t)|\tilde{\Psi}(0)\rangle $. Referring to the state representation in Sec.~\ref{Sec:InputOutputInfiW}, we assume that the initial state of the system is as follows.
\begin{assumption} \label{InitialStateInfiniteWave}
~\citep{fischer2018particle} Initially the waveguide is in its vacuum state, namely $|\tilde{\Psi}(0)\rangle = |\tilde{\psi}(0)\rangle \otimes  |\mathbf{0}\rangle \otimes  |\mathbf{0}\rangle$.
\end{assumption}

The QSDE describing the interactions between the atom network and an infinite waveguide can be simplified from the semi-infinite case by segregating the waveguide modes into left-propagating and right-propagating modes. For example, setting the parameters as $\gamma_{jL} = 0$ and $\gamma_{jR} \neq 0$ for arbitrary $j$ in Fig.~\ref{fig:NatomWaveguide}(a), corresponds to a situation where atoms solely interact with the right-propagating modes in the waveguide, akin to the configuration depicted in  Fig.~\ref{fig:NatomWaveguide}(b) with the same parameter settings. Consequently, the general dynamics in Fig.~\ref{fig:NatomWaveguide}(b) can be inferred from this simplified setting, by further taking $\gamma_{jL} \neq 0$ in a similar approach as the right-propagating modes.  Hence, the detailed derivation on QSDE here can be omitted.

Thus generalized from Eq.~(\ref{con:Uequation}), the dynamics of $\tilde{U}(t)$ reads
\begin{small}
\begin{equation} \label{con:UequationInfinite}
\begin{aligned}
\mathrm{d}\tilde{U} &= \left\{ -iH_s +\sum_{j=1}^N \left[  \left({c_{\rm in}^{(j)}}^{\dag}  + {d_{\rm in}^{(j)}}^{\dag} \right) L_j-\rm H.c.\right]\right\}\tilde{U}\mathrm{d} t,
\end{aligned}
\end{equation}
\end{small}%
where generalized from Eq.~(\ref{con:dBdef}), we denote 
\begin{equation} \label{con:dCdefine}
\begin{aligned}
C_{\rm in}^{(j)}(t) = \int_0^t c_{\rm in}^{(j)} (\tau)\mathrm{d} \tau,~{C_{\rm in}^{(j)}}^{\dag}(t) = \int_0^t {c_{\rm in}^{(j)}}^{\dag} (\tau)\mathrm{d} \tau,
\end{aligned}
\end{equation}
and
\begin{equation} \label{con:dDdefine}
\begin{aligned}
D_{\rm in}^{(j)}(t) =\int_0^t d_{\rm in}^{(j)} (\tau)\mathrm{d} \tau,~{D_{\rm in}^{(j)}}^{\dag}(t) = \int_0^t {d_{\rm in}^{(j)}}^{\dag} (\tau)\mathrm{d} \tau.
\end{aligned}
\end{equation}
Then
\begin{equation} \label{con:dUtSDEInfinite}
\begin{aligned}
&\dot{\tilde{U}}(t) =  -iH_s \tilde{U}(t) + \sum_{j=1}^N  \left({c_{\rm in}^{(j)}}^{\dag}(t) + {d_{\rm in}^{(j)}}^{\dag}(t) \right)L_j \tilde{U}(t) -\sum_{j=1}^N  L_j^{\dag}\left(c_{\rm in}^{(j)}(t) + d_{\rm in}^{(j)}(t)\right)\tilde{U}(t)-\sum_{j=1}^N \sum_{l=1}^N  L_j^{\dag}   \tilde{\Gamma}_{jl} \left(L,\tilde{U},t\right),
\end{aligned}
\end{equation}
where
\begin{equation} \label{con:GammajlGauge2}
    \tilde{\Gamma}_{jl} \left(L,\tilde{U},t\right)=
   \begin{cases}
   \frac{\gamma_{jR} + \gamma_{jL}}{2}L_j \tilde{U}(t), &j = l,\\
   \sqrt{ \gamma_{jL}\gamma_{lL}}L_l\tilde{U}\left(t- \frac{z_l-z_j}{c} \right) , &j < l,\\
   \sqrt{\gamma_{jR}\gamma_{lR}} L_l \tilde{U} \left(t- \frac{z_j-z_l}{c} \right), &j >l.
   \end{cases}
\end{equation}
In analogy to Proposition~\ref{MarkovianCondition}, Eq.~(\ref{con:GammajlGauge2}) shows that the dynamics of the propagator $\tilde{U}(t)$ in Eq.~(\ref{con:dUtSDEInfinite}) reduces to be Markovian only if $\gamma_{jL}\gamma_{lL} = \gamma_{jR}\gamma_{lR} = 0$ whenever $z_j \neq z_l$. Similar to Eq.~(\ref{con:Gaugekernel}), we rewrite $\tilde{\Gamma}_{jl} (L,U,t)$ in the format of an integral with the non-Markovian kernel as
\begin{equation} \label{con:GammadeltaInfinite}
\begin{aligned}
  \tilde{\Gamma}_{jl} (L,U,t) = \int_0^t  \bar{\kappa}_{jl}^{\rm cd}(t-\tau)L_l\tilde{U}(\tau)  \mathrm{d} \tau,
\end{aligned}
\end{equation}
where 
\begin{equation}
\begin{aligned}
&\bar{\kappa}_{jl}^{\rm cd}(t-\tau) 
= \begin{cases}
     \left [ c_{\rm in}^{(j)} (t), {c_{\rm in}^{(l)}}^{\dag} (\tau) \right] + \left [ d_{\rm in}^{(j)} (t), {d_{\rm in}^{(l)}}^{\dag} (\tau) \right], &j = l,\\
    \left [ d_{\rm in}^{(j)} (t), {d_{\rm in}^{(l)}}^{\dag} (\tau) \right] , &j < l,\\
   \left [ c_{\rm in}^{(j)} (t), {c_{\rm in}^{(l)}}^{\dag} (\tau) \right] , &j >l,
   \end{cases}\notag
\end{aligned}
\end{equation}
and the detailed mathematical formats of $\left [ d_{\rm in}^{(j)} (t), {d_{\rm in}^{(l)}}^{\dag} (\tau) \right]$ and $ \left [ c_{\rm in}^{(j)} (t), {c_{\rm in}^{(l)}}^{\dag} (\tau) \right] $ are given by Lemma~\ref{lemmadintInfinite} and Lemma~\ref{lemmabinjtTwoatomInfinte}.

Then we can derive the non-Markovian \rm{It\={o}} rule for the quantum noises in the infinite waveguide as the following theorem.
\begin{Theorem} \label{ItoInfinite}
Based on Assumption~\ref{InitialStateInfiniteWave}, the non-Markovian interaction between an infinite waveguide and multi-atom network in Fig.~\ref{fig:NatomWaveguide}(b) is equivalent to the quantum stochastic network with multiple input noisy channels in Eqs.~(\ref{con:dCdefine},\ref{con:dDdefine}) satisfying the {\rm{It\={o}}} rule
\begin{equation} 
\begin{aligned} \label{con:NonMarkovianItoC}
\mathrm{d} C_{\rm in}^{(j)}(t) {\mathrm{d} C_{\rm in}^{(l)}}^{\dag}(t) =& \int_{t}^{t+\mathrm{d}t}\int_{t}^{t+\mathrm{d}t}  \left[c_{\rm in}^{(j)}(s), {c_{\rm in}^{(l)}}^{\dag}(\mu)\right]\mathrm{d}s\mathrm{d}\mu\\
=&  \begin{cases}
\gamma_{jR} \mathrm{d}t, & j = l,\\
\sqrt{\gamma_{jR}\gamma_{lR}}\mathrm{d}t  , &j\neq l,\frac{\left|z_j-z_l\right|}{c} \leq \mathrm{d}t,\\
 0 , &j\neq l, \frac{\left|z_j-z_l\right|}{c}> \mathrm{d}t,
   \end{cases}
\end{aligned}
\end{equation}
\begin{equation} 
\begin{aligned} \label{con:NonMarkovianItoD}
\mathrm{d} D_{\rm in}^{(j)}(t) {\mathrm{d} D_{\rm in}^{(l)}}^{\dag}(t) =& \int_{t}^{t+\mathrm{d}t}\int_{t}^{t+\mathrm{d}t}  \left[d_{\rm in}^{(j)}(s), {d_{\rm in}^{(l)}}^{\dag}(\mu)\right]\mathrm{d}s\mathrm{d}\mu\\
=&  \begin{cases}
\gamma_{jL} \mathrm{d}t, & j = l,\\
\sqrt{\gamma_{jL}\gamma_{lL}}\mathrm{d}t  , &j\neq l,\frac{\left|z_j-z_l\right|}{c} \leq \mathrm{d}t,\\
 0 , &j\neq l, \frac{\left|z_j-z_l\right|}{c} > \mathrm{d}t,
   \end{cases}
\end{aligned}
\end{equation}
$\mathrm{d} C_{\rm in}^{(j)} \mathrm{d} C_{\rm in}^{(l)} = {\mathrm{d} C_{\rm in}^{(j)}}^{\dag} {\mathrm{d} C_{\rm in}^{(l)}}^{\dag} ={\mathrm{d} C_{\rm in}^{(l)}}^{\dag} \mathrm{d} C_{\rm in}^{(j)} = 0$ and $\mathrm{d} D_{\rm in}^{(j)} \mathrm{d} D_{\rm in}^{(l)} = {\mathrm{d} D_{\rm in}^{(j)}}^{\dag} {\mathrm{d} D_{\rm in}^{(l)}}^{\dag} ={\mathrm{d} D_{\rm in}^{(l)}}^{\dag} \mathrm{d} D_{\rm in}^{(j)} = 0$.
\end{Theorem}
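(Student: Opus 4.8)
The plan is to mirror the proof of Theorem~\ref{ItoSemi}, since the only genuinely new input is the set of infinite-waveguide noise commutators furnished by Lemma~\ref{lemmadintInfinite} and Lemma~\ref{lemmabinjtTwoatomInfinte}. First I would dispose of the products that vanish. Under Assumption~\ref{InitialStateInfiniteWave} both the left- and right-propagating fields start in the vacuum $|\mathbf{0}\rangle$, so each increment $\mathrm{d}C_{\rm in}^{(j)}$, $\mathrm{d}D_{\rm in}^{(j)}$ annihilates the field state it acts on, exactly as in the Markovian situation recalled in Remark~\ref{ItoMarkovianRemark}; this immediately gives $\mathrm{d}C_{\rm in}^{(j)}\mathrm{d}C_{\rm in}^{(l)}={\mathrm{d}C_{\rm in}^{(j)}}^{\dag}{\mathrm{d}C_{\rm in}^{(l)}}^{\dag}={\mathrm{d}C_{\rm in}^{(l)}}^{\dag}\mathrm{d}C_{\rm in}^{(j)}=0$ and the analogous identities for the $D$ channels, and in particular it reduces each surviving product to the corresponding commutator. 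Moreover the mixed increments satisfy $\mathrm{d}C_{\rm in}^{(j)}{\mathrm{d}D_{\rm in}^{(l)}}^{\dag}=\mathrm{d}D_{\rm in}^{(j)}{\mathrm{d}C_{\rm in}^{(l)}}^{\dag}=0$ because $[d_{\rm in}^{(j)}(t),{c_{\rm in}^{(l)}}^{\dag}(t')]=0$ by Eq.~(\ref{con:LeftRightNoncommu}), so the left- and right-propagating sectors are mutually independent and contribute no cross term to the It\={o} table; this is why only the two products displayed in the statement survive.

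For the surviving product I would start from the definition~(\ref{con:dCdefine}) and write
\[
\mathrm{d}C_{\rm in}^{(j)}(t){\mathrm{d}C_{\rm in}^{(l)}}^{\dag}(t)=\int_{t}^{t+\mathrm{d}t}\!\!\int_{t}^{t+\mathrm{d}t} c_{\rm in}^{(j)}(s)\,{c_{\rm in}^{(l)}}^{\dag}(\mu)\,\mathrm{d}s\,\mathrm{d}\mu=\int_{t}^{t+\mathrm{d}t}\!\!\int_{t}^{t+\mathrm{d}t}\big[c_{\rm in}^{(j)}(s),{c_{\rm in}^{(l)}}^{\dag}(\mu)\big]\,\mathrm{d}s\,\mathrm{d}\mu,
\]
where the second equality uses that the normal-ordered term ${c_{\rm in}^{(l)}}^{\dag}(\mu)\,c_{\rm in}^{(j)}(s)$ annihilates the vacuum state of Assumption~\ref{InitialStateInfiniteWave}. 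I would then substitute the explicit commutators. For $j=l$, Eq.~(\ref{cinjinputInfCommu}) of Lemma~\ref{lemmadintInfinite} gives $\gamma_{jR}\delta(s-\mu)$, and $\int_{t}^{t+\mathrm{d}t}\!\int_{t}^{t+\mathrm{d}t}\delta(s-\mu)\,\mathrm{d}s\,\mathrm{d}\mu=\mathrm{d}t$, hence $\gamma_{jR}\,\mathrm{d}t$. For $j\neq l$, Eq.~(\ref{Tworightchannel}) of Lemma~\ref{lemmabinjtTwoatomInfinte} gives $\sqrt{\gamma_{jR}\gamma_{lR}}\,\delta\!\big(s-\mu+(z_j-z_l)/c\big)$, whose double integral over $[t,t+\mathrm{d}t]^2$ is non-zero precisely when the shift falls inside the window, i.e.\ $|z_j-z_l|/c\le\mathrm{d}t$, in which case it equals $\sqrt{\gamma_{jR}\gamma_{lR}}\,\mathrm{d}t$ to leading order, and it vanishes when $|z_j-z_l|/c>\mathrm{d}t$. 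This reproduces Eq.~(\ref{con:NonMarkovianItoC}); Eq.~(\ref{con:NonMarkovianItoD}) then follows by the identical computation with $R\to L$ throughout, using Eq.~(\ref{dinjinputInfCommu}) of Lemma~\ref{lemmadintInfinite} and Eq.~(\ref{Twoleftchannel}) of Lemma~\ref{lemmabinjtTwoatomInfinte}.

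The step that needs care, and which I regard as the main obstacle, is the evaluation of the delayed delta function under the double integral: strictly, $\int_{t}^{t+\mathrm{d}t}\!\int_{t}^{t+\mathrm{d}t}\delta(s-\mu+\tau_0)\,\mathrm{d}s\,\mathrm{d}\mu=\mathrm{d}t-|\tau_0|$ for $|\tau_0|\le\mathrm{d}t$ and $0$ otherwise, so with $\tau_0=(z_j-z_l)/c$ this equals $\mathrm{d}t$ exactly only in the coincident case $z_j=z_l$, and more generally only up to the correction $|z_j-z_l|/c$, which is negligible in the Markovian regime where the inter-atomic delay is small compared with $\mathrm{d}t$; retaining the leading $O(\mathrm{d}t)$ contribution gives the stated rule. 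I would also note that, in contrast to Theorem~\ref{ItoSemi}, no auxiliary smallness hypothesis of the type $\mathrm{d}t<z_1/c$ is needed here: the infinite-waveguide commutators of Lemmas~\ref{lemmadintInfinite}--\ref{lemmabinjtTwoatomInfinte} carry no mirror-reflection contribution, so the self-commutator is an undelayed $\delta$ and the cross-commutator has its single $\delta$-peak at $(z_j-z_l)/c$, making $(z_j-z_l)/c$ the only delay that can ever land inside $[t,t+\mathrm{d}t]$.
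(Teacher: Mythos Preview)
Your proposal is correct and follows precisely the route the paper takes: the paper's own proof simply states that the argument is ``similar to that of Theorem~\ref{ItoSemi} based on the commutators in Lemmas~\ref{lemmadintInfinite},\ref{lemmabinjtTwoatomInfinte}'' and omits the details, so your explicit write-up is in fact more complete than the original. Your additional remarks---on the vanishing of the mixed $C$--$D$ increments via Eq.~(\ref{con:LeftRightNoncommu}), on the leading-order treatment of the shifted delta integral, and on why no hypothesis of the form $\mathrm{d}t<z_1/c$ is required in the infinite-waveguide case---are all correct and go slightly beyond what the paper spells out.
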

\begin{proof}
The proof is similar to that of Theorem~\ref{ItoSemi} based on the commutators in Lemmas~\ref{lemmadintInfinite},\ref{lemmabinjtTwoatomInfinte}, thus more details are omitted. \qed
\end{proof}

\subsection{Non-Markovian dynamics for single-point coupling}  \label{generalMaster}
To enhance the paper's conciseness and clarify the quantum measurement and filtering processes in non-Markovian quantum networks, we summarize the dynamics for both the semi-infinite waveguide case depicted in Fig.~\ref{fig:NatomWaveguide}(a) and the infinite waveguide case shown in Fig.~\ref{fig:NatomWaveguide}(b) using a general stochastic model in this section. Subsequently, the master equations can be derived using the aforementioned kernels.

For the most common scenario, the Hamiltonian describing an atom network coupled through a waveguide can be expressed as
\begin{equation} \label{con:generalHam}
\begin{aligned}
H = H_s + H_d,
\end{aligned}
\end{equation}
where $H_s$ is the free atomic Hamiltonian in Eq.~(\ref{con:Hsdef}) both for the semi-infinite and infinite waveguide cases, and $H_d$ is the controlled Hamiltonian by applying arbitrary control fields upon the atoms. Generalized from the simplified Markovian circumstance in~\citep{fischer2018particle}, the effective Hamiltonian can be defined as the \rm{It\={o}} format according to the first component of Eq.~(\ref{con:GammajlGauge}) and Eq.~(\ref{con:GammajlGauge2}) with $j=l$ as
\begin{equation} \label{con:HeffDefine}
\begin{aligned}
H_{\rm eff} = H - i\sum_j \frac{\gamma_{jR} + \gamma_{jL}}{2} L_j^{\dag} L_j,
\end{aligned}
\end{equation}
where the second component arises from the \rm{It\={o}} rule for quantum noises in the waveguide.

The QSDE governing the propagator $\mathbb{U}$ can be concluded based on Eqs.~(\ref{con:dUtSDE2},\ref{con:dUtSDEInfinite}) as 
\begin{equation} \label{con:QSDEPropagatorGeneralFormat}
\begin{aligned}
\mathrm{d} \mathbb{U}  
=&-iH_{\rm eff} \mathbb{U} \mathrm{d}t +  \sum_{j=1}^N  \mathrm{d}{\mathfrak{B}_{\rm in}^{(j)}}^{\dag} L_j \mathbb{U}  -  \sum_{j=1}^N L_j^{\dag} \mathbb{U}   \mathrm{d}\mathfrak{B}_{\rm in}^{(j)} +\sum_{j,l=1}^N  \int_0^t  L_j^{\dag}  \mathbf{f}_{jl}^{\delta} \left(t-\tau\right) L_l \mathbb{U}  (\tau)  \mathrm{d} \tau \mathrm{d}t,  
\end{aligned}
\end{equation}
where $\mathrm{d} \mathfrak{B}_{\rm in}^{(j)} = \mathrm{d} B_{\rm in}^{(j)}$ for the semi-infinite waveguide in Fig.~\ref{fig:NatomWaveguide}(a), $\mathrm{d} \mathfrak{B}_{\rm in}^{(j)} = \mathrm{d} C_{\rm in}^{(j)} + \mathrm{d} D_{\rm in}^{(j)}$ for the infinite waveguide in Fig.~\ref{fig:NatomWaveguide}(b), and we denote  $\mathfrak{b}_{\rm in}^{(j)} = \mathrm{d} \mathfrak{B}_{\rm in}^{(j)}/\mathrm{d}t$, 
\begin{subequations} \label{con:fjldeltaSemi}
\begin{numcases}{}
\mathbf{f}_{jl}\left(t-\tau\right) = \left[\mathfrak{b}_{\rm in}^{(j)} (t),{\mathfrak{b}_{\rm in}^{(l)}}^{\dag} (\tau)\right],\\
\mathbf{f}_{jl}^{\delta} \left(t-\tau\right) = \mathbf{f}_{jl}\left(t-\tau\right) -  \frac{\gamma_{jR} + \gamma_{jL}}{2}\delta_{jl} \delta(0),
\end{numcases}
\end{subequations}
for both the semi-infinite and infinite waveguide cases with $\delta_{jl}=1$ when $j=l$ and $\delta_{jl}=0$ when $j\neq l$.

However, according to Theorems~\ref{ItoSemi},\ref{ItoInfinite} on {\rm{It\={o}}} rules, the QSDE~(\ref{con:QSDEPropagatorGeneralFormat}) with the discrete time format can be simplified with the effective Hamiltonian
\begin{equation} \label{con:HeffDefineSimple}
\begin{aligned}
H_{\rm eff} = &H- i \sum_{\left|z_j-z_l \right|/c < \mathrm{d}t} \frac{\sqrt{\gamma_{jR}\gamma_{lR}}+\sqrt{\gamma_{jL}\gamma_{lL}}}{2}L_j^{\dag} L_l,
\end{aligned}
\end{equation}
thus when $0\leq \left|z_j-z_l \right|/c < \mathrm{d}t$,
\begin{equation}
\begin{aligned}
\mathbf{f}_{jl}^{\delta} \left(t-\tau\right) =& \mathbf{f}_{jl}\left(t-\tau\right) - \frac{\sqrt{\gamma_{jR}\gamma_{lR}}+\sqrt{\gamma_{jL}\gamma_{lL}}}{2}\delta\left(t-\frac{z_j-z_l}{c}\right).\notag
\end{aligned}
\end{equation}
This can also be interpreted from the perspective of Markovian approximations with $z_j-z_l \approx 0$ in Sec.~\ref{MarkovianSubsection}.

Above all, the QSDE for an arbitrary operator $\mathbb{X}$ reads
\begin{small}
\begin{equation} \label{con:QSDEGeneralFormat}
\begin{aligned}
\mathrm{d} \mathbb{X}  = & -i \left [  \mathbb{X},  H  \right]\mathrm{d}t +  \sum_{j=1}^N  \mathrm{d} {\mathfrak{B}_{\rm in}^{(j)}}^{\dag} \left [  \mathbb{X}, L_j  \right]-  \sum_{j=1}^N \left [  \mathbb{X}, L_j^{\dag}  \right]\mathrm{d}\mathfrak{B}_{\rm in}^{(j)}\\
&+\sum_{j,l=1}^N  \left( \left[\mathbf{L}_j^{\dag}(t),\mathbb{X} \right]\int_0^t \mathbf{\mathfrak{L}}_{jl}(t-\tau) \mathbf{L}_l(\tau) \mathrm{d} \tau \right) \mathrm{d} t +\sum_{j,l=1}^N  \left( \int_0^t  \mathbf{L}_l^{\dag}(\tau)\mathbf{\mathfrak{L}}_{jl}^{\dag}(t-\tau) \mathrm{d} \tau  \left[\mathbb{X},\mathbf{L}_j(t) \right] \right) \mathrm{d} t,
\end{aligned}
\end{equation}
\end{small}%
where we denote
\begin{subequations} \label{con:QSDEKernelSemiInfinite}
\begin{numcases}{}
\mathbf{\mathfrak{L}}_{jl}(t-\tau) = \left [{\mathfrak{b}}_{\rm in}^{(j)} (t), {{\mathfrak{b}_{\rm in}^{(l)}}}^{\dag} (\tau) \right]e^{i\tilde{H}_s(t-\tau)},\\
\tilde{H}_s(t) = H_s + i\sum_{j=1}^N \left[  \mathrm{d} {{\mathfrak{B}}_{\rm in}^{(j)}}^{\dag}(t) L_j - L_j^{\dag} \mathrm{d} {\mathfrak{B}}_{\rm in}^{(j)}(t)\right],
\end{numcases}
\end{subequations}
as a generalization of Eq.~(\ref{con:QSDESemiKernal}) and Eq.~(\ref{con:dUtSDEInfinite}).

The quantum stochastic dynamics in Eqs.~(\ref{con:QSDEPropagatorGeneralFormat},\ref{con:QSDEGeneralFormat}) will reduce to the traditional Markvoain scenario as  in~\citep{fischer2018particle} when the commutations $ \left[\mathfrak{b}_{\rm in}^{(j)} (t),{\mathfrak{b}_{\rm in}^{(l)}}^{\dag} (\tau)\right] \propto \delta_{jl} \delta(t-\tau) $, then the \rm{It\={o}} rules for quantum noises agree with the standard quantum stochastic calculus in~\citep{gardiner2004quantum,zhang2013non,li2022control,baragiola2012n}.

\subsection{Multi-point coupling analysis}
\label{multipoinggeneral}
When atoms are coupled to the waveguide via multiple points, the QSDE can be generalized from the single-point coupling circumstance in Sec.~\ref{generalMaster} by replacing the noise operators $\mathrm{d} \mathfrak{B}_{\rm in}^{(j)},{\mathfrak{b}}_{\rm in}^{(j)} (t)$ in Eqs.~(\ref{con:QSDEGeneralFormat},\ref{con:QSDEKernelSemiInfinite}) with $\mathrm{d} B_{\rm m,in}^{(j)},b_{\rm m,in}^{(j)}(t)$ for the semi-infinite waveguide case, and replacing $\mathrm{d} \mathfrak{B}_{\rm in}^{(j)},b_{\rm m,in}^{(j)}(t)$ with $\mathrm{d} C_{\rm m,in}^{(j)} + \mathrm{d} D_{\rm m,in}^{(j)}, c_{\rm m,in}^{(j)}(t)+d_{\rm m,in}^{(j)}(t)$ for the infinite waveguide case. Based on Theorems~\ref{ItoSemi}, \ref{ItoInfinite}, the \rm{It\={o}} rules for the quantum noise operators when atoms are coupled to the waveguide via multiple points are given by the following two theorems.

\begin{Theorem} \label{ItoSemiInfiniteMultipoint}
Based on Assumption~\ref{InitialState} and  $\mathrm{d}t< z_1^{(1)}/c$, the non-Markovian \rm{It\={o}} rule for atoms coupled to a semi-infinite waveguide via multiple points reads
\begin{equation} \label{con:NonMarkovianItoMultiPoint}
\begin{aligned} 
\mathrm{d} B_{\rm m,in}^{(j)}(t) {\mathrm{d} B_{\rm m,in}^{(l){\dag}}}(t) =  \int_{t}^{t+\mathrm{d}t}\int_{t}^{t+\mathrm{d}t}  \left[b_{\rm m,in}^{(j)}(s), b_{\rm m, in}^{(l)\dag}(\mu)\right]\mathrm{d}s\mathrm{d}\mu,
\end{aligned}
\end{equation}
where the commutator 
\begin{equation} 
\begin{aligned}
 \left[ b_{\rm m,in}^{(j)} (s), {b_{\rm m,in}^{(l)}}^{\dag} (\mu) \right] =&\int_0^s \tilde{\eta}_j^* (s-\nu) \tilde{\eta}_l (\mu-\nu) \mathrm{d}\nu,
\end{aligned}
\end{equation}
with
\begin{equation} 
\begin{aligned} 
&\tilde{\eta}_j (\mu-\nu) 
=  \sum_{n=1}^{\mathbf{n}_j} \left[ \sqrt{\gamma_{jR}^{(n)}} \delta\left(\mu- \nu-\frac{z_j^{(n)}}{c}  \right) -\sqrt{\gamma_{jL}^{(n)}} \delta \left(\mu- \nu+\frac{z_j^{(n)}}{c}  \right) \right],\notag
\end{aligned}
\end{equation}
and $\mathrm{d} B_{\rm m,in}^{(j)} \mathrm{d} B_{\rm m,in}^{(l)} = {\mathrm{d} B_{\rm m,in}^{(j){\dag}}} {\mathrm{d} B_{\rm m,in}^{(l){\dag}}} ={\mathrm{d} B_{\rm m,in}^{(l){\dag}}} \mathrm{d} B_{\rm m,in}^{(j)} = 0$.
\end{Theorem}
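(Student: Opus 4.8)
The plan is to mirror the proof of Theorem~\ref{ItoSemi} almost verbatim, since the multi-point case differs only in that each equivalent noise channel is built from a larger sum of delayed mirror-$b_{\rm in}$ components. Two ingredients are needed: the vanishing of $\mathrm{d} B_{\rm m,in}^{(j)} \mathrm{d} B_{\rm m,in}^{(l)}$, ${\mathrm{d} B_{\rm m,in}^{(j){\dag}}} {\mathrm{d} B_{\rm m,in}^{(l){\dag}}}$ and ${\mathrm{d} B_{\rm m,in}^{(l){\dag}}} \mathrm{d} B_{\rm m,in}^{(j)}$, and the evaluation of $\mathrm{d} B_{\rm m,in}^{(j)}(t) {\mathrm{d} B_{\rm m,in}^{(l){\dag}}}(t)$. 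The first is immediate from Assumption~\ref{InitialState}: in each of these three products there is an annihilation differential acting from the appropriate side on the vacuum factor $|\mathbf{0}\rangle$, and since $b_{\rm m,in}^{(j)}(s)$ is, by Eq.~(\ref{con:bouttMultiPoint}), a linear combination of $b_{\rm in}$ evaluated at shifted times, it annihilates $|\mathbf{0}\rangle$; this is exactly the argument used in Theorem~\ref{ItoSemi} and in Remark~\ref{ItoMarkovianRemark}.

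For the second ingredient, I would first record the multi-point analogue of Eq.~(\ref{con:Multichannle}): reading off the coupling functions $g_j^{(n)}$ from Eq.~(\ref{con:gjn}) and repeating the time-domain reduction that produced Eq.~(\ref{con:bouttMultiPoint}), the equivalent $j$th channel is ${b_{\rm m,in}^{(j)}}^{\dag}(t) = \int_0^t \tilde{\eta}_j(t-\nu)\, b_{\rm in}^{\dag}(\nu)\,\mathrm{d}\nu$ with $\tilde{\eta}_j$ the stated sum of delta functions over the coupling points $z_j^{(1)},\dots,z_j^{(\mathbf{n}_j)}$; this is the kernel of Eq.~(\ref{con:kappajDefMlultiPoint}) with the forward/backward roles of the input field, consistent with $\tilde{\kappa}_j$ in Eq.~(\ref{con:kappajDef2}). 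Substituting this into $[b_{\rm m,in}^{(j)}(s), {b_{\rm m,in}^{(l)}}^{\dag}(\mu)]$ and using $[b_{\rm in}(\nu), b_{\rm in}^{\dag}(\nu')]=\delta(\nu-\nu')$ exactly as in Eq.~(\ref{con:bjrelation}) collapses the double $\nu,\nu'$ integral to the single integral $\int_0^s \tilde{\eta}_j^*(s-\nu)\,\tilde{\eta}_l(\mu-\nu)\,\mathrm{d}\nu$, which is the displayed commutator.

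The It\={o} element is then obtained by coarse-graining. Using the vanishing of ${\mathrm{d} B_{\rm m,in}^{(l){\dag}}}\mathrm{d} B_{\rm m,in}^{(j)}$ from the first part, $\mathrm{d} B_{\rm m,in}^{(j)}(t) {\mathrm{d} B_{\rm m,in}^{(l){\dag}}}(t)$ equals the commutator $\left[\mathrm{d} B_{\rm m,in}^{(j)}(t), {\mathrm{d} B_{\rm m,in}^{(l){\dag}}}(t)\right] = \int_t^{t+\mathrm{d}t}\int_t^{t+\mathrm{d}t} \left[b_{\rm m,in}^{(j)}(s), {b_{\rm m,in}^{(l)}}^{\dag}(\mu)\right]\mathrm{d}s\,\mathrm{d}\mu$, which is precisely the passage from Eq.~(\ref{con:QSDEoperatorXCal3}) to Eq.~(\ref{con:NonMarkovianIto2}) and yields the stated formula. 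If one additionally wants the explicit closed form (as in Theorem~\ref{ItoSemi}), expand $\tilde{\eta}_j^*\tilde{\eta}_l$ into the four families of shifted delta functions $\delta\bigl(s-\mu\mp(z_j^{(n)}\pm z_l^{(m)})/c\bigr)$; integrating $\delta(s-\mu-a)$ over $[t,t+\mathrm{d}t]^2$ gives $\max(\mathrm{d}t-|a|,0)$, and the hypothesis $\mathrm{d}t<z_1^{(1)}/c$ forces every ``sum'' shift to satisfy $(z_j^{(n)}+z_l^{(m)})/c\ge 2z_1^{(1)}/c>2\,\mathrm{d}t$ and hence to drop, leaving only the diagonal contribution $\sum_{n}\bigl(\gamma_{jR}^{(n)}+\gamma_{jL}^{(n)}\bigr)\mathrm{d}t$ at $j=l$, $n=m$ together with the ``difference'' terms that survive when $|z_j^{(n)}-z_l^{(m)}|/c\le \mathrm{d}t$.

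I expect no genuinely new analytic difficulty here; the only thing to watch is the bookkeeping, since the kernel is now a double sum over coupling points and the number of delta-function families grows quadratically in $\mathbf{n}_j$, including intra-atom cross terms ($j=l$, $n\neq m$) that did not appear in Theorem~\ref{ItoSemi}. One must verify term by term that the same threshold dichotomy applies, and keep the forward-/backward-propagating sign conventions of Eq.~(\ref{con:gjn}) and Eq.~(\ref{con:kappajDef2}) consistent so that the $-z_j^{(n)}/c$ and $+z_j^{(n)}/c$ shifts are attached to the $\sqrt{\gamma_{jR}^{(n)}}$ and $\sqrt{\gamma_{jL}^{(n)}}$ amplitudes respectively; this purely organizational step is the main obstacle.
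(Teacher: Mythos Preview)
Your proposal is correct and follows exactly the approach the paper takes: the paper's own proof of this theorem consists of the single sentence that Eq.~(\ref{con:NonMarkovianItoMultiPoint}) is ``generalized from Theorem~\ref{ItoSemi}, and the kernel $\tilde{\eta}_j$ is generalized from the kernel for single-point coupling circumstance in Eq.~(\ref{con:kappajDef2}),'' with all further details omitted. You have in fact written out more than the paper does, including the explicit threshold analysis of the surviving delta contributions; none of this is required for the theorem as stated, but it is consistent with and a correct elaboration of the paper's argument.
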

\begin{proof}
Eq.~(\ref{con:NonMarkovianItoMultiPoint}) is generalized from Theorem~\ref{ItoSemi}, and the kernel $\tilde{\eta}_j$ is generalized from the kernel for single-point coupling circumstance in Eq.~(\ref{con:kappajDef2}), thus more details are omitted. \qed
\end{proof}

\begin{Theorem} \label{ItoInfiniteMultipoint}
Based on Assumption~\ref{InitialStateInfiniteWave}, the non-Markovian \rm{It\={o}} rule for atoms coupled to an infinite waveguide via multiple points reads
\begin{subequations} \label{con:NonMarkovianItoInf}
\begin{numcases}{}
\mathrm{d} C_{\rm m,in}^{(j)}(t) {\mathrm{d} C_{\rm m,in}^{(l){\dag}}}(t) =\int_{t}^{t+\mathrm{d}t}\int_{t}^{t+\mathrm{d}t}  \left[c_{\rm m,in}^{(j)}(s), c_{\rm m, in}^{(l)\dag}(\mu)\right]\mathrm{d}s\mathrm{d}\mu,\\
\mathrm{d} D_{\rm m,in}^{(j)}(t) {\mathrm{d} D_{\rm m,in}^{(l){\dag}}}(t) =\int_{t}^{t+\mathrm{d}t}\int_{t}^{t+\mathrm{d}t}  \left[d_{\rm m,in}^{(j)}(s), d_{\rm m, in}^{(l)\dag}(\mu)\right]\mathrm{d}s\mathrm{d}\mu,
\end{numcases}
\end{subequations}
\begin{subequations} 
\begin{numcases}{}
\left[ c_{\rm m,in}^{(j)} (s), c_{\rm m,in}^{(l)\dag} (\mu) \right] =\int_0^s \hat{\eta}_j^* (s-\nu) \hat{\eta}_l (\mu-\nu) \mathrm{d}\nu,\\
\left[ d_{\rm m,in}^{(j)} (s), d_{\rm m,in}^{(l)\dag} (\mu) \right] =\int_0^s \breve{\eta}_j^* (s-\nu) \breve{\eta}_l (\mu-\nu) \mathrm{d}\nu,
\end{numcases}
\end{subequations}
and
\begin{subequations} \label{con:CDkernelMultipoint}
\begin{numcases}{}
\hat{\eta}_j (\mu-\nu)  = \sum_{n=1}^{\mathbf{n}_j}\sqrt{\gamma_{jR}^{(n)} }\delta\left(\mu- \nu-\frac{z_j^{(n)}}{c}  \right) ,\\
\breve{\eta}_j (\mu-\nu) =\sum_{n=1}^{\mathbf{n}_j} \sqrt{\gamma_{jL}^{(n)}} \delta\left(\mu- \nu+\frac{z_j^{(n)}}{c}  \right) .
\end{numcases}
\end{subequations}
Besides, $\mathrm{d} C_{\rm m,in}^{(j)} \mathrm{d} C_{\rm m,in}^{(l)} = {\mathrm{d} C_{\rm m,in}^{(j){\dag}}} {\mathrm{d} C_{\rm m,in}^{(l){\dag}}} ={\mathrm{d} C_{\rm m,in}^{(l){\dag}}}\mathrm{d} C_{\rm m,in}^{(j)} = 0$ and $\mathrm{d} D_{\rm m,in}^{(j)} \mathrm{d} D_{\rm m,in}^{(l)} = {\mathrm{d} D_{\rm m,in}^{(j){\dag}}} {\mathrm{d} D_{\rm m,in}^{(l){\dag}}} ={\mathrm{d} D_{\rm m,in}^{(l){\dag}}} \mathrm{d} D_{\rm m,in}^{(j)} = 0$.
\end{Theorem}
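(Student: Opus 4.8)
The plan is to prove Theorem~\ref{ItoInfiniteMultipoint} by mirroring the arguments used for Theorems~\ref{ItoSemi} and~\ref{ItoInfinite}, in three steps: (i) dispose of the vanishing second-order products, (ii) derive the explicit commutator kernels for the multi-point channels, and (iii) carry out the infinitesimal double integration to read off the It\={o} products. The only genuinely new ingredient over Theorem~\ref{ItoInfinite} is the sum over connecting points, so most of the work is bookkeeping rather than new analysis.

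\textbf{Step 1 (vanishing products).} First I would argue that $\mathrm{d} C_{\rm m,in}^{(j)} \mathrm{d} C_{\rm m,in}^{(l)} = {\mathrm{d} C_{\rm m,in}^{(j){\dag}}} {\mathrm{d} C_{\rm m,in}^{(l){\dag}}} = {\mathrm{d} C_{\rm m,in}^{(l){\dag}}}\mathrm{d} C_{\rm m,in}^{(j)} = 0$, and likewise for the $D$-channels. With the increments defined in analogy to Eqs.~(\ref{con:dCdefine},\ref{con:dDdefine}), $\mathrm{d} C_{\rm m,in}^{(j)}$ and $\mathrm{d} D_{\rm m,in}^{(j)}$ are time-ordered integrals of the annihilation-type operators $c_{\rm m,in}^{(j)}$ and $d_{\rm m,in}^{(j)}$, which under Assumption~\ref{InitialStateInfiniteWave} annihilate the waveguide vacuum; the reasoning is verbatim that of Theorem~\ref{ItoSemi}. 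Moreover, since the left- and right-propagating modes are independent ($[l_\omega(\omega),r_\omega^{\dag}(\omega')]=0$, cf. Eq.~(\ref{con:LeftRightNoncommu}) and Lemma~\ref{lemmabinjtTwoatomInfinte}), the $C$- and $D$-products decouple and no mixed $\mathrm{d}C\,\mathrm{d}D^{\dag}$ terms arise.

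\textbf{Step 2 (commutator kernels).} Replacing $\tilde{H}_{\rm I}^{(j)}$ by the multi-point Hamiltonian $\tilde{H}_{\rm m,I}^{(j)}$ of Eq.~(\ref{con:HintjchiralGiant2}) in the derivation leading to Eq.~(\ref{con:HeffInfiniteV2}), the equivalent time-domain input operators pick up one term per connecting point: $c_{\rm m,in}^{(j)}(t)=\sum_{n=1}^{\mathbf{n}_j}\sqrt{\gamma_{jR}^{(n)}}\,c_{\rm in}(t-z_j^{(n)}/c)$ and $d_{\rm m,in}^{(j)}(t)=\sum_{n=1}^{\mathbf{n}_j}\sqrt{\gamma_{jL}^{(n)}}\,d_{\rm in}(t+z_j^{(n)}/c)$, i.e. the single-point definitions of Eq.~(\ref{con:InfiniteWaveJinput}) summed over $n$. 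Writing these as convolutions $c_{\rm m,in}^{(j)}(t)=\int_0^t \hat{\eta}_j(t-\nu)\,c_{\rm in}(\nu)\,\mathrm{d}\nu$ and $d_{\rm m,in}^{(j)}(t)=\int_0^t \breve{\eta}_j(t-\nu)\,d_{\rm in}(\nu)\,\mathrm{d}\nu$ with the kernels $\hat{\eta}_j,\breve{\eta}_j$ of Eq.~(\ref{con:CDkernelMultipoint}), the same manipulation that produced Eq.~(\ref{con:bjrelation}), combined with $[c_{\rm in}(s),c_{\rm in}^{\dag}(\mu)]=[d_{\rm in}(s),d_{\rm in}^{\dag}(\mu)]=\delta(s-\mu)$, collapses the defining double integral to $[c_{\rm m,in}^{(j)}(s),c_{\rm m,in}^{(l)\dag}(\mu)]=\int_0^s \hat{\eta}_j^*(s-\nu)\hat{\eta}_l(\mu-\nu)\,\mathrm{d}\nu$, and likewise for the $D$-channel with $\breve{\eta}$. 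Evaluating the convolution explicitly yields a finite sum of delta functions at the shifts $(z_j^{(n)}-z_l^{(m)})/c$ with weights $\sqrt{\gamma_{jR}^{(n)}\gamma_{lR}^{(m)}}$ (respectively $\sqrt{\gamma_{jL}^{(n)}\gamma_{lL}^{(m)}}$) -- the natural multi-point analogue of Lemmas~\ref{lemmadintInfinite} and~\ref{lemmabinjtTwoatomInfinte}.

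\textbf{Step 3 (the It\={o} rule) and the main obstacle.} Substituting these commutators into $\int_t^{t+\mathrm{d}t}\!\int_t^{t+\mathrm{d}t}[\cdots]\,\mathrm{d}s\,\mathrm{d}\mu$, exactly as in the passage from Eq.~(\ref{con:ItoNewcal}) to Eq.~(\ref{con:ItoNewcal3}) for Theorem~\ref{ItoSemi}, only the delta functions whose shift has modulus at most $\mathrm{d}t$ survive the integration, each contributing its weight times $\mathrm{d}t$. I expect the only real effort to be the bookkeeping of which of the $O(\mathbf{n}_j\mathbf{n}_l)$ shifts $|z_j^{(n)}-z_l^{(m)}|/c$ fall below $\mathrm{d}t$: for $j=l$ these include at least the $n=m$ self-terms, producing $\sum_n \gamma_{jR}^{(n)}$ (resp. $\sum_n \gamma_{jL}^{(n)}$), while for $j\neq l$ with well-separated atoms all shifts exceed $\mathrm{d}t$ and the product vanishes. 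Since the theorem is stated in the compact integral form rather than a fully resolved case split, this last step is essentially notational -- one simply needs a smallness assumption on $\mathrm{d}t$ (analogous to $\mathrm{d}t<z_1^{(1)}/c$ in Theorem~\ref{ItoSemiInfiniteMultipoint}) if one wishes to reduce it to a closed form; the genuine content, namely the kernels $\hat{\eta}_j$ and $\breve{\eta}_j$, is pinned down entirely in Step 2.
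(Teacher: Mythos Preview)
Your proposal is correct and follows essentially the same route as the paper, which simply states that the theorem is generalized from the single-point case in Theorem~\ref{ItoInfinite} with a proof similar to that of Theorem~\ref{ItoSemiInfiniteMultipoint} and omits all details. Your three-step outline (vanishing products via the vacuum assumption, convolution-kernel commutators, and the infinitesimal double integration) is precisely the content the paper leaves implicit, so you have in fact written out more than the paper itself does.
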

\begin{proof}
This theorem is generalized from the single-point coupling case in Theorem~\ref{ItoInfinite}, and the proof is similar to that for Theorem~\ref{ItoSemiInfiniteMultipoint}, thus more details are omitted. \qed
\end{proof}

\subsection{Simplifications with the Markovian approximation}\label{MarkovianSubsection}
In this subsection, we further clarify the simplification of the dynamics above from the perspective of input-output relationships and averaged master equations. Based on the following assumption, we clarify the properties of the quantum networks above within the Markovian approximation.
\begin{assumption} \label{Aromfrequency}
We assume that the resonant frequencies of atoms are identical as $\omega_{1}^a = \omega_{2}^a = \cdots = \omega_{N}^a = \omega_{a}$.
\end{assumption}

\subsubsection{Simplifications for input-output relationships}
Based on Assumption~\ref{Aromfrequency}, for each atom, $\sigma_j^-\left (t - t'\right) \approx e^{i\omega_a t'} \sigma_j^-\left (t\right) $~\citep{lalumiere2013input}, the non-Markovian delay-dependent input-output formalism can be represented with a linear equation with delayed phases rather than time delays. As an illustration, Eq.~(\ref{con:boutt}) can be reduced to
\begin{equation} \label{con:bouttMarkovianApp}
\begin{aligned}
b_{\rm out}'(t) =b_{\rm in}(t)  + \sum_j     \left (\sqrt{\gamma_{jL}}e^{i\omega_a z_j/c}-\sqrt{\gamma_{jR}} e^{-i\omega_a z_j/c}\right)\sigma_j^-\left (t\right),
\end{aligned}
\end{equation}
where $b_{\rm out}'(t)$ is a simplified format of $b_{\rm out}(t)$. Similarly, Eqs.~(\ref{con:bouttInFWdout},\ref{con:bouttInFWcout}) can be simplified to
\begin{subequations} \label{con:InfiniteOutputSimp}
\begin{numcases}{}
d_{\rm out}'(t) = d_{\rm in}(t)  +\sum_j    \sqrt{\gamma_{jL}} e^{i\omega_a z_j/c}\sigma_j^- \left(t \right), \label{InfdoutSim}\\
c_{\rm out}'(t) = c_{\rm in}(t)  +\sum_j    \sqrt{\gamma_{jR}} e^{-i\omega_a z_j/c}\sigma_j^- \left(t\right).\label{InfcoutSim}
\end{numcases}
\end{subequations}

On one hand, for the input-output relationship for semi-infinite waveguide in Eq.~(\ref{con:bouttMultiPoint}) with $N=1$, within the Markovian approximation, the input-output property for the multi-point coupling method can be comparable to the single-point coupling approach according to the following proposition.
\begin{mypro}  \label{MultiPointEQsinglePropo}
The difference between the input and output fields for multi-point coupling scheme between one two-level atom and a semi-infinite waveguide is identical to that of the single-point coupling scheme when   
\begin{small}
\begin{subequations} \label{con:multiPEQSingleP}
\begin{numcases}{}
\left(\sqrt{\gamma_{1L}} + \sqrt{\gamma_{1R}}\right) \sin\left( \omega_a \frac{z_1}{c}\right)= \sum_{n=1}^{\mathbf{n}_1} \left[\sqrt{\gamma_{1L}^{(n)} }+ \sqrt{\gamma_{1R}^{(n)}}\right]\sin\left[\omega_a \frac{z_1^{(n)}}{c}\right],\\
\left(\sqrt{\gamma_{1L}} - \sqrt{\gamma_{1R}}\right) \cos\left( \omega_a \frac{z_1}{c}\right) = \sum_{n=1}^{\mathbf{n}_1} \left[\sqrt{\gamma_{1L}^{(n)}} - \sqrt{\gamma_{1R}^{(n)}}\right]\cos\left[\omega_a \frac{z_1^{(n)}}{c}\right].
\end{numcases}
\end{subequations}
\end{small}%
\end{mypro}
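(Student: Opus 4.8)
The plan is to evaluate, within the Markovian approximation, the operator $b_{\rm out}-b_{\rm in}$ for the single-point scheme and $b_{\rm m,out}-b_{\rm m,in}$ for the multi-point scheme, both specialized to $N=1$, and then read off the conditions on the coupling parameters that make the two coincide.

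First I would specialize Eq.~(\ref{con:bouttMarkovianApp}) to $N=1$, so that the single-point input-output difference is
\[
b_{\rm out}'(t)-b_{\rm in}(t)=\left(\sqrt{\gamma_{1L}}\,e^{i\omega_a z_1/c}-\sqrt{\gamma_{1R}}\,e^{-i\omega_a z_1/c}\right)\sigma_1^-(t).
\]
Next I would take the multi-point input-output relation Eq.~(\ref{con:bouttMultiPoint}) with $N=1$ and apply the same Markovian approximation $\sigma_1^-(t-t')\approx e^{i\omega_a t'}\sigma_1^-(t)$ used to derive Eq.~(\ref{con:bouttMarkovianApp}); taking $t'=z_1^{(n)}/c$ and $t'=-z_1^{(n)}/c$ replaces $\sigma_1^-(t\mp z_1^{(n)}/c)$ by $e^{\pm i\omega_a z_1^{(n)}/c}\sigma_1^-(t)$, so that
\[
b_{\rm m,out}'(t)-b_{\rm m,in}(t)=\sum_{n=1}^{\mathbf{n}_1}\left(\sqrt{\gamma_{1L}^{(n)}}\,e^{i\omega_a z_1^{(n)}/c}-\sqrt{\gamma_{1R}^{(n)}}\,e^{-i\omega_a z_1^{(n)}/c}\right)\sigma_1^-(t).
\]

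Since both expressions are $c$-number multiples of the same nonzero atomic operator $\sigma_1^-(t)$ (and the two input fields are identified), the requirement that the two differences agree as operator identities is equivalent to equality of the two scalar prefactors. I would then expand each prefactor with $e^{\pm i\theta}=\cos\theta\pm i\sin\theta$ and separate the resulting complex identity into real and imaginary parts. The imaginary part gives $(\sqrt{\gamma_{1L}}+\sqrt{\gamma_{1R}})\sin(\omega_a z_1/c)=\sum_{n=1}^{\mathbf{n}_1}(\sqrt{\gamma_{1L}^{(n)}}+\sqrt{\gamma_{1R}^{(n)}})\sin(\omega_a z_1^{(n)}/c)$ and the real part gives $(\sqrt{\gamma_{1L}}-\sqrt{\gamma_{1R}})\cos(\omega_a z_1/c)=\sum_{n=1}^{\mathbf{n}_1}(\sqrt{\gamma_{1L}^{(n)}}-\sqrt{\gamma_{1R}^{(n)}})\cos(\omega_a z_1^{(n)}/c)$, which are exactly the two conditions in Eq.~(\ref{con:multiPEQSingleP}); conversely, if these hold the prefactors coincide and hence so do the input-output differences.

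This is essentially a direct computation, so there is no deep obstacle; the only points needing care are to apply the Markovian phase substitution consistently to \emph{both} the retarded and advanced atomic arguments (with opposite signs of the phase), and to note that the comparison, though formally between operator-valued equations, reduces to a scalar comparison because the only atomic content on each side is the single operator $\sigma_1^-(t)$.
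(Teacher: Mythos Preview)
Your proposal is correct and follows essentially the same route as the paper: apply the Markovian phase approximation to the multi-point input-output relation Eq.~(\ref{con:bouttMultiPoint}) with $N=1$, compare the resulting scalar prefactor of $\sigma_1^-(t)$ with that in Eq.~(\ref{con:bouttMarkovianApp}), and read off the conditions. The paper simply says ``compare the parameters'' where you explicitly carry out the Euler expansion and separate real and imaginary parts, but that is just filling in the arithmetic the paper omits.
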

\begin{proof}
Applying the Markovian approximation upon Eq.~(\ref{con:bouttMultiPoint}), 
\begin{equation} \label{con:bouttMultiPointN1MaAProx}
\begin{aligned}
&b_{\rm m,out}(t) \approx b_{\rm m,in}(t) + \sum_{n=1}^{\mathbf{n}_1}    \left [\sqrt{\gamma_{1L}^{(n)}} e^{i\frac{\omega_a z_1^{(n)}}{c}}-\sqrt{\gamma_{1R}^{(n)}}e^{-i\frac{\omega_a z_1^{(n)}}{c}}\right]\sigma_1^-(t).
\end{aligned}
\end{equation}
Then Eq.~(\ref{con:multiPEQSingleP}) can be derived by comparing the parameters of Eq.~(\ref{con:bouttMultiPointN1MaAProx}) with Eq.~(\ref{con:bouttMarkovianApp}). \qed
\end{proof}

On the other hand, when one two-level atom is coupled to an infinite waveguide via multi-point coupling, we have the following proposition. 
\begin{mypro} \label{MultiPointEQsinglePropoInfinite}
The difference between the input and output fields for multi-point coupling scheme between one two-level atom and an infinite waveguide is identical to that of the single-point coupling scheme when 
\begin{subequations} \label{con:multiPEQSinglePInfinite}
\begin{numcases}{}
\sqrt{\gamma_{1L}} e^{ i\omega_a z_1/c} = \sum_{n=1}^{\mathbf{n}_1} \sqrt{\gamma_{1L}^{(n)}}e^{ i\omega_a z_1^{(n)}/c},\\
\sqrt{\gamma_{1R}} e^{ -i\omega_a z_1/c} = \sum_{n=1}^{\mathbf{n}_1} \sqrt{\gamma_{1R}^{(n)}}e^{ -i\omega_a z_1^{(n)}/c}.
\end{numcases}
\end{subequations}
\end{mypro}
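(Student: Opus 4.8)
The plan is to follow the same route as the proof of Proposition~\ref{MultiPointEQsinglePropo}, specializing the general multi-point input-output relations to a single atom and then invoking the Markovian approximation so that all time delays collapse into phases. First I would set $N=1$ in the infinite-waveguide multi-point relations~(\ref{eq:InputOutputMultiPInfintie}), which gives $d_{\rm m,out}(t)-d_{\rm m,in}(t)=\sum_{n=1}^{\mathbf{n}_1}\sqrt{\gamma_{1L}^{(n)}}\,\sigma_1^-(t-z_1^{(n)}/c)$ for the left-propagating channel and $c_{\rm m,out}(t)-c_{\rm m,in}(t)=\sum_{n=1}^{\mathbf{n}_1}\sqrt{\gamma_{1R}^{(n)}}\,\sigma_1^-(t+z_1^{(n)}/c)$ for the right-propagating channel.

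Next I would apply Assumption~\ref{Aromfrequency} together with the accompanying relation $\sigma_1^-(t-t')\approx e^{i\omega_a t'}\sigma_1^-(t)$, which turns $\sigma_1^-(t\mp z_1^{(n)}/c)$ into $e^{\pm i\omega_a z_1^{(n)}/c}\sigma_1^-(t)$. Substituting and factoring out $\sigma_1^-(t)$ yields
\begin{align*}
d_{\rm m,out}(t)-d_{\rm m,in}(t)&\approx\Bigl(\sum_{n=1}^{\mathbf{n}_1}\sqrt{\gamma_{1L}^{(n)}}\,e^{i\omega_a z_1^{(n)}/c}\Bigr)\sigma_1^-(t),\\
c_{\rm m,out}(t)-c_{\rm m,in}(t)&\approx\Bigl(\sum_{n=1}^{\mathbf{n}_1}\sqrt{\gamma_{1R}^{(n)}}\,e^{-i\omega_a z_1^{(n)}/c}\Bigr)\sigma_1^-(t).
\end{align*}
I would then place these side by side with the single-point simplified relations obtained by setting $N=1$ in~(\ref{con:InfiniteOutputSimp}), namely $d_{\rm out}'(t)-d_{\rm in}(t)=\sqrt{\gamma_{1L}}\,e^{i\omega_a z_1/c}\sigma_1^-(t)$ and $c_{\rm out}'(t)-c_{\rm in}(t)=\sqrt{\gamma_{1R}}\,e^{-i\omega_a z_1/c}\sigma_1^-(t)$. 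Since $\sigma_1^-(t)$ is a common factor, the two input-output differences agree if and only if the scalar prefactors agree channel by channel, which is exactly the pair of conditions~(\ref{con:multiPEQSinglePInfinite}).

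I do not expect a substantive obstacle here; the only subtlety, and the place where this differs from Proposition~\ref{MultiPointEQsinglePropo}, is that in the infinite-waveguide case the left- and right-propagating modes are independent (cf.\ $[d_{\rm in}^{(j)}(t),{c_{\rm in}^{(l)}}^{\dag}(t')]=0$ in Eq.~(\ref{con:LeftRightNoncommu})), so the matching splits cleanly into two separate complex equations rather than forcing a decomposition of one complex equation into its real and imaginary (cosine and sine) parts as in the semi-infinite mirror-terminated setting. Accordingly, the one point I would state carefully in the write-up is the validity regime of the approximation $\sigma_1^-(t-t')\approx e^{i\omega_a t'}\sigma_1^-(t)$ (narrow-band atom, delays short compared to the atomic timescale), under which the equality in~(\ref{con:multiPEQSinglePInfinite}) is to be understood.
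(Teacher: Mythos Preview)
Your proposal is correct and matches the paper's approach exactly: the paper's proof simply says to compare Eq.~(\ref{eq:InputOutputMultiPInfintie}) with Eq.~(\ref{con:InfiniteOutputSimp}) under the Markovian approximation, in analogy with Proposition~\ref{MultiPointEQsinglePropo}, and you have carried this out in full detail. Your additional remark on why the matching splits into two independent complex equations (rather than sine/cosine parts as in the semi-infinite case) is a nice clarification that goes slightly beyond what the paper states.
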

\begin{proof}
The proof is similar to that of Proposition~\ref{MultiPointEQsinglePropo} by comparing Eq.~(\ref{con:InfiniteOutputSimp}) with Eq.~(\ref{eq:InputOutputMultiPInfintie}) within the Markovian approximation, more details are omitted. \qed
\end{proof}

Propositions~\ref{MultiPointEQsinglePropo} and \ref{MultiPointEQsinglePropoInfinite}  show that when $e^{ i\omega_a z_1/c} = e^{ i\omega_a z_1^{(n)}/c}$ for $n=1,2,\cdots,\mathbf{n}_1$, the difference between the input and output fields when one atom is coupled to the waveguide at multiple points is equivalent to the single-point circumstance if $\sqrt{\gamma_{1L} }= \sum_{n=1}^{\mathbf{n}_1} \sqrt{\gamma_{1L}^{(n)}}$ and $\sqrt{\gamma_{1R}} = \sum_{n=1}^{\mathbf{n}_1} \sqrt{\gamma_{1R}^{(n)}}$.

\subsubsection{Master equation modeling with noise commutators}
Within the Markovian approximation, the integral process in QSDE (\ref{con:QSDEGeneralFormat}) can be simplified as
\begin{equation} \label{con:QSDEGeneralFormatMarkAppro}
\begin{aligned}
\int_0^t \mathbf{\mathfrak{L}}_{jl}(t-\tau) \mathbf{L}_l(\tau) \mathrm{d} \tau  = \int_0^t \mathbf{\mathfrak{m}}_{jl}(t-\tau) \mathrm{d} \tau\mathbf{L}_l(t)  ,
\end{aligned}
\end{equation}
where the integral kernel can be simplified as
\begin{equation} \label{con:mjlSemiInfiniteMark}
\begin{aligned}
\mathbf{\mathfrak{m}}_{jl}(t-\tau) &= \left [ {\mathfrak{b}}_{\rm in}^{(j)} (t),  \mathfrak{b}_{\rm in}^{(l)\dag} (\tau) \right]e^{i\omega_a (t-\tau)},
\end{aligned}
\end{equation}
which is determined by the quantum noise commutators derived in Sec.~\ref{Sec:NonMarknoise}. Similar for the Hermitian conjugation of Eq.~(\ref{con:QSDEGeneralFormatMarkAppro}).

We conclude  $\mathrm{d} B_{\rm m,in}^{(j)}$  in Theorem~\ref{ItoSemiInfiniteMultipoint} and $\mathrm{d} C_{\rm m,in}^{(j)} + \mathrm{d} D_{\rm m,in}^{(j)}$ in Theorem~\ref{ItoInfiniteMultipoint} as  $\mathrm{d} \mathfrak{B}_{\rm m,in}^{(j)} $, and $\mathfrak{b}_{\rm m,in}^{(j)} = \mathrm{d} \mathfrak{B}_{\rm m,in}^{(j)}/\mathrm{d}t$. Based on the non-Markovian integral QSDE (\ref{con:QSDEGeneralFormat}) with the kernel determined by quantum noises, the master equation can be derived by transforming to the Schr\"{o}dinger picture, as elaborated in detail by~\citep{zhang2013non}. Moreover, within the Markovian approximation, the master equation can be simplified as
\begin{equation} \label{con:masterquationGeneralMark}
\begin{aligned}
\dot{\rho}  = & -i \left [H,\rho  \right]  +\sum_{j,l=1}^N  \left( \int_0^t \mathbf{\mathfrak{m}}_{jl}(t-\tau)  \mathrm{d} \tau \left[L_l\rho,L_j^{\dag} \right]    + \int_0^t \mathbf{\mathfrak{m}}_{jl}^*(t-\tau)  \mathrm{d} \tau  \left[ L_j ,\rho L_l^{\dag} \right]\right),
\end{aligned}
\end{equation}
where the integral kernels with complex values are given by Eq.~(\ref{con:mjkernelMultipoint}) in the following proposition, and the single-point coupling scenario in Eq.~(\ref{con:mjlSemiInfiniteMark}) can be regarded as a simplified case with $\mathbf{n}_j = 1$. The proof of Proposition~\ref{MultiPointMasterkernel} is similar to the derivation of Eq.~(\ref{con:mjlSemiInfiniteMark}), thus is omitted.
\begin{mypro} \label{MultiPointMasterkernel}
The integral kernel in the master equation~(\ref{con:masterquationGeneralMark}) for the multi-point coupling scenario within the Markovian approximation  is 
\begin{equation} \label{con:mjkernelMultipoint}
\begin{aligned}
\mathbf{\mathfrak{m}}_{jl}(t-\tau) &= \left [\mathfrak{b}_{\rm m,in}^{(j)} (t), {\mathfrak{b}_{\rm m,in}^{(l){\dag} }}(\tau) \right]e^{i\omega_a (t-\tau)}.
\end{aligned}
\end{equation}
\end{mypro}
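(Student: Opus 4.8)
The plan is to start from the multi-point generalization of the operator QSDE~(\ref{con:QSDEGeneralFormat}) and pass to the averaged master equation exactly as in the single-point derivation, then impose the Markovian approximation of Assumption~\ref{Aromfrequency} to collapse the operator-valued memory kernel onto a scalar phase. First I would write the multi-point QSDE for an arbitrary system operator $\mathbb{X}$ by replacing, as prescribed in Sec.~\ref{multipoinggeneral}, the noise operators $\mathrm{d}\mathfrak{B}_{\rm in}^{(j)},\mathfrak{b}_{\rm in}^{(j)}$ in~(\ref{con:QSDEGeneralFormat},\ref{con:QSDEKernelSemiInfinite}) by $\mathrm{d} B_{\rm m,in}^{(j)},b_{\rm m,in}^{(j)}$ (semi-infinite case) or by $\mathrm{d} C_{\rm m,in}^{(j)}+\mathrm{d} D_{\rm m,in}^{(j)},c_{\rm m,in}^{(j)}+d_{\rm m,in}^{(j)}$ (infinite case), abbreviated $\mathfrak{b}_{\rm m,in}^{(j)}$. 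The gauge/decay term then carries the operator kernel $[\mathfrak{b}_{\rm m,in}^{(j)}(t),{\mathfrak{b}_{\rm m,in}^{(l)\dag}}(\tau)]\,e^{i\tilde{H}_s(t-\tau)}$, and taking the vacuum expectation (Assumptions~\ref{InitialState},\ref{InitialStateInfiniteWave}) and moving to the Schr\"{o}dinger picture as in~\citep{zhang2013non} yields a master equation of the form~(\ref{con:masterquationGeneralMark}), but with $\int_0^t[\mathfrak{b}_{\rm m,in}^{(j)}(t),{\mathfrak{b}_{\rm m,in}^{(l)\dag}}(\tau)]\,e^{i\tilde{H}_s(t-\tau)}\mathbf{L}_l(\tau)\,\mathrm{d}\tau$ appearing inside the dissipator.

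Next I would apply the Markovian approximation. Writing $\mathbf{L}_l(\tau)=\sigma_l^-(\tau)=\sigma_l^-\big(t-(t-\tau)\big)$ and using $\sigma_l^-(t-t')\approx e^{i\omega_a t'}\sigma_l^-(t)$ under Assumption~\ref{Aromfrequency}, the interacting Heisenberg evolution over the short memory window is dominated by the free rotation at $\omega_a$, so $e^{i\tilde{H}_s(t-\tau)}\mathbf{L}_l(\tau)\mapsto e^{i\omega_a(t-\tau)}\mathbf{L}_l(t)$ and $\mathbf{L}_l(t)$ factors out of the integral, precisely the reduction~(\ref{con:QSDEGeneralFormatMarkAppro}). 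This leaves the scalar memory kernel $\mathbf{\mathfrak{m}}_{jl}(t-\tau)=[\mathfrak{b}_{\rm m,in}^{(j)}(t),{\mathfrak{b}_{\rm m,in}^{(l)\dag}}(\tau)]\,e^{i\omega_a(t-\tau)}$, which is~(\ref{con:mjkernelMultipoint}); the Hermitian-conjugate term is handled identically and produces the second kernel in~(\ref{con:masterquationGeneralMark}). Finally I would substitute the explicit commutators: Theorem~\ref{ItoSemiInfiniteMultipoint} gives, in the semi-infinite case, $[\,b_{\rm m,in}^{(j)}(t),b_{\rm m,in}^{(l)\dag}(\tau)\,]=\int_0^t\tilde{\eta}_j^*(t-\nu)\tilde{\eta}_l(\tau-\nu)\,\mathrm{d}\nu$, a finite sum of delta functions centred at the delays $(\pm z_j^{(n)}\pm z_l^{(m)})/c$, while Theorem~\ref{ItoInfiniteMultipoint} gives the analogous right- and left-propagating kernels built from $\hat{\eta}_j,\breve{\eta}_j$ together with the vanishing left--right cross commutator. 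Setting $\mathbf{n}_j\equiv1$ recovers Eq.~(\ref{con:mjlSemiInfiniteMark}), so the single-point formula is the stated special case.

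The main obstacle is making the Markovian step precise rather than merely formal: one must justify replacing the full interacting propagator $e^{i\tilde{H}_s(t-\tau)}$, which by~(\ref{con:QSDEKernelSemiInfinite}) contains the noise-coupling terms as well as $H_s$, by the scalar phase $e^{i\omega_a(t-\tau)}$ over the support of the kernel. This is legitimate only when all round-trip and inter-atomic delays $|z_j^{(n)}\pm z_l^{(m)}|/c$ are small compared with the decay and control timescales, so that the omitted contributions --- commutators of $\sigma_l^-$ with the coupling part of $\tilde{H}_s$, of order $\sqrt{\gamma}\,(t-\tau)$ --- are negligible; quantifying this error and checking that the double sum over connection points $n,m$ does not spoil the estimate is the delicate part. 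Everything else --- the vacuum expectation, the Schr\"{o}dinger-picture transformation, and the bookkeeping of the propagation phases in $\tilde{\eta}_j^*\tilde{\eta}_l$, $\hat{\eta}_j^*\hat{\eta}_l$ and $\breve{\eta}_j^*\breve{\eta}_l$ --- is routine once the single-point template of Eq.~(\ref{con:mjlSemiInfiniteMark}) is in place.
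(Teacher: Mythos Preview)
Your proposal is correct and follows essentially the same route the paper indicates: the paper simply states that the proof ``is similar to the derivation of Eq.~(\ref{con:mjlSemiInfiniteMark}), thus is omitted,'' and your plan---replace the single-point noise operators in the QSDE~(\ref{con:QSDEGeneralFormat},\ref{con:QSDEKernelSemiInfinite}) by their multi-point counterparts $\mathfrak{b}_{\rm m,in}^{(j)}$ as in Sec.~\ref{multipoinggeneral}, then apply the Markovian reduction $e^{i\tilde{H}_s(t-\tau)}\mathbf{L}_l(\tau)\to e^{i\omega_a(t-\tau)}\mathbf{L}_l(t)$ of~(\ref{con:QSDEGeneralFormatMarkAppro})---is precisely that derivation carried over verbatim. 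Your additional remarks on the validity of the Markovian step and the explicit commutator formulas from Theorems~\ref{ItoSemiInfiniteMultipoint},\ref{ItoInfiniteMultipoint} go beyond what the paper spells out but are consistent with it.
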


Considering that the real parts of integrals $\int_0^t \mathbf{\mathfrak{m}}_{jl}(t-\tau)  \mathrm{d} \tau$ are always positive according to Lemmas~\ref{lemmabinjt}-\ref{lemmadintInfinite} or the Markovian approximation upon Eqs.~(\ref{con:GammajlGauge},\ref{con:GammajlGauge2}), thus the complete positivity of the waveguide-QED system can be ensured in Eq.~(\ref{con:masterquationGeneralMark}). However, the atomic evolutions are nolonger unitary after regarding the waveguide as an environment. These properties agree with the traditional modeling approaches in \citep{pichler2015quantum,kockum2018decoherence}.

The master equation~(\ref{con:masterquationGeneralMark}) for the first time clarifies that the dynamics of a non-Markovian quantum network is determined by the commutative relationships among quantum noises. By synthesizing the QSDE for single-point coupling method in Sec.~\ref{generalMaster}, that for multi-point coupling method in Sec.~\ref{multipoinggeneral}, and the master equation (\ref{con:masterquationGeneralMark}), we conclude that non-Markovian quantum network dynamics with time delays in waveguide-QED systems can be expressed in an integral format, where the integral kernels are determined by the quantum noise commutators. This kernel-based approach generalizes the SLH approach such as in ~\citep{gough2009SLH,kockum2018decoherence,soro2022chiral,combes2017slh} or the traditional approach via averaging the environment such as in \citep{pichler2015quantum} to the circumstance that a delayed quantum network interacts with an arbitrary noisy non-Markovian environment, if only the commutators for the environmental noisy operators are clarified. In the following section, we take quantum filtering based on waveguide-QED as an example to clarify possible applications of the above theoretical method.

\section{Applications in quantum network filtering} \label{Sec:filtering}
Using the above kernel-based modeling for a multi-atom quantum network, we analyze the observation and filtering upon the quantum system in Fig.~\ref{fig:NatomWaveguide}. For the semi-infinite waveguide in Fig.~\ref{fig:NatomWaveguide}(a), the measurement information for the quantum system can be collected at the right end of the waveguide via homodyne detection. For the infinite waveguide case in Fig.~\ref{fig:NatomWaveguide}(b), the quantum network can be measured at both the left and right ends. We adopt the assumption that the measurement efficiency of homodyne detection is ideal as $1$.

We first analyze the circumstance that atoms are coupled to the waveguide via single-point coupling, then generalize to the multi-point coupling circumstance.

\subsection{Single-point coupling}
Based on the input-output relationship in Eq.~(\ref{con:bouttV2}) for semi-infinite waveguide and that in Eq.~(\ref{con:InfiniteOutputKernel}) for infinite waveguide, we conclude the  input-out formalism with kernels in a differential format as $
\mathrm{d}\mathbb{B}_{\rm out}(t) = \mathrm{d}\mathbb{B}_{\rm in}(t) +\left[ \sum_j\int_0^t \varkappa_j(t-\nu) \mathbb{L}_j(\nu)\mathrm{d}\nu\right]\mathrm{d}t$.

The measurement output quadrature $\mathbb{Y}(t)$ can be defined as~\citep{van2005quantum,chia2011quantum,gough2012quantumPRAFiltering}
\begin{equation} \label{con:quadratureYt}
\begin{aligned}
\mathbb{Y}(t) = \mathbb{B}_{\rm out}(t) + \mathbb{B}_{\rm out}^{\dag}(t),
\end{aligned}
\end{equation}
and
\begin{equation} \label{con:diffquadratureYt}
\begin{aligned}
&\mathrm{d}\mathbb{Y}(t) =\mathrm{d}\mathbb{B}_{\rm in}(t) + \mathrm{d}\mathbb{B}_{\rm in}^{\dag}(t) + \left\{\sum_j \int_0^t\left[ \varkappa_j(t-\nu) \mathbb{L}_j(\nu) +\varkappa_j^*(t-\nu) \mathbb{L}_j^{\dag}(\nu) \right]\mathrm{d}\nu\right\}\mathrm{d}t.
\end{aligned}
\end{equation}

\begin{remark} \label{Ycommutativity}
The commutativity $\left[\mathbb{Y}(s), \mathbb{Y}(t)\right] = \min(s,t) $ because $\left[\mathrm{d}\mathbb{B}_{\rm in}(t),\mathrm{d}\mathbb{B}_{\rm in}^{\dag}(s) \right] \propto \delta(t-s)$~\citep{gardiner1985input}, similar to~\citep{vladimirov2022moment}, without considering the non-Markovian interactions between the atom network and waveguide.
\end{remark}

When employing the Markovian approximation, Eq.~(\ref{con:diffquadratureYt}) can be simplified as
\begin{equation} \label{con:diffquadratureYtMarkaAp}
\begin{aligned}
\mathrm{d}\mathbb{Y}'(t) =&\mathrm{d}\mathbb{B}_{\rm in}(t) + \mathrm{d}\mathbb{B}_{\rm in}^{\dag}(t) + \sum_j \left[\varkappa_j' \mathbb{L}_j(t) + {\varkappa_j'}^* \mathbb{L}_j^{\dag}(t) \right]\mathrm{d}t,\notag
\end{aligned}
\end{equation}
where $\mathbb{Y}'$ and $\varkappa_j'$ are the simplified formats of $\mathbb{Y}$ and $\varkappa_j$ respectively, and $\varkappa_j'$ can be determined by the complex value phases in Eq.~(\ref{con:bouttMarkovianApp}) and Eq.~(\ref{con:InfiniteOutputSimp}). Then the measurement output can be influenced by the atom-waveguide interactions after integrating $\mathrm{d}\mathbb{Y}(t)$ in Eq.~(\ref{con:diffquadratureYt}) or $\mathrm{d}\mathbb{Y}'(t)$ relied on the Markovian approximation.

\subsection{Multi-point coupling} \label{multipointcouplingFilterM}
When atoms are coupled to a waveguide via multi-point coupling as in Fig.~\ref{fig:NatomWaveguide}(c), the measurement $\mathbb{Y}_{\rm m}(t)$ can be generalized from Eq.~(\ref{con:diffquadratureYt}) as
\begin{small}
\begin{equation} \label{con:diffquadratureYtMultipoint}
\begin{aligned}
\mathrm{d}\mathbb{Y}_{\rm m}(t) &=\mathrm{d}\mathbb{B}_{\rm m,in}(t) + \mathrm{d}\mathbb{B}_{\rm m,in}^{\dag}(t) + \left\{\sum_j \int_0^t \sum_{n=1}^{\mathbf{n}_j}\left[ \varkappa_j^{(n)}(t-\nu) \mathbb{L}_j(\nu)  +{\varkappa_j^{(n)}}^*(t-\nu) \mathbb{L}_j^{\dag}(\nu) \right]\mathrm{d}\nu\right\}\mathrm{d}t\\
&\triangleq \mathrm{d}\mathbb{B}_{\rm m,in}(t) + \mathrm{d}\mathbb{B}_{\rm m,in}^{\dag}(t) + \sum_j \left[\hat{\mathbb{L}}_j(t) + \hat{\mathbb{L}}_j^{\dag}(t)\right]\mathrm{d}t,
\end{aligned}
\end{equation}
\end{small}%
where $\hat{\mathbb{L}}_j(t) = \int_0^t \sum_{n=1}^{\mathbf{n}_j} \varkappa_j^{(n)}(t-\nu) \mathbb{L}_j(\nu) \mathrm{d}\nu$, the kernel $\varkappa_j^{(n)}$ can be determined by Eq.~(\ref{con:kappajDefMlultiPoint}) for semi-infinite waveguide, and Eq.~(\ref{con:kappajDefMlultiPointInfinite}) for infinite waveguide, respectively. 

\subsection{Quantum network filtering within the Markovian approximation}
Based on the general multi-point coupling scenario in Sec.~\ref{multipointcouplingFilterM}, where the single-point coupling instance can be viewed as a simplified version of multi-point coupling with $\mathbf{n}_j = 1$, the measurement output is $\mathbb{Y}_{\rm m}(t)$ in Eq.~(\ref{con:diffquadratureYtMultipoint}), and $\mathcal{Y}_t$ is the algebra generated by $\left\{\mathbb{Y}_{\rm m}(s): 0\leq s\leq t \right\}$. For an operator $\mathbb{X}(t)$, the target of quantum filtering is to determine the evolution of the operator's estimation $\mathbf{\pi}_t(\mathbb{X})$ based on the measurement $\mathcal{Y}_t$ via the least-square estimation, namely~\citep{SIAMFiltering,gough2012quantumPRAFiltering}
\[
\mathbf{\pi}_t(\mathbb{X}) = \mathbb{E} \left(\mathbb{X}(t) | \mathcal{Y}_t\right) \triangleq \rm{Tr} \left[\mathbb{X} \rho_c(t) \right],
\]
where $\mathbb{E}$ represents an estimator, and $\rho_c(t)$ represents the conditional density matrix based on measurements in quantum filtering. The error in filtering is determined by the difference between the measurement output $\mathbb{Y}_{\rm m}(t)$ and the theoretical observer values, namely $W(t)= \mathbb{Y}_{\rm m}(t) -  \sum_j \int_0^t\left[\hat{\mathbb{L}}_j(\nu) + \hat{\mathbb{L}}_j^{\dag}(\nu)\right]\mathrm{d}\nu  =\mathbb{B}_{\rm m,in}(t) + \mathbb{B}_{\rm m,in}^{\dag}(t)$ according to Eq.~(\ref{con:diffquadratureYtMultipoint}). Then we have the following remark.
\begin{remark} \label{dWWiener}
The differential of $W(t)$, denoted as $\mathrm{d} W$, is determined by the input quantum noise, and is independent from the non-Markovian atom-waveguide interactions. Then based on the widely accepted assumption that the initial vacuum noises are uniformly distributed with bandwidth~\citep{gardiner1985input} or the more stringent Assumptions~\ref{InitialState},\ref{InitialStateInfiniteWave} in this research work, $W(t)$ is a Wiener process as in~\citep{SIAMFiltering,xue2016feedback}, and $E\left(\mathrm{d} W^2\right) = \mathrm{d} t$.
\end{remark}

Then the filtering equation for general multi-point coupling circumstance can be expressed by the following proposition within the Markovian approximation. 
\begin{mypro}  \label{FilterEQ}
When the Markovian approximation is adopted for atom-waveguide interactions, the quantum filtering equation for the atom-waveguide network reads
\begin{small}
\begin{equation} \label{con:filteringEquation}
\begin{aligned}
\mathrm{d} \mathbf{\pi}_t(\mathbb{X}) = &\mathbf{\pi}_t \left\{\sum_{j,l=1}^N  \left( \left[\mathbf{L}_j^{\dag}(t),\mathbb{X} \right]\int_0^t \mathbf{\mathfrak{m}}_{jl}(t-\tau) \mathrm{d} \tau\mathbf{L}_l(t) +\int_0^t \mathbf{\mathfrak{m}}^*_{jl}(t-\tau) \mathrm{d} \tau\mathbf{L}_l^{\dag}(t) \left[\mathbb{X},\mathbf{L}_j(t) \right]\right)\right\}\mathrm{d}t \\
&+ \left[ \mathbf{\pi}_t \left(\mathbb{X} \bar{\mathbb{L}} + \bar{\mathbb{L}}^{\dag}\mathbb{X} \right) - \mathbf{\pi}_t \left(\bar{\mathbb{L}} + \bar{\mathbb{L}}^{\dag} \right)\mathbf{\pi}_t \left(\mathbb{X}\right)\right] \left[ \mathrm{d}\mathbb{Y}_{\rm m}  - \mathbf{\pi}_t \left(\bar{\mathbb{L}} + \bar{\mathbb{L}}^{\dag} \right)\mathrm{d}t\right],
\end{aligned}
\end{equation}
\end{small}%
where integral kernels $\mathbf{\mathfrak{m}}_{jl}(t-\tau)$ are given by Eq.~(\ref{con:mjlSemiInfiniteMark}) for single-point coupling and by Eq.~(\ref{con:mjkernelMultipoint}) for multi-point coupling,
\begin{equation} \label{con:EffectiveL}
\begin{aligned}
\bar{\mathbb{L}} =\sum_{j} \sum_{n=1}^{\mathbf{n}_j}   l_{jn} \mathbb{L}_j,
\end{aligned}
\end{equation}
with $l_{jn} =  \sqrt{\gamma_{jL}^{(n)} }e^{i\omega_a z_j^{(n)}/c}-\sqrt{\gamma_{jR}^{(n)}}e^{-i\omega_a z_j^{(n)}/c}$ for the semi-infinite waveguide, $l_{jn}= \sum_{n=1}^{\mathbf{n}_1} \sqrt{\gamma_{jL}^{(n)}}e^{ i\omega_a z_j^{(n)}/c}$ for the infinite waveguide measured at the left end, and $l_{jn}= \sum_{n=1}^{\mathbf{n}_1} \sqrt{\gamma_{jR}^{(n)}}e^{ -i\omega_a z_j^{(n)}/c}$ for the infinite waveguide measured at the right end.
\end{mypro}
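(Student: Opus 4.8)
The plan is to run the standard reference‑probability / innovations derivation of a Belavkin filter, adapted to the memory‑kernel structure set up in Sec.~\ref{generalMaster} and Sec.~\ref{multipointcouplingFilterM}. First I would assemble the two inputs. The dynamics of $\mathbb{X}(t)$ is the QSDE~(\ref{con:QSDEGeneralFormat}) within the Markovian approximation, whose drift defines a time‑dependent generator built from $H$ and the c‑number weights $\int_0^t\mathbf{\mathfrak{m}}_{jl}(t-\tau)\,\mathrm{d}\tau$ via Eq.~(\ref{con:QSDEGeneralFormatMarkAppro}) and Eq.~(\ref{con:mjlSemiInfiniteMark}); the measurement process is $\mathbb{Y}_{\rm m}(t)$ of Eq.~(\ref{con:diffquadratureYtMultipoint}), which under $\sigma_j^-(t-t')\approx e^{i\omega_a t'}\sigma_j^-(t)$ of Sec.~\ref{MarkovianSubsection} collapses to $\mathrm{d}\mathbb{Y}_{\rm m}=\mathrm{d}W+(\bar{\mathbb{L}}+\bar{\mathbb{L}}^{\dag})\mathrm{d}t$ with $\bar{\mathbb{L}}$ as in Eq.~(\ref{con:EffectiveL}) and $\mathrm{d}W=\mathrm{d}\mathbb{B}_{\rm m,in}+\mathrm{d}\mathbb{B}_{\rm m,in}^{\dag}$; the coefficients $l_{jn}$ in the three cases are read off by matching Eq.~(\ref{con:diffquadratureYtMultipoint}) against the Markovian input–output relations Eq.~(\ref{con:bouttMarkovianApp}) (semi‑infinite) and Eq.~(\ref{con:InfiniteOutputSimp}) (infinite, measured at the left or right end), generalised to multi‑point coupling. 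I would then check the non‑demolition conditions $[\mathbb{Y}_{\rm m}(s),\mathbb{Y}_{\rm m}(t)]=0$ and $[\mathbb{X}(t),\mathbb{Y}_{\rm m}(s)]=0$ for $s\le t$ — the latter using precisely that the Markovian approximation replaces retarded atomic operators by current ones times phases, so no future noise enters $\mathbb{X}(t)$ — so that $\mathbf{\pi}_t(\mathbb{X})=\mathbb{E}(\mathbb{X}(t)\,|\,\mathcal{Y}_t)$ is a bona fide conditional expectation onto the commutative algebra $\mathcal{Y}_t$.

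Next I would invoke Remark~\ref{dWWiener} (resting on Assumptions~\ref{InitialState}–\ref{InitialStateInfiniteWave}) that the innovations process $W(t)=\mathbb{Y}_{\rm m}(t)-\int_0^t\mathbf{\pi}_s(\bar{\mathbb{L}}+\bar{\mathbb{L}}^{\dag})\,\mathrm{d}s$ is a $\mathcal{Y}_t$‑Wiener process with $\mathbb{E}(\mathrm{d}W^2)=\mathrm{d}t$, and make the ansatz $\mathrm{d}\mathbf{\pi}_t(\mathbb{X})=\alpha_t\,\mathrm{d}t+\beta_t\,\mathrm{d}W(t)$ with $\alpha_t,\beta_t$ adapted to $\mathcal{Y}_t$. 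The drift is pinned down by the tower property together with $\mathbb{E}\,\mathbf{\pi}_t(\mathbb{X})=\mathbb{E}\,\mathbb{X}(t)$: $\alpha_t$ equals $\mathbf{\pi}_t$ applied to the drift part of the QSDE~(\ref{con:QSDEGeneralFormat}), which by Eq.~(\ref{con:QSDEGeneralFormatMarkAppro}) is exactly the first line of Eq.~(\ref{con:filteringEquation}). The gain $\beta_t$ I would extract from the characterising orthogonality of the conditional expectation: for every bounded $\mathcal{Y}_t$‑adapted test process $c_t$ with $\mathrm{d}c_t=c_tf_t\,\mathrm{d}\mathbb{Y}_{\rm m}(t)$ ($f_t$ arbitrary), $\mathbb{E}[(\mathbb{X}(t)-\mathbf{\pi}_t(\mathbb{X}))c_t]=0$ for all $t$; differentiating with the quantum Itô product rule, using the Markovian Itô table $\mathrm{d}\mathbb{B}_{\rm m,in}\,\mathrm{d}\mathbb{B}_{\rm m,in}^{\dag}=\mathrm{d}t$ (Theorems~\ref{ItoSemiInfiniteMultipoint},\ref{ItoInfiniteMultipoint} in the regime $|z_j-z_l|/c\le\mathrm{d}t$), taking expectations, and equating the coefficient of $f_t$ to zero forces $\beta_t=\mathbf{\pi}_t(\mathbb{X}\bar{\mathbb{L}}+\bar{\mathbb{L}}^{\dag}\mathbb{X})-\mathbf{\pi}_t(\bar{\mathbb{L}}+\bar{\mathbb{L}}^{\dag})\,\mathbf{\pi}_t(\mathbb{X})$. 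Substituting $\mathrm{d}W=\mathrm{d}\mathbb{Y}_{\rm m}-\mathbf{\pi}_t(\bar{\mathbb{L}}+\bar{\mathbb{L}}^{\dag})\,\mathrm{d}t$ back into the ansatz and collecting terms reproduces Eq.~(\ref{con:filteringEquation}).

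I expect the main obstacle to be bookkeeping the memory. Even under the Markovian approximation the generator carries the time‑dependent c‑number weights $\int_0^t\mathbf{\mathfrak{m}}_{jl}(t-\tau)\,\mathrm{d}\tau$ and the Heisenberg operators $\mathbf{L}_l(t)$, so one must check that $\mathbf{\pi}_t$ of the generator is legitimate (i.e. that it is still an adapted system observable) and that differentiating the identity $\mathbb{E}[(\mathbb{X}(t)-\mathbf{\pi}_t(\mathbb{X}))c_t]=0$ does not produce spurious cross‑terms between retarded parts of $\mathbb{X}(t)$ and $\mathrm{d}\mathbb{Y}_{\rm m}$; this is exactly where $\sigma_j^-(t-t')\approx e^{i\omega_a t'}\sigma_j^-(t)$ is indispensable, as it forces the memory to enter only through c‑numbers and keeps every operator evaluated at the current time. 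A secondary technical point is justifying that $W$ is a genuine $\mathcal{Y}_t$‑martingale with quadratic variation $\mathrm{d}t$ rather than merely having the right second moment; this follows from the vacuum‑state assumptions and the exact cancellation built into the definition of $W$, and should be stated explicitly before the ansatz is used.
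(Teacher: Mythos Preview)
Your proposal is correct and follows the same logical route as the paper: identify the drift from the Markovian-approximated Lindblad generator of Eq.~(\ref{con:QSDEGeneralFormat})/(\ref{con:QSDEGeneralFormatMarkAppro}), read off $\bar{\mathbb{L}}$ from the input--output relations Eqs.~(\ref{con:bouttMarkovianApp}),(\ref{con:InfiniteOutputSimp}), and then run the standard Belavkin innovations argument. The only difference is depth of detail---the paper's proof simply cites the standard filter derivation and Remark~\ref{dWWiener}, whereas you carry out the non-demolition check and the characteristic-equation/orthogonality computation explicitly; note also that the It\={o} rule you need for $\mathrm{d}W^2=\mathrm{d}t$ is the vacuum property of $\mathrm{d}\mathbb{B}_{\rm m,in}$ from Remark~\ref{dWWiener}, not the channel-noise tables of Theorems~\ref{ItoSemiInfiniteMultipoint}--\ref{ItoInfiniteMultipoint}.
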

\begin{proof}
The first part on the RHS of Eq.~(\ref{con:filteringEquation}) is free from the observer, and is the same as the non-Markovian integral Lindblad components in Eq.~(\ref{con:mjlSemiInfiniteMark}).
Eq.~(\ref{con:EffectiveL}) can be derived using the input-output relationship of the waveguide-QED network within the Markovian approximation, namely as Eq.~(\ref{con:bouttMarkovianApp}) for the semi-infinite waveguide and Eq.~(\ref{con:InfiniteOutputSimp}) for the infinite waveguide. Then combined with Remark~\ref{dWWiener}, the second part of the quantum filtering equation (\ref{con:filteringEquation}) can be derived according to~\citep{SIAMFiltering} by replacing the observing operator with $\bar{\mathbb{L}}$. \qed
\end{proof}

Generalized from Proposition~\ref{FilterEQ}, the filtering equation can be equivalently represented with the density matrix as
\begin{equation} \label{eq:measFBmaintext}
\begin{aligned}
&\mathrm{d}\rho_F=  -i \left[H,\rho_F\right]\mathrm{d}t+ \mathcal{H}\left[\bar{\mathbb{L}}\right]\rho_F \mathrm{d}W +\sum_{j,l=1}^N  \left( \int_0^t \mathbf{\mathfrak{m}}_{jl}(t-\tau) \mathrm{d} \tau \left[L_l\rho_F,L_j^{\dag} \right]    + \int_0^t \mathbf{\mathfrak{m}}_{jl}^*(t-\tau)  \mathrm{d} \tau  \left[ L_j ,\rho_F L_l^{\dag} \right]\right) \mathrm{d}t, 
\end{aligned}
\end{equation}
where $\bar{\mathbb{L}}$ is given by Eq.~(\ref{con:EffectiveL}), the superoperator reads
\[
 \mathcal{H}[\bar{\mathbb{L}}]\rho_F = \bar{\mathbb{L}} \rho_F + \rho_F \bar{\mathbb{L}}^{\dag} -  \mathrm{Tr} \left [  \right ( \bar{\mathbb{L}} +\bar{\mathbb{L}}^{\dag}\left )\rho_F \right ]\rho_F.
\]
This generalizes the filtering equation for quantum systems in \citep{wiseman2009quantum,wiseman1994quantum,zhang2017quantum,ding2023quantumSIAM} to the atom network with multiple delayed integral kernels.

\subsection{Numerical simulations}
\subsubsection{Two atoms coupled to a semi-infinite waveguide}
When there are two entangled atoms coupled to a semi-infinite waveguide via one-point coupling~\citep{ZhangBin}, the time-varied density matrix reads
\begin{small}
\begin{equation} \label{con:AsMcavity}
\begin{aligned}
\rho_2(t) &= \begin{bmatrix}
0  & 0  & 0 & 0\\
0 &|\alpha(t)|^2 & \alpha^*(t)\beta(t) & 0\\
0 & \alpha(t)\beta^*(t) & |\beta(t)|^2  & 0\\
0 & 0 &  0 &1-|\alpha(t)|^2- |\beta(t)|^2 
\end{bmatrix},
\end{aligned}
\end{equation}
\end{small}%
where $\alpha(t)$ and $\beta(t)$ represent the amplitudes that only the first or the second atom is excited, respectively.

The filtering dynamics of $\rho_2 (t)$ is governed by Eq.~(\ref{eq:measFBmaintext}) with $N=2$. We assume that initially the two atoms are entangled and there are no applied drives,   $\mathrm{d}t = 0.5\mu s$, $\gamma_{1R} = \gamma_{1L} = 0.5\gamma_{2R} = 0.5\gamma_{2L} = 0.2$MHz, and initially the first atom is excited~\citep{verdu2009strong,kockum2018decoherence,2017CouplingNC,kannan2020waveguide}. The Hamiltonian representing resonant drives with amplitudes  $\Omega_1$ and $\Omega_2$ applied upon two atoms is $H =\Omega_1 \begin{bmatrix}  0  & 1 \\ 1 &  0\end{bmatrix} \otimes \begin{bmatrix}  1  & 0 \\ 0 &  1\end{bmatrix}  +\Omega_2 \begin{bmatrix}  1  & 0 \\ 0 &  1\end{bmatrix} \otimes \begin{bmatrix}  0  & 1 \\ 1 &  0\end{bmatrix}$. Then the convergence of quantum filtering with different coupling strengths is compared in Fig.~\ref{fig:TwoatomFiltering}, where the colored lines represent the mean dynamics for quantum filtering.  We denote $\Delta_z = \left(z_2-z_1\right)/c$, and $c$ is the velocity of the filed in the waveguide. In Fig.~\ref{fig:TwoatomFiltering}(a), $\Omega_1 = 0.1$MHz, $\Omega_2 = 0$MHz, $\omega_az_1/c = 0.3\pi$, and $\omega_az_2/c = 1.3\pi$. In Fig.~\ref{fig:TwoatomFiltering}(b), $\Omega_1 = 0$MHz, $\Omega_2 = 0.2$MHz, $\omega_az_1/c = 0.3\pi$, and $\omega_az_2/c = 0.8\pi$.  It can be seen that the evolution and filtering of quantum states can be influenced by the drive fields applied upon atoms and the non-Markovian couplings between atoms and waveguide related to atoms' positions, resulting in the oscillating and converging dynamics in Fig.~\ref{fig:TwoatomFiltering}(a) and Fig.~\ref{fig:TwoatomFiltering}(b) respectively. 
\begin{figure}[htbp]
  \centering
  \centerline{\includegraphics[width = 1\columnwidth]{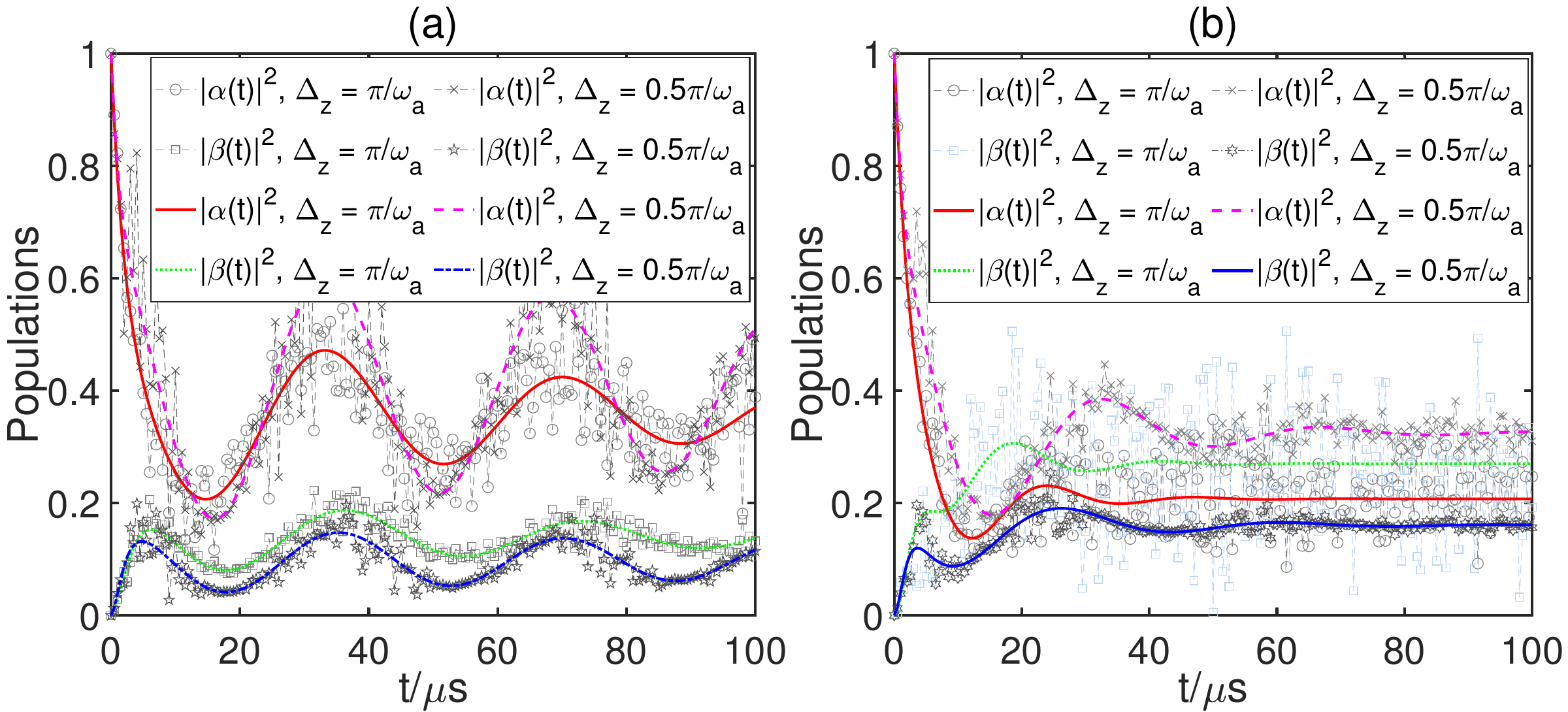}}
  \caption{Comparisons of quantum filtering evaluated by $|\alpha(t)|^2$ and $|\beta(t)|^2$ when two atoms are coupled to a semi-infinite waveguide.}
  \label{fig:TwoatomFiltering}
\end{figure}

\subsubsection{Three atoms coupled to an infinite waveguide}
Generalized from the above two-atom network, we further consider that three atoms are coupled to an infinite waveguide. In this scenario, we take a simplified case that the quantum network can be measured at the right output end as in Fig.~\ref{fig:NatomWaveguide}(b). Then the dynamics can be represented as
\begin{small}
\begin{equation} \label{con:MasterEquationSingledirection}
\begin{aligned}
&\mathrm{d}\rho_3  =  - i \left[ H_0 ,\rho_3\right]\mathrm{d}t +\sum_{j=1}^3\Gamma_j  \mathcal{D}\left[\sigma_j^-\right]\rho_3\mathrm{d}t + \mathcal{H}\left[\bar{\mathbb{L}}\right]\rho_3 \mathrm{d}W  +\left[\sum_{1\leq j,l \leq 3, j\neq l}\tilde{\Gamma}_{jl} \left( \sigma_j^- \rho_3 \sigma_l^+ - \frac{1}{2}\sigma_j^- \sigma_l^+ \rho_3 - \frac{1}{2}\rho_3 \sigma_j^- \sigma_l^+  \right) \right]\mathrm{d}t,
\end{aligned}
\end{equation} 
\end{small}%
where $\mathcal{D}[O]\rho_3 = O\rho_3 O^{\dag} - \left(O^{\dag}O\rho_3 + \rho_3 O^{\dag}O \right)/2$ for an arbitrary operator $O$, $\Gamma_j = \gamma_{jR} +  \gamma_{jL} + \eta_j$ according to the Hamiltonian in Eq.~(\ref{con:freeHam0}) with non-zero loss rates, $\tilde{\Gamma}_{jl} =\left(\sqrt{\gamma_{jR}\gamma_{lR}}+ \sqrt{\gamma_{jL}\gamma_{lL}}\right)\cos \left[\omega_a \left(z_l-z_j\right)/c\right]$,
\begin{small}
\begin{equation} \label{con:H0Singledirection}
\begin{aligned}
&H_0  =\Omega_2 \left(\sigma_2^- + \sigma_2^+\right)  + \sum_{1\leq j<l \leq 3} \left[ \frac{\sqrt{\gamma_{jR}\gamma_{lR}} e^{i\omega_a \left(z_l-z_j\right)/c} -\sqrt{\gamma_{jL}\gamma_{lL}} e^{i\omega_a \left(z_j-z_l\right)/c} }{2i} \sigma_j^-\sigma_l^+ + {\rm H.c.}\right] ,
\end{aligned}
\end{equation}
\end{small}%
and $\bar{\mathbb{L}}$ is given by Eq.~(\ref{con:EffectiveL}) with $\mathbf{n}_j = 1$ for $j =1,2,3$.

In the simulations in Fig.~\ref{fig:ThreeatomFiltering}, we assume that initially only the first atom is excited, and the other atoms are at their ground states. We take $ \mathrm{d}t = 0.5\mu s$, $\gamma_{1L} =\gamma_{1R} =0.1$MHz, $\gamma_{2L} =\gamma_{2R} = 0.2$MHz, $\gamma_{3L} =\gamma_{3R} = 0.3$MHz, $z_2-z_1 = z_3-z_2 = \pi c/\omega_a $, $\Omega_2 = 0.5$MHz, $\eta_j = 0$MHz in Fig.~\ref{fig:ThreeatomFiltering}(a) and $\eta_j = 0.2$MHz in Fig.~\ref{fig:ThreeatomFiltering}(b)
~\citep{verdu2009strong,kockum2018decoherence,2017CouplingNC,kannan2020waveguide}.   
As shown in Fig.~\ref{fig:ThreeatomFiltering}(a) and Fig.~\ref{fig:ThreeatomFiltering}(b), the third atom can become excited with $\left\langle \sigma_3^z\right\rangle > -1$ by absorbing the photon emitted by the first two atoms, the second atom can be highly excited because of the applied drive, while the atoms are less excited when $\eta_j>0$. Thus atoms' dynamics can be influenced by the amplitude of the drive field $\Omega_2$, the non-Markovian interactions between atoms and waveguide, and the decay rates of the atoms $\eta_j$ to the environment outside of the waveguide.
\begin{figure}[htbp]
  \centering
  \centerline{\includegraphics[width = 1\columnwidth]{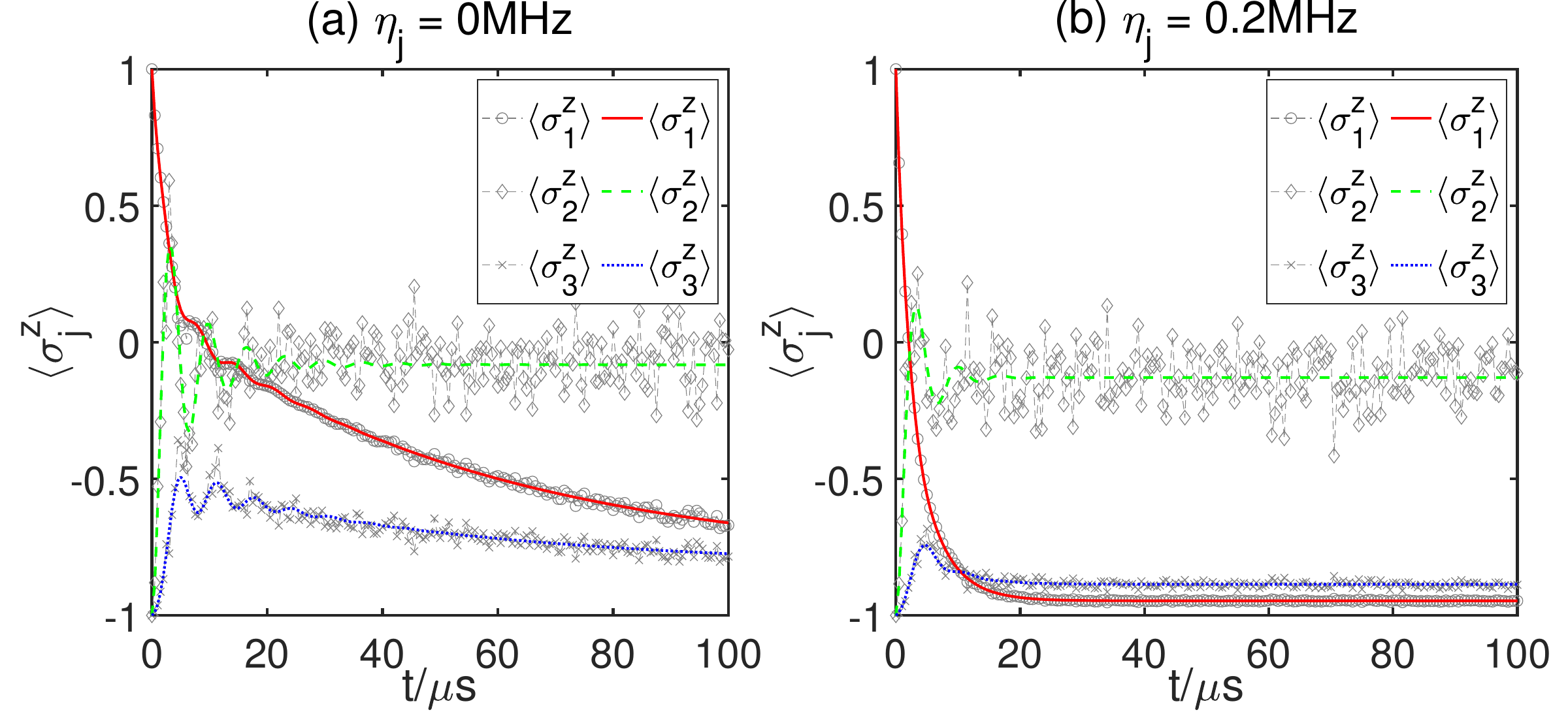}}
  \caption{Quantum filtering for a three-atom network.}
  \label{fig:ThreeatomFiltering}
\end{figure}

\section{Conclusion}\label{Sec:conclusion}
In this paper, we study the quantum stochastic dynamics of non-Markovian quantum networks with time delays realized by waveguide-QED systems from the perspective of quantum noises. The non-Markovian interactions among atoms connected by the waveguide can be described by the input-output relationship incorporating time delays. We propose that, the non-Markovian characteristics of the waveguide-QED networks can be evaluated by the non-Markovian commutative properties and \rm{It\={o}} rules of quantum noises, distinguishing them from conventional Markovian quantum networks. Consequently, the dynamical behavior of waveguide-QED systems can be equivalently expressed through a non-Markovian quantum stochastic differential equation involving integral processes, where the integral kernels are characterized by the commutators of quantum noises. This kernel-based approach can be extended to quantum networks where atoms are coupled to the waveguide via multiple points. Furthermore, the integral kernels, determined by quantum noises, enable the evaluation of measurement outcomes at the waveguide's terminus, facilitating their applications in quantum filtering based on the waveguide-QED networks.

\appendix
$$\textbf{Appendix}$$
\section{Derivation of Eq.~(\ref{con:HRWProduct})} \label{Sec:AppdixTraceWaveguide}
Based on Eq.~(\ref{con:Hspicture}), we can derive its equivalent format by rotating with the waveguide Hamiltonian as
\begin{equation} \label{con:HeffSemi0}
\begin{aligned}
\mathbf{H}(t) 
 &=\sum_{j=1}^N \left(\omega_j^a - i\frac{\eta_j}{2}\right) \sigma_j^+\sigma_j^- + \sum_{j=1}^N e^{i H_{\omega} t}  H_{I}^{(j)}  e^{-i H_{\omega} t},
\end{aligned}
\end{equation}
where the second component reads
\begin{equation}
\begin{aligned}  \label{con:MathFormuladomega}
e^{i H_{\omega} t}  H_{I}^{(j)}  e^{-i H_{\omega} t} = \int_{-\infty}^{\infty}  \left [ g_{j}(\omega,z_j)e^{i\omega t}d^{\dag}_{\omega} \sigma_j^- +  g_{j}^*(\omega,z_j) e^{-i\omega t} d_{\omega}\sigma_j^+ \right ]\mathrm{d}\omega,
\end{aligned}
\end{equation}
because $e^{i H_{\omega} t}  d^{\dag}_{\omega}  e^{-i H_{\omega} t} =  d^{\dag}_{\omega} + \left(i t \right) \left[H_{\omega} , d^{\dag}_{\omega} \right] + \frac{\left(i t \right)^2}{2!} \left[ H_{\omega} ,\left[H_{\omega} , d^{\dag}_{\omega} \right] \right] + \cdots=d^{\dag}_{\omega} +i \omega t   d^{\dag}_{\omega} + \frac{\left(i\omega t \right)^2}{2!}  d^{\dag}_{\omega} + \cdots = e^{i\omega t}d^{\dag}_{\omega}$. According to Eq.~(\ref{eq:FreqOperator}), we denote $d_{\omega} =  d_{\omega}(\omega,t_0)$ and $d_{\omega}^{\dag} =  d_{\omega}^{\dag}(\omega,t_0)$ with $t_0 = 0$. For example, the second component in the last line of Eq.~(\ref{con:MathFormuladomega}) reads
\begin{equation}
\begin{aligned}  \label{con:QSDEInput}
\int_{-\infty}^{\infty}  g_{j}^*(\omega,z_j) e^{-i\omega t} d_{\omega}\sigma_j^+ \mathrm{d}\omega &=-i\int_{-\infty}^{\infty}  \left (\sqrt{\frac{\gamma_{jR}}{2\pi}}e^{i\omega z_j/c} -\sqrt{\frac{\gamma_{jL}}{2\pi}} e^{-i\omega z_j/c}\right) e^{-i\omega t}   d_{\omega}(\omega,t_0)\sigma_j^+ \mathrm{d}\omega \\
& =i\left[  \sqrt{\gamma_{jL}} b_{\rm in}\left(t+ z_j/c\right) - \sqrt{ \gamma_{jR}} b_{\rm in}\left(t-z_j/c\right) \right]\sigma_j^+ .
\end{aligned}
\end{equation}
Then combined with its Hermitian conjugation, we can derive Eq.~(\ref{con:HRWProduct}) in the main text.

\section{Derivations of non-Markovian dynamics for the propagator $U(t)$ in Eq.~(\ref{con:dUtSDE2})} \label{Sec:AppendixPropogator}
The influence on the evolution of quantum states by quantum noises in the waveguide can be evaluated with the following commutator~\citep{gough2017series}
\begin{equation} \label{con:binUt}
\begin{aligned}
\int_0^t  \left [ b_{\rm in}^{(j)} (t), \dot{U}(\tau)\right]\mathrm{d} \tau  = &\int_0^t  \left [ b_{\rm in}^{(j)} (t),  \sum_{l=1}^N \left(  {b_{\rm in}^{(l)}}^{\dag} (\tau) L_l -\rm H.c.\right)U(\tau)\right]\mathrm{d} \tau\\
 =&\sum_{l=1}^N \int_0^t   \left [ b_{\rm in}^{(j)} (t),  {b_{\rm in}^{(l)}}^{\dag} (\tau) \right]L_lU(\tau)\mathrm{d} \tau.
\end{aligned}
\end{equation}

Based on Lemma~\ref{lemmabinjt} and  Lemma~\ref{lemmabinjtTwoatom}, for the case that $j < l$ in Eq.~(\ref{con:binUt}), 
\begin{equation} \label{con:binUtCal2}
\begin{aligned}
& \int_0^t  \left [ b_{\rm in}^{(j)} (t), {b_{\rm in}^{(l)}}^{\dag} (\tau) \right]L_l U(\tau)\mathrm{d} \tau\\
=&\int_0^t  \left [ \sqrt{\gamma_{jL}\gamma_{lL}}\delta\left( t - \tau+ \frac{z_j-z_l}{c}\right) +\sqrt{\gamma_{jR}\gamma_{lR}}\delta\left( t - \tau + \frac{z_l-z_j}{c}\right) \right.\\
&\left.-\sqrt{\gamma_{jL}\gamma_{lR} }\delta\left( t - \tau +\frac{z_j+z_l}{c}\right) -\sqrt{\gamma_{jR}\gamma_{lL}} \delta\left( t -\tau -\frac{z_j+z_l}{c}\right) \right]L_l U(\tau)\mathrm{d} \tau \\
=& \sqrt{\gamma_{jL}\gamma_{lL}} L_l U \left(t- \frac{z_l-z_j}{c} \right) - \sqrt{\gamma_{jR}\gamma_{lL}} L_l U \left(t- \frac{z_l+z_j}{c} \right).
\end{aligned}
\end{equation}
Similarly, when  $j > l$, 
\begin{equation} \label{con:binUtCal3}
\begin{aligned}
&\int_0^t  \left [ b_{\rm in}^{(j)} (t), \left(  {b_{\rm in}^{(l)}}^{\dag} (\tau) L_l -b_{\rm in}^{(l)} (\tau)L_l^{\dag}\right)U(\tau)\right]\mathrm{d} \tau = \sqrt{\gamma_{jR}\gamma_{lR}}L_l U \left(t- \frac{z_j-z_l}{c} \right) - \sqrt{\gamma_{jR}\gamma_{lL}}L_lU \left(t- \frac{z_l+z_j}{c} \right),
\end{aligned}
\end{equation}
and when $l=j$,
\begin{equation} \label{con:binUtCal1}
\begin{aligned}
&\int_0^t  \left [ b_{\rm in}^{(j)} (t),  {b_{\rm in}^{(j)}}^{\dag} (\tau)\right] L_j U(\tau)\mathrm{d} \tau =\frac{\gamma_{jR} + \gamma_{jL}}{2}L_j U(t)-\sqrt{\gamma_{jR}\gamma_{jL}}L_j U\left(t-\frac{2z_j}{c}\right).
\end{aligned}
\end{equation}

Then Eq.~(\ref{con:Uequation}) can be rewritten as Eq.~(\ref{con:dUtSDE2}) in the main text.

\section*{Acknowledgement}
\begin{sloppypar}
The authors thank Re-Bing Wu for helpful reading and discussions. This work is partially financially supported by Quantum Science and Technology-National Science and Technology Major Project 2023ZD0300600, Guangdong Provincial Quantum Science Strategic Initiative No. GDZX2303007, Hong Kong Research Grant Council (RGC) under Grant No. 15213924, and  the CAS AMSS-PolyU Joint Laboratory of Applied Mathematics.
\end{sloppypar}

\bibliographystyle{elsarticle-harv} 
\bibliography{waveguideAutomatica}

\begin{thebibliography}{63}
\expandafter\ifx\csname natexlab\endcsname\relax\def\natexlab#1{#1}\fi
\providecommand{\url}[1]{\texttt{#1}}
\providecommand{\href}[2]{#2}
\providecommand{\path}[1]{#1}
\providecommand{\DOIprefix}{doi:}
\providecommand{\ArXivprefix}{arXiv:}
\providecommand{\URLprefix}{URL: }
\providecommand{\Pubmedprefix}{pmid:}
\providecommand{\doi}[1]{\href{http://dx.doi.org/#1}{\path{#1}}}
\providecommand{\Pubmed}[1]{\href{pmid:#1}{\path{#1}}}
\providecommand{\bibinfo}[2]{#2}
\ifx\xfnm\relax \def\xfnm[#1]{\unskip,\space#1}\fi
\bibitem[{Arranz~Regidor et~al.(2021)Arranz~Regidor, Crowder, Carmichael and
  Hughes}]{arranz2021modeling}
\bibinfo{author}{Arranz~Regidor, S.}, \bibinfo{author}{Crowder, G.},
  \bibinfo{author}{Carmichael, H.}, \bibinfo{author}{Hughes, S.},
  \bibinfo{year}{2021}.
\newblock \bibinfo{title}{Modeling quantum light-matter interactions in
  waveguide {QED} with retardation, nonlinear interactions, and a time-delayed
  feedback: {Matrix} product states versus a space-discretized waveguide
  model}.
\newblock \bibinfo{journal}{Physical Review Research} \bibinfo{volume}{3},
  \bibinfo{pages}{023030}.
\bibitem[{Baragiola et~al.(2012)Baragiola, Cook, Bra{\'n}czyk and
  Combes}]{baragiola2012n}
\bibinfo{author}{Baragiola, B.Q.}, \bibinfo{author}{Cook, R.L.},
  \bibinfo{author}{Bra{\'n}czyk, A.M.}, \bibinfo{author}{Combes, J.},
  \bibinfo{year}{2012}.
\newblock \bibinfo{title}{N-photon wave packets interacting with an arbitrary
  quantum system}.
\newblock \bibinfo{journal}{Physical Review A} \bibinfo{volume}{86},
  \bibinfo{pages}{013811}.
\bibitem[{Bouten et~al.(2007)Bouten, Van~Handel and James}]{SIAMFiltering}
\bibinfo{author}{Bouten, L.}, \bibinfo{author}{Van~Handel, R.},
  \bibinfo{author}{James, M.R.}, \bibinfo{year}{2007}.
\newblock \bibinfo{title}{An introduction to quantum filtering}.
\newblock \bibinfo{journal}{SIAM Journal on Control and Optimization}
  \bibinfo{volume}{46}, \bibinfo{pages}{2199--2241}.
\bibitem[{Breuer et~al.(2016)Breuer, Laine, Piilo and
  Vacchini}]{breuer2016colloquium}
\bibinfo{author}{Breuer, H.P.}, \bibinfo{author}{Laine, E.M.},
  \bibinfo{author}{Piilo, J.}, \bibinfo{author}{Vacchini, B.},
  \bibinfo{year}{2016}.
\newblock \bibinfo{title}{Colloquium: {Non-Markovian} dynamics in open quantum
  systems}.
\newblock \bibinfo{journal}{Reviews of Modern Physics} \bibinfo{volume}{88},
  \bibinfo{pages}{021002}.
\bibitem[{Chia and Wiseman(2011)}]{chia2011quantum}
\bibinfo{author}{Chia, A.}, \bibinfo{author}{Wiseman, H.M.},
  \bibinfo{year}{2011}.
\newblock \bibinfo{title}{Quantum theory of multiple-input--multiple-output
  {Markovian} feedback with diffusive measurements}.
\newblock \bibinfo{journal}{Physical Review A} \bibinfo{volume}{84},
  \bibinfo{pages}{012120}.
\bibitem[{Combes et~al.(2017)Combes, Kerckhoff and Sarovar}]{combes2017slh}
\bibinfo{author}{Combes, J.}, \bibinfo{author}{Kerckhoff, J.},
  \bibinfo{author}{Sarovar, M.}, \bibinfo{year}{2017}.
\newblock \bibinfo{title}{The {SLH} framework for modeling quantum input-output
  networks}.
\newblock \bibinfo{journal}{Advances in Physics: X} \bibinfo{volume}{2},
  \bibinfo{pages}{784--888}.
\bibitem[{Ding(2025)}]{ding2025transport}
\bibinfo{author}{Ding, H.}, \bibinfo{year}{2025}.
\newblock \bibinfo{title}{Photon transport and blockade based on
  {non-Markovian} interactions between a microring resonator and waveguide}.
\newblock \bibinfo{journal}{arXiv:2511.10256} .
\bibitem[{Ding et~al.(2024)Ding, Amini, Gough and Zhang}]{ding2024non}
\bibinfo{author}{Ding, H.}, \bibinfo{author}{Amini, N.H.},
  \bibinfo{author}{Gough, J.E.}, \bibinfo{author}{Zhang, G.},
  \bibinfo{year}{2024}.
\newblock \bibinfo{title}{On the {non-Markovian} quantum control dynamics}.
\newblock \bibinfo{journal}{arXiv:2408.09637} .
\bibitem[{Ding et~al.(2025)Ding, Amini, Zhang and Gough}]{ding2023quantumSIAM}
\bibinfo{author}{Ding, H.}, \bibinfo{author}{Amini, N.H.},
  \bibinfo{author}{Zhang, G.}, \bibinfo{author}{Gough, J.E.},
  \bibinfo{year}{2025}.
\newblock \bibinfo{title}{Quantum coherent and measurement feedback control
  based on atoms coupled with a semi-infinite waveguide}.
\newblock \bibinfo{journal}{SIAM Journal on Control and Optimization} ,
  \bibinfo{pages}{S231--S257}.
\bibitem[{Ding and Zhang(2023)}]{ding2023quantum}
\bibinfo{author}{Ding, H.}, \bibinfo{author}{Zhang, G.}, \bibinfo{year}{2023}.
\newblock \bibinfo{title}{Quantum coherent feedback control with photons}.
\newblock \bibinfo{journal}{IEEE Transactions on Automatic Control}
  \bibinfo{volume}{69}, \bibinfo{pages}{856--871}.
\bibitem[{Ding and Zhang(2025)}]{DingNlevel2024}
\bibinfo{author}{Ding, H.}, \bibinfo{author}{Zhang, G.}, \bibinfo{year}{2025}.
\newblock \bibinfo{title}{Quantum coherent feedback control of an {$N$-level}
  atom with multiple excitations}.
\newblock \bibinfo{journal}{IEEE Transactions on Automatic Control}
  \bibinfo{volume}{70}, \bibinfo{pages}{3117--3132}.
\bibitem[{Ding et~al.(2023)Ding, Zhang, Cheng and Cai}]{ding2023Automatica}
\bibinfo{author}{Ding, H.}, \bibinfo{author}{Zhang, G.},
  \bibinfo{author}{Cheng, M.T.}, \bibinfo{author}{Cai, G.},
  \bibinfo{year}{2023}.
\newblock \bibinfo{title}{Quantum feedback control of a two-atom network closed
  by a semi-infinite waveguide}.
\newblock \bibinfo{journal}{arXiv preprint arXiv:2306.06373} .
\bibitem[{Di{\'o}si(2012)}]{diosi2012non}
\bibinfo{author}{Di{\'o}si, L.}, \bibinfo{year}{2012}.
\newblock \bibinfo{title}{{Non-Markovian} open quantum systems: Input-output
  fields, memory, and monitoring}.
\newblock \bibinfo{journal}{Physical Review A} \bibinfo{volume}{85},
  \bibinfo{pages}{034101}.
\bibitem[{Di{\'o}si et~al.(1998)Di{\'o}si, Gisin and Strunz}]{diosi1998non}
\bibinfo{author}{Di{\'o}si, L.}, \bibinfo{author}{Gisin, N.},
  \bibinfo{author}{Strunz, W.T.}, \bibinfo{year}{1998}.
\newblock \bibinfo{title}{{Non-Markovian} quantum state diffusion}.
\newblock \bibinfo{journal}{Physical Review A} \bibinfo{volume}{58},
  \bibinfo{pages}{1699}.
\bibitem[{Domokos et~al.(2002)Domokos, Horak and Ritsch}]{domokos2002PRA}
\bibinfo{author}{Domokos, P.}, \bibinfo{author}{Horak, P.},
  \bibinfo{author}{Ritsch, H.}, \bibinfo{year}{2002}.
\newblock \bibinfo{title}{Quantum description of light-pulse scattering on a
  single atom in waveguides}.
\newblock \bibinfo{journal}{Physical Review A} \bibinfo{volume}{65},
  \bibinfo{pages}{033832}.
\bibitem[{Dong et~al.(2019)Dong, Zhang and Amini}]{dong2019response}
\bibinfo{author}{Dong, Z.}, \bibinfo{author}{Zhang, G.},
  \bibinfo{author}{Amini, N.H.}, \bibinfo{year}{2019}.
\newblock \bibinfo{title}{On the response of a two-level system to two-photon
  inputs}.
\newblock \bibinfo{journal}{SIAM Journal on Control and Optimization}
  \bibinfo{volume}{57}, \bibinfo{pages}{3445--3470}.
\bibitem[{Dong et~al.(2022)Dong, Zhang, Wu and Wu}]{dong2022dynamicsTAC}
\bibinfo{author}{Dong, Z.}, \bibinfo{author}{Zhang, G.}, \bibinfo{author}{Wu,
  A.G.}, \bibinfo{author}{Wu, R.B.}, \bibinfo{year}{2022}.
\newblock \bibinfo{title}{On the dynamics of the {Tavis--Cummings} model}.
\newblock \bibinfo{journal}{IEEE Transactions on Automatic Control}
  \bibinfo{volume}{68}, \bibinfo{pages}{2048--2063}.
\bibitem[{Duda et~al.(2024)Duda, Brunswick, Wilson and Kok}]{duda2024efficient}
\bibinfo{author}{Duda, M.}, \bibinfo{author}{Brunswick, L.},
  \bibinfo{author}{Wilson, L.R.}, \bibinfo{author}{Kok, P.},
  \bibinfo{year}{2024}.
\newblock \bibinfo{title}{Efficient, high-fidelity single-photon switch based
  on waveguide-coupled cavities}.
\newblock \bibinfo{journal}{Physical Review A} \bibinfo{volume}{110},
  \bibinfo{pages}{042615}.
\bibitem[{Fan et~al.(2010)Fan, Kocaba{\c{s}} and Shen}]{fan2010input}
\bibinfo{author}{Fan, S.}, \bibinfo{author}{Kocaba{\c{s}}, {\c{S}}.E.},
  \bibinfo{author}{Shen, J.T.}, \bibinfo{year}{2010}.
\newblock \bibinfo{title}{Input-output formalism for few-photon transport in
  one-dimensional nanophotonic waveguides coupled to a qubit}.
\newblock \bibinfo{journal}{Physical Review A} \bibinfo{volume}{82},
  \bibinfo{pages}{063821}.
\bibitem[{Fischer et~al.(2018)Fischer, Trivedi and Lukin}]{fischer2018particle}
\bibinfo{author}{Fischer, K.A.}, \bibinfo{author}{Trivedi, R.},
  \bibinfo{author}{Lukin, D.}, \bibinfo{year}{2018}.
\newblock \bibinfo{title}{Particle emission from open quantum systems}.
\newblock \bibinfo{journal}{Physical Review A} \bibinfo{volume}{98},
  \bibinfo{pages}{023853}.
\bibitem[{Gao et~al.(2016)Gao, Dong and Petersen}]{gao2016fault}
\bibinfo{author}{Gao, Q.}, \bibinfo{author}{Dong, D.},
  \bibinfo{author}{Petersen, I.R.}, \bibinfo{year}{2016}.
\newblock \bibinfo{title}{Fault tolerant quantum filtering and fault detection
  for quantum systems}.
\newblock \bibinfo{journal}{Automatica} \bibinfo{volume}{71},
  \bibinfo{pages}{125--134}.
\bibitem[{Gardiner and Zoller(2004)}]{gardiner2004quantum}
\bibinfo{author}{Gardiner, C.}, \bibinfo{author}{Zoller, P.},
  \bibinfo{year}{2004}.
\newblock \bibinfo{title}{Quantum noise: a handbook of {Markovian} and
  {non-Markovian} quantum stochastic methods with applications to quantum
  optics}.
\newblock \bibinfo{publisher}{Springer Science \& Business Media}.
\bibitem[{Gardiner and Collett(1985)}]{gardiner1985input}
\bibinfo{author}{Gardiner, C.W.}, \bibinfo{author}{Collett, M.J.},
  \bibinfo{year}{1985}.
\newblock \bibinfo{title}{Input and output in damped quantum systems: Quantum
  stochastic differential equations and the master equation}.
\newblock \bibinfo{journal}{Physical Review A} \bibinfo{volume}{31},
  \bibinfo{pages}{3761}.
\bibitem[{Gorini et~al.(1976)Gorini, Kossakowski and
  Sudarshan}]{gorini1976completely}
\bibinfo{author}{Gorini, V.}, \bibinfo{author}{Kossakowski, A.},
  \bibinfo{author}{Sudarshan, E.C.G.}, \bibinfo{year}{1976}.
\newblock \bibinfo{title}{Completely positive dynamical semigroups of {N-level}
  systems}.
\newblock \bibinfo{journal}{Journal of Mathematical Physics}
  \bibinfo{volume}{17}, \bibinfo{pages}{821--825}.
\bibitem[{Gough and James(2009)}]{gough2009SLH}
\bibinfo{author}{Gough, J.}, \bibinfo{author}{James, M.R.},
  \bibinfo{year}{2009}.
\newblock \bibinfo{title}{The series product and its application to quantum
  feedforward and feedback networks}.
\newblock \bibinfo{journal}{IEEE Transactions on Automatic Control}
  \bibinfo{volume}{54}, \bibinfo{pages}{2530--2544}.
\bibitem[{Gough et~al.(2008)Gough, Gohm and Yanagisawa}]{gough2008linear}
\bibinfo{author}{Gough, J.E.}, \bibinfo{author}{Gohm, R.},
  \bibinfo{author}{Yanagisawa, M.}, \bibinfo{year}{2008}.
\newblock \bibinfo{title}{Linear quantum feedback networks}.
\newblock \bibinfo{journal}{Physical Review A} \bibinfo{volume}{78},
  \bibinfo{pages}{062104}.
\bibitem[{Gough and James(2017)}]{gough2017series}
\bibinfo{author}{Gough, J.E.}, \bibinfo{author}{James, M.R.},
  \bibinfo{year}{2017}.
\newblock \bibinfo{title}{The series product for {Gaussian} quantum input
  processes}.
\newblock \bibinfo{journal}{Reports on Mathematical Physics}
  \bibinfo{volume}{79}, \bibinfo{pages}{111--133}.
\bibitem[{Gough et~al.(2012a)Gough, James and Nurdin}]{gough2012single}
\bibinfo{author}{Gough, J.E.}, \bibinfo{author}{James, M.R.},
  \bibinfo{author}{Nurdin, H.I.}, \bibinfo{year}{2012}a.
\newblock \bibinfo{title}{Single photon quantum filtering using {non-Markovian}
  embeddings}.
\newblock \bibinfo{journal}{Philosophical Transactions of the Royal Society A:
  Mathematical, Physical and Engineering Sciences} \bibinfo{volume}{370},
  \bibinfo{pages}{5408--5421}.
\bibitem[{Gough et~al.(2012b)Gough, James, Nurdin and
  Combes}]{gough2012quantumPRAFiltering}
\bibinfo{author}{Gough, J.E.}, \bibinfo{author}{James, M.R.},
  \bibinfo{author}{Nurdin, H.I.}, \bibinfo{author}{Combes, J.},
  \bibinfo{year}{2012}b.
\newblock \bibinfo{title}{Quantum filtering for systems driven by fields in
  single-photon states or superposition of coherent states}.
\newblock \bibinfo{journal}{Physical Review A} \bibinfo{volume}{86},
  \bibinfo{pages}{043819}.
\bibitem[{Hattermann et~al.(2017)Hattermann, Bothner, Ley, Ferdinand,
  Wiedmaier, Sárkány, Kleiner, Koelle and Fortágh}]{2017CouplingNC}
\bibinfo{author}{Hattermann, H.}, \bibinfo{author}{Bothner, D.},
  \bibinfo{author}{Ley, L.Y.}, \bibinfo{author}{Ferdinand, B.},
  \bibinfo{author}{Wiedmaier, D.}, \bibinfo{author}{Sárkány, L.},
  \bibinfo{author}{Kleiner, R.}, \bibinfo{author}{Koelle, D.},
  \bibinfo{author}{Fortágh, J.}, \bibinfo{year}{2017}.
\newblock \bibinfo{title}{Coupling ultracold atoms to a superconducting
  coplanar waveguide resonator}.
\newblock \bibinfo{journal}{Nature Communications} \bibinfo{volume}{8},
  \bibinfo{pages}{2254}.
\bibitem[{Jack et~al.(1999)Jack, Collett and Walls}]{jack1999non}
\bibinfo{author}{Jack, M.}, \bibinfo{author}{Collett, M.},
  \bibinfo{author}{Walls, D.}, \bibinfo{year}{1999}.
\newblock \bibinfo{title}{{Non-Markovian} quantum trajectories for spectral
  detection}.
\newblock \bibinfo{journal}{Physical Review A} \bibinfo{volume}{59},
  \bibinfo{pages}{2306}.
\bibitem[{James et~al.(2008)James, Nurdin and Petersen}]{james2008hTAC}
\bibinfo{author}{James, M.R.}, \bibinfo{author}{Nurdin, H.I.},
  \bibinfo{author}{Petersen, I.R.}, \bibinfo{year}{2008}.
\newblock \bibinfo{title}{{$H^{\infty}$} control of linear quantum stochastic
  systems}.
\newblock \bibinfo{journal}{IEEE Transactions on Automatic Control}
  \bibinfo{volume}{53}, \bibinfo{pages}{1787--1803}.
\bibitem[{Kannan et~al.(2020)Kannan, Ruckriegel, Campbell, Frisk~Kockum,
  Braum{\"u}ller, Kim, Kjaergaard, Krantz, Melville, Niedzielski
  et~al.}]{kannan2020waveguide}
\bibinfo{author}{Kannan, B.}, \bibinfo{author}{Ruckriegel, M.J.},
  \bibinfo{author}{Campbell, D.L.}, \bibinfo{author}{Frisk~Kockum, A.},
  \bibinfo{author}{Braum{\"u}ller, J.}, \bibinfo{author}{Kim, D.K.},
  \bibinfo{author}{Kjaergaard, M.}, \bibinfo{author}{Krantz, P.},
  \bibinfo{author}{Melville, A.}, \bibinfo{author}{Niedzielski, B.M.}, et~al.,
  \bibinfo{year}{2020}.
\newblock \bibinfo{title}{Waveguide quantum electrodynamics with
  superconducting artificial giant atoms}.
\newblock \bibinfo{journal}{Nature} \bibinfo{volume}{583},
  \bibinfo{pages}{775--779}.
\bibitem[{Kockum et~al.(2018)Kockum, Johansson and
  Nori}]{kockum2018decoherence}
\bibinfo{author}{Kockum, A.F.}, \bibinfo{author}{Johansson, G.},
  \bibinfo{author}{Nori, F.}, \bibinfo{year}{2018}.
\newblock \bibinfo{title}{Decoherence-free interaction between giant atoms in
  waveguide quantum electrodynamics}.
\newblock \bibinfo{journal}{Physical Review Letters} \bibinfo{volume}{120},
  \bibinfo{pages}{140404}.
\bibitem[{Lalumiere et~al.(2013)Lalumiere, Sanders, van Loo, Fedorov, Wallraff
  and Blais}]{lalumiere2013input}
\bibinfo{author}{Lalumiere, K.}, \bibinfo{author}{Sanders, B.C.},
  \bibinfo{author}{van Loo, A.F.}, \bibinfo{author}{Fedorov, A.},
  \bibinfo{author}{Wallraff, A.}, \bibinfo{author}{Blais, A.},
  \bibinfo{year}{2013}.
\newblock \bibinfo{title}{Input-output theory for waveguide {QED} with an
  ensemble of inhomogeneous atoms}.
\newblock \bibinfo{journal}{Physical Review A} \bibinfo{volume}{88},
  \bibinfo{pages}{043806}.
\bibitem[{Li et~al.(2024)Li, Zhang, Du, Li and Wu}]{DuliPRAMultiPointCoup}
\bibinfo{author}{Li, S.Y.}, \bibinfo{author}{Zhang, Z.Q.}, \bibinfo{author}{Du,
  L.}, \bibinfo{author}{Li, Y.}, \bibinfo{author}{Wu, H.},
  \bibinfo{year}{2024}.
\newblock \bibinfo{title}{Single-photon scattering in giant-atom waveguide
  systems with chiral coupling}.
\newblock \bibinfo{journal}{Physical Review A} \bibinfo{volume}{109},
  \bibinfo{pages}{063703}.
\bibitem[{Li et~al.(2022)Li, Zhang and Wu}]{li2022control}
\bibinfo{author}{Li, W.L.}, \bibinfo{author}{Zhang, G.}, \bibinfo{author}{Wu,
  R.B.}, \bibinfo{year}{2022}.
\newblock \bibinfo{title}{On the control of flying qubits}.
\newblock \bibinfo{journal}{Automatica} \bibinfo{volume}{143},
  \bibinfo{pages}{110338}.
\bibitem[{Lindblad(1976)}]{lindblad1976generators}
\bibinfo{author}{Lindblad, G.}, \bibinfo{year}{1976}.
\newblock \bibinfo{title}{On the generators of quantum dynamical semigroups}.
\newblock \bibinfo{journal}{Communications in Mathematical Physics}
  \bibinfo{volume}{48}, \bibinfo{pages}{119--130}.
\bibitem[{Mariton(1989)}]{mariton1989systems}
\bibinfo{author}{Mariton, M.}, \bibinfo{year}{1989}.
\newblock \bibinfo{title}{On systems with {non-Markovian} regime changes}.
\newblock \bibinfo{journal}{IEEE Transactions on Automatic Control}
  \bibinfo{volume}{34}, \bibinfo{pages}{346--349}.
\bibitem[{N{\'e}met et~al.(2019)N{\'e}met, Carmele, Parkins and
  Knorr}]{nemet2019comparison}
\bibinfo{author}{N{\'e}met, N.}, \bibinfo{author}{Carmele, A.},
  \bibinfo{author}{Parkins, S.}, \bibinfo{author}{Knorr, A.},
  \bibinfo{year}{2019}.
\newblock \bibinfo{title}{Comparison between continuous- and discrete-mode
  coherent feedback for the {Jaynes-Cummings} model}.
\newblock \bibinfo{journal}{Physical Review A} \bibinfo{volume}{100},
  \bibinfo{pages}{023805}.
\bibitem[{Pichler et~al.(2015)Pichler, Ramos, Daley and
  Zoller}]{pichler2015quantum}
\bibinfo{author}{Pichler, H.}, \bibinfo{author}{Ramos, T.},
  \bibinfo{author}{Daley, A.J.}, \bibinfo{author}{Zoller, P.},
  \bibinfo{year}{2015}.
\newblock \bibinfo{title}{Quantum optics of chiral spin networks}.
\newblock \bibinfo{journal}{Physical Review A} \bibinfo{volume}{91},
  \bibinfo{pages}{042116}.
\bibitem[{Pichler and Zoller(2016)}]{pichler2016photonic}
\bibinfo{author}{Pichler, H.}, \bibinfo{author}{Zoller, P.},
  \bibinfo{year}{2016}.
\newblock \bibinfo{title}{Photonic circuits with time delays and quantum
  feedback}.
\newblock \bibinfo{journal}{Physical Review Letters} \bibinfo{volume}{116},
  \bibinfo{pages}{093601}.
\bibitem[{Qi et~al.(2012)Qi, Pan and Guo}]{qi2012further}
\bibinfo{author}{Qi, B.}, \bibinfo{author}{Pan, H.}, \bibinfo{author}{Guo, L.},
  \bibinfo{year}{2012}.
\newblock \bibinfo{title}{Further results on stabilizing control of quantum
  systems}.
\newblock \bibinfo{journal}{IEEE Transactions on Automatic Control}
  \bibinfo{volume}{58}, \bibinfo{pages}{1349--1354}.
\bibitem[{Saldi et~al.(2026)Saldi, Sanjari and Y{\"u}ksel}]{saldi2024quantum}
\bibinfo{author}{Saldi, N.}, \bibinfo{author}{Sanjari, S.},
  \bibinfo{author}{Y{\"u}ksel, S.}, \bibinfo{year}{2026}.
\newblock \bibinfo{title}{Quantum {Markov} decision processes: {Dynamic} and
  semi-definite programs for optimal solutions}.
\newblock \bibinfo{journal}{Applied Mathematics \& Optimization}
  \bibinfo{volume}{93}, \bibinfo{pages}{39}.
\bibitem[{Soro and Kockum(2022)}]{soro2022chiral}
\bibinfo{author}{Soro, A.}, \bibinfo{author}{Kockum, A.F.},
  \bibinfo{year}{2022}.
\newblock \bibinfo{title}{Chiral quantum optics with giant atoms}.
\newblock \bibinfo{journal}{Physical Review A} \bibinfo{volume}{105},
  \bibinfo{pages}{023712}.
\bibitem[{Strickland and Cassandras(1989)}]{strickland1989sensitivity}
\bibinfo{author}{Strickland, S.}, \bibinfo{author}{Cassandras, C.},
  \bibinfo{year}{1989}.
\newblock \bibinfo{title}{Sensitivity analysis of discrete-event systems with
  {nonMarkovian} event processes}, in: \bibinfo{booktitle}{Proceedings of the
  28th IEEE Conference on Decision and Control}, \bibinfo{organization}{IEEE}.
  pp. \bibinfo{pages}{111--116}.
\bibitem[{Van~Handel and Mabuchi(2005)}]{van2005quantum}
\bibinfo{author}{Van~Handel, R.}, \bibinfo{author}{Mabuchi, H.},
  \bibinfo{year}{2005}.
\newblock \bibinfo{title}{Quantum projection filter for a highly nonlinear
  model in cavity {QED}}.
\newblock \bibinfo{journal}{Journal of Optics B: Quantum and Semiclassical
  Optic} \bibinfo{volume}{7}, \bibinfo{pages}{S226}.
\bibitem[{Van~Handel et~al.(2005)Van~Handel, Stockton and
  Mabuchi}]{van2005feedback}
\bibinfo{author}{Van~Handel, R.}, \bibinfo{author}{Stockton, J.K.},
  \bibinfo{author}{Mabuchi, H.}, \bibinfo{year}{2005}.
\newblock \bibinfo{title}{Feedback control of quantum state reduction}.
\newblock \bibinfo{journal}{IEEE Transactions on Automatic Control}
  \bibinfo{volume}{50}, \bibinfo{pages}{768--780}.
\bibitem[{Vega and Alonso(2017)}]{In2017Dynamics}
\bibinfo{author}{Vega, I.}, \bibinfo{author}{Alonso, D.}, \bibinfo{year}{2017}.
\newblock \bibinfo{title}{Dynamics of {non-Markovian} open quantum systems}.
\newblock \bibinfo{journal}{Reviews of Modern Physics} \bibinfo{volume}{89},
  \bibinfo{pages}{58}.
\bibitem[{Verd{\'u} et~al.(2009)Verd{\'u}, Zoubi, Koller, Majer, Ritsch and
  Schmiedmayer}]{verdu2009strong}
\bibinfo{author}{Verd{\'u}, J.}, \bibinfo{author}{Zoubi, H.},
  \bibinfo{author}{Koller, C.}, \bibinfo{author}{Majer, J.},
  \bibinfo{author}{Ritsch, H.}, \bibinfo{author}{Schmiedmayer, J.},
  \bibinfo{year}{2009}.
\newblock \bibinfo{title}{Strong magnetic coupling of an ultracold gas to a
  superconducting waveguide cavity}.
\newblock \bibinfo{journal}{Physical Review Letters} \bibinfo{volume}{103},
  \bibinfo{pages}{043603}.
\bibitem[{Vladimirov and Petersen(2022)}]{vladimirov2022moment}
\bibinfo{author}{Vladimirov, I.G.}, \bibinfo{author}{Petersen, I.R.},
  \bibinfo{year}{2022}.
\newblock \bibinfo{title}{Moment dynamics and observer design for a class of
  quasilinear quantum stochastic systems}.
\newblock \bibinfo{journal}{SIAM Journal on Control and Optimization}
  \bibinfo{volume}{60}, \bibinfo{pages}{1223--1249}.
\bibitem[{Vuglar and Petersen(2016)}]{vuglar2016quantum}
\bibinfo{author}{Vuglar, S.L.}, \bibinfo{author}{Petersen, I.R.},
  \bibinfo{year}{2016}.
\newblock \bibinfo{title}{Quantum noises, physical realizability and coherent
  quantum feedback control}.
\newblock \bibinfo{journal}{IEEE Transactions on Automatic Control}
  \bibinfo{volume}{62}, \bibinfo{pages}{998--1003}.
\bibitem[{Wang and Dong(2016)}]{wang2016fault}
\bibinfo{author}{Wang, S.}, \bibinfo{author}{Dong, D.}, \bibinfo{year}{2016}.
\newblock \bibinfo{title}{Fault-tolerant control of linear quantum stochastic
  systems}.
\newblock \bibinfo{journal}{IEEE Transactions on Automatic Control}
  \bibinfo{volume}{62}, \bibinfo{pages}{2929--2935}.
\bibitem[{Wiseman(1994)}]{wiseman1994quantum}
\bibinfo{author}{Wiseman, H.M.}, \bibinfo{year}{1994}.
\newblock \bibinfo{title}{Quantum theory of continuous feedback}.
\newblock \bibinfo{journal}{Physical Review A} \bibinfo{volume}{49},
  \bibinfo{pages}{2133}.
\bibitem[{Wiseman and Milburn(2009)}]{wiseman2009quantum}
\bibinfo{author}{Wiseman, H.M.}, \bibinfo{author}{Milburn, G.J.},
  \bibinfo{year}{2009}.
\newblock \bibinfo{title}{Quantum Measurement and Control}.
\newblock \bibinfo{publisher}{Cambridge University Press}.
\bibitem[{Xue et~al.(2016)Xue, Hush and Petersen}]{xue2016feedback}
\bibinfo{author}{Xue, S.}, \bibinfo{author}{Hush, M.R.},
  \bibinfo{author}{Petersen, I.R.}, \bibinfo{year}{2016}.
\newblock \bibinfo{title}{Feedback tracking control of {non-Markovian} quantum
  systems}.
\newblock \bibinfo{journal}{IEEE Transactions on Control Systems Technolog}
  \bibinfo{volume}{25}, \bibinfo{pages}{1552--1563}.
\bibitem[{Yamamoto(2012)}]{yamamoto2012pure}
\bibinfo{author}{Yamamoto, N.}, \bibinfo{year}{2012}.
\newblock \bibinfo{title}{Pure {Gaussian} state generation via dissipation: a
  quantum stochastic differential equation approach}.
\newblock \bibinfo{journal}{Philosophical Transactions of the Royal Society A:
  Mathematical, Physical and Engineering Sciences} \bibinfo{volume}{370},
  \bibinfo{pages}{5324--5337}.
\bibitem[{Zhan et~al.(2022)Zhan, Sun and Tan}]{zhan2022chirality}
\bibinfo{author}{Zhan, H.}, \bibinfo{author}{Sun, L.}, \bibinfo{author}{Tan,
  H.}, \bibinfo{year}{2022}.
\newblock \bibinfo{title}{Chirality-induced one-way quantum steering between
  two waveguide-mediated ferrimagnetic microspheres}.
\newblock \bibinfo{journal}{Physical Review B} \bibinfo{volume}{106},
  \bibinfo{pages}{104432}.
\bibitem[{Zhang et~al.(2020)Zhang, You and Lu}]{ZhangBin}
\bibinfo{author}{Zhang, B.}, \bibinfo{author}{You, S.}, \bibinfo{author}{Lu,
  M.}, \bibinfo{year}{2020}.
\newblock \bibinfo{title}{Enhancement of spontaneous entanglement generation
  via coherent quantum feedback}.
\newblock \bibinfo{journal}{Physical Review A} \bibinfo{volume}{101},
  \bibinfo{pages}{032335}.
\bibitem[{Zhang et~al.(2013)Zhang, Liu, Wu, Jacobs and Nori}]{zhang2013non}
\bibinfo{author}{Zhang, J.}, \bibinfo{author}{Liu, Y.x.}, \bibinfo{author}{Wu,
  R.B.}, \bibinfo{author}{Jacobs, K.}, \bibinfo{author}{Nori, F.},
  \bibinfo{year}{2013}.
\newblock \bibinfo{title}{{Non-Markovian} quantum input-output networks}.
\newblock \bibinfo{journal}{Physical Review A} \bibinfo{volume}{87},
  \bibinfo{pages}{032117}.
\bibitem[{Zhang et~al.(2017)Zhang, Liu, Wu, Jacobs and Nori}]{zhang2017quantum}
\bibinfo{author}{Zhang, J.}, \bibinfo{author}{Liu, Y.x.}, \bibinfo{author}{Wu,
  R.B.}, \bibinfo{author}{Jacobs, K.}, \bibinfo{author}{Nori, F.},
  \bibinfo{year}{2017}.
\newblock \bibinfo{title}{Quantum feedback: theory, experiments, and
  applications}.
\newblock \bibinfo{journal}{Physics Reports} \bibinfo{volume}{679},
  \bibinfo{pages}{1--60}.
\bibitem[{Zhao and Wang(2020)}]{WangZH2020PRA}
\bibinfo{author}{Zhao, W.}, \bibinfo{author}{Wang, Z.}, \bibinfo{year}{2020}.
\newblock \bibinfo{title}{Single-photon scattering and bound states in an
  atom-waveguide system with two or multiple coupling points}.
\newblock \bibinfo{journal}{Physical Review A} \bibinfo{volume}{101},
  \bibinfo{pages}{053855}.
\bibitem[{Zheng and Baranger(2013)}]{zheng2013persistent}
\bibinfo{author}{Zheng, H.}, \bibinfo{author}{Baranger, H.U.},
  \bibinfo{year}{2013}.
\newblock \bibinfo{title}{Persistent quantum beats and long-distance
  entanglement from waveguide-mediated interactions}.
\newblock \bibinfo{journal}{Physical Review Letters} \bibinfo{volume}{110},
  \bibinfo{pages}{113601}.

\end{thebibliography}

\end{document}